\newcommand{\hyp}{\leftarrow}
\newtheorem{lemma}{Lemma}
\newtheorem{definition}{Definition}
\newtheorem{theorem}{Theorem}
\newtheorem{remark}{Remark}
\newtheorem{claim}{Claim}
\begin{document}
\title{
  Descriptive complexity for minimal time of cellular automata}

\author{\IEEEauthorblockN{Étienne Grandjean}
  \IEEEauthorblockA{Normandie Univ, UNICAEN, ENSICAEN, CNRS,\\
    GREYC, 14000 CAEN, France\\
  Email: etienne.grandjean@unicaen.fr }
\and
\IEEEauthorblockN{Théo Grente}
\IEEEauthorblockA{Normandie Univ, UNICAEN, ENSICAEN, CNRS, \\
  GREYC, 14000 CAEN, France\\
Email: theo.grente@unicaen.fr}}

\maketitle

\begin{abstract}
Descriptive complexity may be useful to design programs in a natural declarative way. This is important for parallel computation models such as cellular automata, because designing parallel programs is considered difficult. 
Our paper establishes logical characterizations of the three classical complexity classes that model minimal time, called real-time, of one-dimensional cellular automata according to their canonical variants. 
Our logics are natural restrictions of the existential second-order Horn logic. They correspond to the three ways of deciding a language on a square grid circuit of side $n$ according to the three canonical placements of an input word of length $n$ on the grid.
Our key tool is a normalization method that transforms a formula into an equivalent formula that literally mimics a grid circuit.

\end{abstract}

\section{Introduction} 

\subsection{Descriptive complexity and programming}

\noindent
There are two criteria of interest of a complexity class: it contains a number of ``natural'' problems that are \emph{complete} in the class; it has \emph{machine-independent} ``natural'' characterizations, usually in \emph{logic}, i.e., in so-called \emph{descriptive complexity}. The most famous example is Fagin's Theorem~\cite{Fagin74SIAM, Libkin04FMT}, which characterizes $\mathtt{NP}$ as the class of problems definable in \emph{existential second-order logic} ($\mathtt{ESO}$). Similarly, Immerman-Vardi’s Theorem~\cite{Libkin04FMT,Immerman99DesComp} and Grädel’s Theorem~\cite{Gradel92TCS,Gradel07EATCS} characterize the class $\mathtt{P}$ by \emph{first-order logic plus least fixed-point}, and \emph{second-order logic restricted to Horn formulas}, respectively.

Another interest of descriptive complexity is that it allows to automatically derive from a logical description of a problem a program that solves it. This is particularly interesting for the design of \emph{parallel programs} that is considered a difficult task. Typically, a number of algorithmic problems (product of integers, product of matrices, sorting, etc.) are computable in linear time on \emph{cellular automata} (CA), a local and massively parallel model. For each such problem, the literature presents an ``ad hoc’’ parallel and local algorithmic strategy and gives the program of the final CA in an informal way~\cite{Fischer65JACM,Delorme02SCA}. However, the problems in concern can be defined inductively in a natural way. For instance, the product of two integers in binary notation is simply defined by the classical Horner's method and one may hope to directly derive a parallel program from such an inductive process.

\subsection{Descriptive complexity 
and linear time on cellular automata}

\noindent
The present paper is in some sense the sequel of a recent paper~\cite{BacqueyGO17ICALP} 
(see also~\cite{GrandjeanO16}). First, \cite{BacqueyGO17ICALP} observes that the inductive processes defining the problems in concern (product of integers, product of matrices, sorting, etc.) are ``local'' and are naturally formalized by \emph{Horn formulas}, that is by conjunctions of first-order Horn clauses. Therefore, the computation is nothing else than the classical resolution method on Horn clauses, as in Prolog and Datalog~\cite{Libkin04FMT,Gradel07EATCS},~\cite{AbiteboulHV95}. Moreover, on every concrete problem defined by a Horn formula with $d + 1$ first-order variables, this inductive computation by Horn rules can be geometrically modeled as the displacement of a d-dimensional hyperplan along some fixed line in a space of dimension $d + 1$. 
To capture these inductive behaviors, \cite{BacqueyGO17ICALP} defines a logic denoted $\mathtt{monot}$-$\mathtt{ESO}$-$\mathtt{HORN}^d(\forall^{d+1}, \mathtt{arity}^{d+1})$ obtained from the logic $\mathtt{ESO}$-$\mathtt{HORN}$ tailored by Grädel~\cite{Gradel07EATCS} to characterize $\mathtt{P}$, by restricting both the number of first-order variables and the arity of second-order predicate symbols. 
Besides, it includes an additional restriction – the ``monotonicity condition’’ – that reflects the geometrical consideration above-mentioned. \cite{BacqueyGO17ICALP} proves that this logic exactly characterizes the \emph{linear time} complexity class of cellular automata: more precisely, for each integer $d\ge 1$, a set $L$ of $d$-dimensional pictures can be decided in linear time on a $d$-dimensional CA – written $L \in \mathtt{DLIN}^d_{\mathtt{CA}}$ – if and only if it can be expressed in $\mathtt{monot}$-$\mathtt{ESO}$-$\mathtt{HORN}^d(\forall^{d+1}, \mathtt{arity}^{d+1})$. For short: 
\begin{center}
$\mathtt{DLIN}^d_{\mathtt{CA}} = \mathtt{monot}$-$\mathtt{ESO}$-$\mathtt{HORN}^d(\forall^{d+1}, \mathtt{arity}^{d+1})$. 
\end{center}
To summarize, expressing a concrete problem in this logic -- which seems an \emph{easy} task in practice and also is a \emph{necessary and sufficient} condition according to the above equality -- \emph{guarantees} that this problem can be solved in \emph{linear time} on a CA; moreover, the Horn formula that defines the problem can be \emph{automatically} translated into a program of CA that computes it in \emph{linear time}.

\subsection{
Logics for minimal time of cellular automata?}

\noindent
At this point, two natural questions arise:
\begin{enumerate}
\item Besides linear time, a robust and very expressive complexity class, what are the other \emph{significant} and \emph{robust} complexity classes of CA?
\item Can we exhibit characterizations of those complexity classes in some naturally (syntactically) defined logics so that any definition of a problem in such a logic can be \emph{automatically} translated into a program of the complexity in concern?
\end{enumerate}

Besides \emph{linear time}, the main complexity notion well-studied for a long time in the literature of CA is \emph{real-time}, i.e., \emph{minimal time}~\cite{Cole69rtIA,Smith72JCSS,Dyer80IC}. A cellular automaton is said to run in \emph{real-time} if it stops, i.e., gives the answer yes or no, at the \emph{minimal} instant when the elapsed time is sufficient for the output cell (the cell that gives the answer) to have received \emph{each} letter of the input. Real-time complexity appears as a \emph{sensitive/fragile} notion 
and one generally thinks it is so for CA of dimension 2 or more~\cite{Terrier99vNvsM},~\cite{Anael17DLT}. However, maybe surprisingly, one knows that real-time complexity is a \emph{robust} notion for \emph{one-dimensional} CA in the following sense: according to the many \emph{natural} variants of the definition of a one-dimensional CA, which essentially rest on the choice of the \emph{neighborhood} of the CA and the \emph{parallel} or \emph{sequential} presentation of its input word, \emph{exactly three} real-time classes of one-dimensional CA\footnote{By default, a CA has a \emph{two-way} communication and a \emph{parallel input} mode. Any CA (resp. \emph{one-way} CA or OCA) with \emph{sequential input}  mode is also called an \emph{iterative array} or $\mathtt{IA}$ (resp. OIA).} have been proved to be distinct~\cite{Cole69rtIA,ChoffCulik84AI,IbarraJiang87oneWayCA,Poupet05STACS}:
\begin{enumerate}
\item 
$\mathtt{RealTime}_{\mathtt{CA}} = \mathtt{RealTime}_{\mathtt{OIA}}$;
\item 
$\mathtt{Trellis} = \mathtt{RealTime}_{\mathtt{OCA}}$;
 \item 
$\mathtt{RealTime}_{\mathtt{IA}}$.
\end{enumerate}
The final and decisive step to establish this classification is a nice dichotomy of~\cite{Poupet05STACS} on \emph{admissible} neighborhoods\footnote{The \emph{neighborhood} of a CA is the finite set of integers $\mathcal{N}$ such that the state of any cell $x$ at any non-initial instant $t$ is determined by the states of the cells $x+d$, for $d\in \mathcal{N}$, at instant $t-1$. A neighborhood is \emph{admissible} with respect to a fixed output cell   (in general the first or the last cell) if it allows to communicate each bit of the input to the output cell.}
of CA, which can be rephrased as follows: for each neighborhood $\mathcal{N}$ admissible with respect to the first cell as output cell, the real-time complexity class of one-dimensional CA with parallel input mode and neighborhood~$\mathcal{N}$,
\begin{itemize}
\item either is equal to the real-time class for the neighborhood $\{-1,0,1\}$, 
i.e., $\mathtt{RealTime}_{\mathtt{CA}}$ (class 1 above),
\item or is equal to the real-time class for the neighborhood $\{-1,0\}$, i.e., $\mathtt{Trellis}$ (class 2 above).
\end{itemize}

\noindent
Further, it is \emph{surprising} to notice that 
 \begin{itemize}
\item 
the mutual relations between those three real-time classes are  \emph{wholly elucidated}: classes $\mathtt{Trellis}$ and $\mathtt{RealTime}_{\mathtt{IA}}$ are mutually \emph{incomparable for inclusion} whereas we have the strict inclusion
$\mathtt{Trellis} \cup \mathtt{RealTime}_{\mathtt{IA}} \subsetneqq \mathtt{RealTime}_{\mathtt{CA}}$~\cite{Cole69rtIA,CulikGS84SysTA,Terrier96uvu},

\item while it is \emph{unknown} whether the trivial inclusion 
$\mathtt{RealTime}_{\mathtt{CA}}\subseteq \mathtt{DLIN}^1_{\mathtt{CA}}$ is strict; 
worse, even whether the inclusion 
$\mathtt{RealTime}_{\mathtt{CA}}\subseteq \mathtt{LinSpace}$ is strict is an open problem!

\end{itemize}
\subsection{
Logics and grid circuits for real-time classes}\label{results}

\noindent
Each of the three real-time classes 1-3 is \emph{robust}, i.e, is not modified for many \emph{variants} of CA (change of neighborhoods, etc.) and has two or three \emph{quite different} equivalent definitions. For example, $\mathtt{RealTime}_{\mathtt{CA}}$ is equal to the \emph{linear time} class of one-way CA with parallel input mode. 
Similarly, \cite{Okhotin04rairo} has proved the surprising result that $\mathtt{Trellis}$ \emph{is} the class of languages generated by \emph{linear conjunctive grammars} (see also~\cite{Okhotin13CSR}) and \cite{Terrier03b} has established that a language $L$ is in $\mathtt{RealTime}_{\mathtt{IA}}$ if and only if its \emph{reverse} language $L^R$ is recognized in \emph{real time} by an \emph{alternating automaton with one counter}.

Logics have two nice and complementary properties: 
they are \emph{flexible}, hence \emph{expressive}; they have \emph{normal forms}, hence can be tailored for \emph{efficient programming}.
The main idea that led us to conceive the  \emph{different} logics for real-time classes can be summarized by the following simple question: what are the  \emph{different} ways to decide a language on a \emph{square grid circuit}? For any integer $n\ge 1$, let $C_n$ be the grid circuit $n\times n$ where the state $q\in Q$ (for finite~$Q$) of any site $(i,j)$, $1\le i,j \le n$, is determined by the states of its ``predecessors'' $(i-1,j)$, if it exists ($i>1$), and $(i,j-1)$, if it exists ($j>1$), so that the output cell is the site $(n,n)$. Up to symmetries, there are three canonical ways\footnote{To be complete, one should say that there is a fourth arrangement: place the input word on a \emph{side containing} the output cell. In this case, the grid circuit behaves like a \emph{finite automaton} (CA of dimension zero)...} to arrange an input word $w=w_1\ldots w_n$ of length $n$ on the grid $C_n$, see Figure~\ref{fig_grid}:
\begin{enumerate}
\item 
$\mathtt{GRID}_1$: place the input on any \emph{side} (or, equivalently, on both sides) that does (do) not contain the output cell; 
\item 
$\mathtt{GRID}_2$: place the input on the \emph{diagonal opposite} to the output cell; 
\item 
$\mathtt{GRID}_3$: place the input on the \emph{diagonal that contains} the output cell.   
\end{enumerate}


\begin{figure}
\centering
\begin{tikzpicture}[scale=0.5,every node/.style={scale=0.7}]

\foreach \x in {1,2,...,4}{
\foreach \y in {1,2,...,5}{
\draw[-to,shorten >=-1pt,thick](\x,\y) to (\x+0.7,\y);
\draw[-to,shorten >=-1pt,thick](\y,\x) to (\y,\x+0.7);
}}
\foreach \x in {1,2,...,5}{
\foreach \y in {1,2,...,5}{
\filldraw[fill=white] (\x,\y) circle (0.3cm) ;
}}
\draw (5,5) circle (0.35cm) ;
\foreach \x in {1,2,...,5}{
\draw(1,\x) node{$w_{\x}$};
\draw(\x,1) node{$w_{\x}$};
}
\draw(3,0) node{$\mathtt{GRID}_1$};


\foreach \x in {1,2,...,4}{
\foreach \y in {1,2,...,5}{
\draw[-to,shorten >=-1pt,thick](6+\x,\y) to (6+\x+0.7,\y);
\draw[-to,shorten >=-1pt,thick](6+\y,\x) to (6+\y,\x+0.7);
}}
\foreach \x in {1,2,...,5}{
\foreach \y in {1,2,...,5}{
\filldraw[fill=white] (6+\x,\y) circle (0.3cm) ;
}}
\draw (6+5,5) circle (0.35cm) ;
\foreach \x in {1,2,...,5}{
\draw(6+\x,6-\x) node{$w_{\x}$};
}
\draw(9,0) node{$\mathtt{GRID}_2$};


\foreach \x in {1,2,...,4}{
\foreach \y in {1,2,...,5}{
\draw[-to,shorten >=-1pt, thick](12+\x,\y) to (12+\x+0.7,\y);
\draw[-to,shorten >=-1pt, thick](12+\y,\x) to (12+\y,\x+0.7);
}}
\foreach \x in {1,2,...,5}{
\foreach \y in {1,2,...,5}{
\filldraw[fill=white] (12+\x,\y) circle (0.3cm) ;
}}
\draw (12+5,5) circle (0.35cm) ;
\foreach \x in {1,2,...,5}{
\draw(12+\x,\x) node{$w_{\x}$};
}
\draw(15,0) node{$\mathtt{GRID}_3$};
\end{tikzpicture}
\caption{\label{fig_grid}The three ways to arrange the input on the grid}
\end{figure}
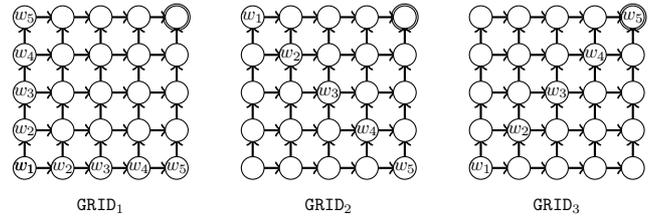


A simple (reversible) deformation transforms a grid circuit of $\mathtt{GRID}_i$, $i=1,2,3$, into a time-space diagram of a CA of the real-time class $i$ in concern (recall: 1:~$\mathtt{RealTime}_{\mathtt{CA}}$; 2: $\mathtt{Trellis}$; 3: $\mathtt{RealTime}_{\mathtt{IA}}$), and conversely.
More precisely, to characterize the three real-time classes, we define three sub-logics of the Horn logic that characterizes linear time of one-dimensional CA ($\mathtt{DLIN}^1_{\mathtt{CA}} = \mathtt{monot}$-$\mathtt{ESO}$-$\mathtt{HORN}^1(\forall^{2}, \mathtt{arity}^{2})$), called respectively $\mathtt{pred}$-$\mathtt{ESO}$-$\mathtt{HORN}$, $\mathtt{incl}$-$\mathtt{ESO}$-$\mathtt{HORN}$
and $\mathtt{pred}$-$\mathtt{dio}$-$\mathtt{ESO}$-$\mathtt{HORN}$ (defined in the next section), and we prove the following equalities:
\begin{enumerate}
\item $\mathtt{pred}$-$\mathtt{ESO}$-$\mathtt{HORN}=\mathtt{GRID}_1=\mathtt{RealTime}_{\mathtt{CA}}$
\item $\mathtt{incl}$-$\mathtt{ESO}$-$\mathtt{HORN}=\mathtt{GRID}_2=\mathtt{Trellis}$
\item $\mathtt{pred}$-$\mathtt{dio}$-$\mathtt{ESO}$-$\mathtt{HORN}=\mathtt{GRID}_3=\mathtt{RealTime}_{\mathtt{IA}}$
\end{enumerate}

To establish the double nature of our three logics and deduce the previous equalities 1-3, we present each logic in two forms:
\begin{itemize}
\item we define it the \emph{largest possible}, showing the extent of its \emph{expressiveness};
\item we prove for it the \emph{most restricted normal form}.
\end{itemize} 
In each case, a formula in normal form can be \emph{translated literally} into a grid program, which is essentially a CA of the \emph{real-time} complexity class in concern.

\emph{Structure of the paper:} In preliminaries, we recall the classical definitions of one-dimensional cellular automata and of their real-time classes and define our three logics with an example of a problem naturally expressed  in such a logic. Section~\ref{sec:normal} establishes how each of our logics can be normalized. Using these normal forms we show in Section~\ref{sec:Logic_CA} that our three logics exactly characterize the three real-time complexity classes and also -- for inclusive logic -- the class of linear conjunctive languages of Okhotin~\cite{Okhotin04rairo}. Section~\ref{sec:conc} gives conclusive remarks.

\section{Preliminaries}

\subsection{Cellular automata and real-time complexity}

A cellular automaton in one dimension is a tape of cells (each cell is a finite automaton) indexed by $\mathbb{Z}$. Each cell takes a value from a finite set of states and the cells evolve synchronously along a discrete time scale. The evolution of the cell $c$ is done according to a transition function which takes as input the state of the cell itself and the states of its neighbors at the previous time step and outputs a new state for the cell. 
\smallskip
\begin{definition}[cellular automaton]\label{Def_CA}
A \emph{cellular automaton} is defined by a 5-tuple $(Q,\Sigma,Q_{accept},\mathcal{N},\delta)$ where $Q$ is a finite \emph{set of states}, $\Sigma \subset Q$ is the input alphabet, $Q_{accept} \subset Q$ is the set of \emph{accepting states}, the \emph{neighborhood} $\mathcal{N}$ is a finite ordered subset of $\mathbb{Z}$ and $\delta$ is the \emph{transition function} from $Q^{|\mathcal{N}|}$ to $Q$. The state of the cell $c$ at time $t$ is denoted by $\langle c,t \rangle$. The state of the cell $c$ at time $t>1$ is defined by the transition function: $ \langle c,t \rangle = \delta (\langle c+v,t-1 \rangle: v \in \mathcal{N})$.
\end{definition}

\smallskip
\begin{definition}[real time language acceptance]\label{Def_RealTime}
  Cellular automata can act as \emph{language acceptors}. In this case cellular automata work on a finite set of cells indexed by $[1,n]$ where $n$ is the size of the input word $w=w_1\dots w_n$, one of this cells is also chosen as the \emph{output cell} (usually one of the border cells). A word $w$ is said to be \emph{accepted} by an automaton in real-time if its output cell enters an accepting state immediately after getting all the information from $w$. The \emph{language accepted} by a cellular automaton $\mathcal{A}$ in real-time (denoted by $L(\mathcal{A})$) is the set of all its accepted words in real-time. 
\end{definition}

\smallskip
\noindent\emph{\bf Convention }(permanent state, quiescent state).\\
All the cells of index outside $[1,n]$ are in the \emph{permanent state} $\sharp$. Without information from the input a cell of index $[1,n]$ is in the \emph{quiescent state} $\lambda$. \\

The real-time computation power of a CA only depends on its \emph{communication scheme}. That is fully determined by the following three specifications: the way the input is fed to the automaton, the way the cells communicate (depending on the neighborhood) and the output cell. The input is usually fed to the automaton in a parallel way: the $i^{th}$ bit of the input is given to the $i^{th}$ cell at the start of the computation. The input can also be fed in a sequential way: the $i^{th}$ bit of the input is given to the first cell at time $i$ for which we add a specific transition function $\delta_{input}$. Usually the output cell is the first cell for two-way communications and the last one for one-way communication.

\smallskip
CA have their input fed in a parallel way and the cells communicate in two-way mode ($\mathcal{N}=\{-1,0,1\}$). One-way cellular automata (OCA) and iterative arrays (IA) are two natural variants of CA. OCA are narrowed on the way the cells communicate: the information is only transmitted from left to right ($\mathcal{N}=\{-1,0\}$). The input mode of IA is no more parallel but sequential.

\begin{definition}[$\mathtt{RealTime_{CA}}$]\label{Def_RealTimeCA}
The class $\mathtt{RealTime_{CA}}$ is the set of languages accepted by real-time CA with a parallel input, the neighborhood $\mathcal{N}=\{-1,0,1\}$ and the first cell as the output cell. 
\end{definition}
The class $\mathtt{RealTime_{CA}}$ is equivalent to $\mathtt{RealTime_{OIA}}$, the set of languages accepted by one-way IA with sequential input running in real-time with neighborhood $\mathcal{N}=\{-1,0\}$ and the last cell as the output cell.

\smallskip

\begin{definition}[$\mathtt{Trellis}$]\label{Def_RealTimeTrellis}
The class $\mathtt{Trellis}$ is the set of languages accepted by trellis automata or equivalently by OCA running in real-time with a parallel input, the neighborhood $\mathcal{N}=\{-1,0\}$ and the last cell as the output cell.
\end{definition}

\smallskip
\begin{definition}[$\mathtt{RealTime_{IA}}$]\label{Def_RealTimeIA}
The class $\mathtt{RealTime_{IA}}$ is the set of languages accepted by IA running in real-time with a sequential input, the neighborhood $\mathcal{N}=\{-1,0,1\}$ and the first cell as the output cell.
\end{definition}

The space-time diagrams of these real-time classes are depicted in Figure~\ref{fig_CA}.


\begin{figure}
\centering
\begin{tikzpicture}[scale=0.4,every node/.style={scale=0.7}]


\foreach \x in {1,2,...,4}{
\foreach \y in {\x,...,4}{
\draw[-to,shorten >=-1pt,thick](2+\x,7+5-\y) to (2+\x,7+5-\y+0.7);
}}

\foreach \x in {1,2,...,3}{
\foreach \y in {\x,...,3}{
\draw[-to,shorten >=-1pt,thick](2+\x,7+4-\y) to (2+\x+0.8,7+4-\y+0.8);
}}

\foreach \x in {1,2,...,4}{
\foreach \y in {\x,...,4}{
\draw[-to,shorten >=-1pt,thick](2+\x+1,7+5-\y) to (2+\x+0.2,7+5-\y+0.8);
}}

\foreach \x in {1,2,...,5}{
\foreach \y in {\x,...,5}{
\filldraw[fill=white] (2+\x,7+6-\y) circle (0.3cm) ;
}}

\filldraw[fill=gray](2+1,7+5) circle (0.3cm) ;
\foreach \x in {1,2,...,5}{
\draw[->](2+\x,7+0.4) to (2+\x,7+0.7);
\draw(2+\x,7+0) node{$w_{\x}$};
}
\draw(2+3,7-1) node{$\mathtt{RealTime_{CA}}$};

\draw[implies-implies,double equal sign distance] (2+4,10) -- (2+7,10);


\foreach \x in {1,2,...,5}{
  \foreach \y in {1,2,...,4}{
    \draw[-to,shorten >=-1pt,thick](3+7+\x,5+\x+\y) to (3+7+\x,5+\x+\y+0.7);
  }}
\foreach \x in {1,2,...,4}{
  \foreach \y in {1,2,...,5}{
    \draw[-to,shorten >=-1pt,thick](3+7+\x,5+\x+\y) to (3+7+\x+0.75,5+\x+\y+0.75);
  }}

\foreach \x in {1,2,...,5}{
  \foreach \y in {1,2,...,5}{
    \filldraw[fill=white] (3+7+\x,5+\x+\y) circle (0.3cm) ;
}}

\filldraw[fill=gray] (3+7+5,5+10) circle (0.3cm) ;
\foreach \x in {1,2,...,5}{
\draw[->](3+7+0.4,6+\x) to (3+7.7,6+\x);
\draw(3+7,6+\x) node{$w_{\x}$};
}
\draw(3+10,7-1) node{$\mathtt{RealTime_{OIA}}$};


\foreach \y in {1,2,...,4}{
\foreach \x in {\y,...,4}{
  \draw[-to,shorten >=-1pt,thick](\x,\y) to (\x+0.75,\y+0.75);
}}


\foreach \y in {1,2,...,4}{
\foreach \x in {\y,...,4}{
\draw[-to,shorten >=-1pt,thick](\x+1,\y) to (\x+1,\y+0.7);
}}

\foreach \y in {1,2,...,5}{
\foreach \x in {\y,...,5}{
\filldraw[fill=white] (\x,\y) circle (0.3cm) ;
}}

\filldraw[fill=gray](5,5) circle (0.3cm) ;
\foreach \x in {1,2,...,5}{
\draw[->](\x,0.4) to (\x,0.7);
\draw(\x,0) node{$w_{\x}$};
}
\draw(3,-1) node{$\mathtt{RealTime_{OCA}}$};

\draw[implies-implies,double equal sign distance] (6,3) -- (8,3);

\draw(0,-1.5) rectangle (13,5.5);
\draw(0,5.5) rectangle (18,16);
\draw(13,-1.5) rectangle (18,5.5);

\foreach \y in {1,2,...,4}{
\foreach \x in {\y,...,4}{
\draw[-to,shorten >=-1pt,thick](7+0.5-0.5*\y+\x,\y) to (8+\x-0.5*\y-0.2,\y+0.75);
}}

\foreach \y in {1,2,...,4}{
\foreach \x in {\y,...,4}{
\draw[-to,shorten >=-1pt,thick](7+\x+1.5-0.5*\y,\y) to (8-0.5*\y+\x+0.2,\y+0.75);
}}

\foreach \y in {1,2,...,5}{
\foreach \x in {\y,...,5}{
\filldraw[fill=white] (7.5+\x-0.5*\y,\y) circle (0.3cm) ;
}}

\filldraw[fill=gray](7+3,5) circle (0.3cm) ;
\foreach \x in {1,2,...,5}{
\draw[->](7+\x,0.4) to (7+\x,0.7);
\draw(7+\x,0) node{$w_{\x}$};
}
\draw(7+3,-1) node{$\mathtt{Trellis}$};


\foreach \y in {1,2,...,4}{
\draw[-to,shorten >=-1pt, thick](14+1,-1+\y) to (14+1,-1+\y+0.7);
}

\foreach \y in {1,2}{
\draw[-to,shorten >=-1pt, thick](14+2,-1+\y+1) to (14+2,-1+\y+1+0.7);
}

\foreach \y in {1,2}{
\draw[-to,shorten >=-1pt, thick](14+\y,-1+\y) to (14+\y+0.75,-1+\y+0.75);
\draw[-to,shorten >=-1pt, thick](14+4-\y,2+-1+\y) to (14+3.25-\y,-1+\y+2+0.75);
}

\foreach \y in {1,2}{
\draw[-to,shorten >=-1pt, thick](14+1,-1+\y+1) to (14+1+0.75,-1+\y+1+0.75);
\draw[-to,shorten >=-1pt, thick](14+2,-1+\y+1) to (14+1.25,-1+\y+1+0.75);
}

\foreach \y in {1,2,...,5}{
\filldraw[fill=white] (14+1,-1+\y) circle (0.3cm) ;
}
\foreach \y in {1,2,...,3}{
\filldraw[fill=white] (14+2,-1+\y+1) circle (0.3cm) ;
}
\filldraw[fill=white] (14+3,-1+3) circle (0.3cm) ;

\filldraw[fill=gray](14+1,-1+5) circle (0.3cm) ;
\foreach \x in {1,2,...,5}{
\draw[->](14+0.4,-1+\x) to (14.7,-1+\x);
\draw(14,-1+\x) node{$w_{\x}$};
}
\draw(15,-1) node{$\mathtt{RealTime_{IA}}$};
\end{tikzpicture}
\caption{\label{fig_CA}The space-time diagram of the three natural real-time classes}
\end{figure}
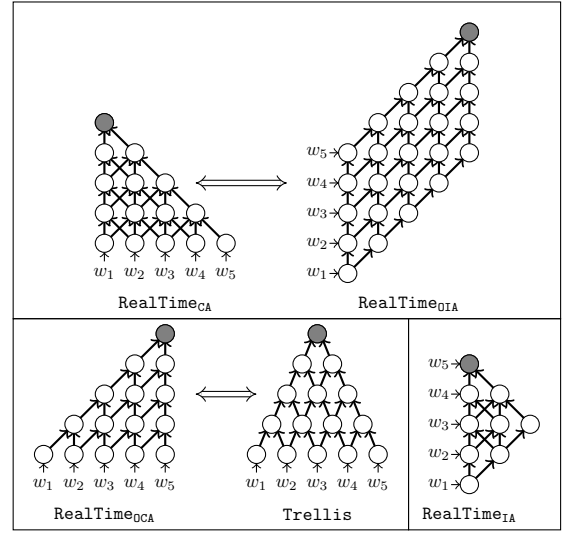

\subsection{Our logics}

\noindent
The ``local'' nature of our logics requires that the underlying structure encoding an input word $w=w_1\ldots w_n$ on its index interval $[1,n]=\{1,\ldots,n\}$ only uses the \emph{successor} and \emph{predecessor} functions and the monadic predicates $\mathtt{min}$ and $\mathtt{max}$ as its  \emph{only} arithmetic functions/predicates:
\begin{definition}[structure encoding a word]
Each nonempty word $w=w_1\ldots w_n\in\Sigma^n$ on a fixed finite alphabet $\Sigma$ is represented by the first-order structure 
\[
\langle w \rangle :=
([1,n] ; (Q_s)_{s\in\Sigma}, \mathtt{min},\mathtt{max}, \mathtt{suc},
\mathtt{pred})
\]
of domain $[1,n]$, monadic predicates $Q_s$, $s\in\Sigma$, $\mathtt{min}$ and $\mathtt{max}$
such that
$Q_s(i)\iff w_i=s$,  
$\mathtt{min}(i) \iff i=1$, and $\mathtt{max}(i) \iff i=n$, 
and unary functions $\mathtt{suc}$ and $\mathtt{pred}$ such that
$\mathtt{suc}(i)=i+1$ for $i<n$ and $\mathtt{suc}(n)=n$,
 $\mathtt{pred}(i)=i-1$ for $i>1$ and $\mathtt{pred}(1)=1$. 
 Let $\mathcal{S}_{\Sigma}$ denote the signature 
 $\{(Q_s)_{s\in\Sigma}, \mathtt{min},\mathtt{max}, \mathtt{suc}, \mathtt{pred}\}$ 
of structure $\langle w \rangle$. 
The monadic predicates $Q_s$, $s\in\Sigma$, $\mathtt{min}$, and $\mathtt{max}$ 
of $\mathcal{S}_{\Sigma}$ 
are called \emph{input predicates}.
\end{definition}
Let $x+k$ and $x-k$ abbreviate the terms $\mathtt{suc}^k(x)$ and $\mathtt{pred}^k(x)$, for a fixed integer $k\ge 0$.

\medskip
Let us now define two of our logics:

\begin{definition}[predecessor logics] \label{def:predlogics} A \emph{predecessor Horn formula} (resp. \emph{predecessor Horn formula with diagonal input-output}) is a formula of the form $\Phi = \exists\mathbf{R}\forall x \forall y \psi(x,y)$ where $\psi$ is a conjunction of Horn clauses on the variables $x,y$,\\
-- of signature $\mathcal{S}_{\Sigma}\cup \mathbf{R}$ (resp. $\mathcal{S}_{\Sigma}\cup \mathbf{R} \cup \{=\}$) where $\mathbf{R}$ is a set of binary predicates called \emph{computation predicates},\\
-- of the form $\delta_1\land\ldots\land\delta_r\to\delta_0$ where the conclusion $\delta_0$ is either a \emph{computation atom} $R(x,y)$ with $R\in\mathbf{R}$, or $\bot$ (\emph{False}) and each hypothesis $\delta_i$ is either an \emph{input literal/conjunction} 
of one of the forms: 
\begin{itemize}
\item $Q_s(x-a)$, $Q_s(y-a)$ (resp. $Q_s(x-a)\land x=y$), for $s\in\Sigma$ and an integer $a\ge 0$,
\item  $U(x-a)$, $\lnot U(x-a)$, $U(y-a)$ or $\lnot U(y-a)$, for $U\in \{\mathtt{min},\mathtt{max}\}$ and an integer $a\ge 0$, 
\end{itemize}

or a \emph{computation atom} of the form $S(x-a,y-b)$ or $S(y-b,x-a)$, for $S\in\mathbf{R}$ and some integers $a,b\ge 0$. 
 \end{definition}
Let $\mathtt{pred}$-$\mathtt{ESO}$-$\mathtt{HORN}$ 
(resp. $\mathtt{pred}$-$\mathtt{dio}$-$\mathtt{ESO}$-$\mathtt{HORN}$)
denote the class of \emph{predecessor Horn formulas} (resp. \emph{predecessor Horn formulas with diagonal input-output}) and, by abuse of notation, the class of languages they define. 

\smallskip
The formulas of the ``predecessor'' logics defined above use the \emph{predecessor} function but \emph{not} the \emph{successor} function: both logics inductively define problems in \emph{increasing} both coordinates $x$ and $y$. The inductive principle of our last logic is seemingly different: it lies on \emph{inclusions} of intervals $[x,y]$.

\smallskip
\begin{definition}[inclusion logic]\label{def:incllogic}
An \emph{inclusion Horn formula} is a formula of the form $\Phi = \exists\mathbf{R}\forall x \forall y \psi(x,y)$ where $\psi$ is a conjunction of Horn clauses 
of signature $\mathcal{S}_{\Sigma}\cup \mathbf{R}\cup \{=,\le,<\} $ where $\mathbf{R}$ is a set of binary predicates called \emph{computation predicates},
of the form $x\le y\land \delta_1\land\ldots\land\delta_r\to\delta_0$ where the conclusion $\delta_0$ is 
either a \emph{computation atom} $R(x,y)$ with $R\in\mathbf{R}$, or the atom $\bot$ (\emph{False}), and each hypothesis $\delta_i$~is 
\begin{enumerate}
\item either an \emph{input literal} of the form~\footnote{Without loss of generality, assume that there is no negation over a predicate $Q_s$.} 
$U(x+a)$, $\lnot U(x+a)$, $U(y+a)$ or $\lnot U(y+a)$, for $U\in \{(Q_s)_{s\in\Sigma}, \mathtt{min},\mathtt{max}\}$ and an integer $a\in\mathbb{Z}$,
\item or the (in)equality $x=y$ or $x<y$~\footnote{
Then, the hypothesis $x\le y$ is redundant.},
\item\label{interval} or a conjunction of the form
\begin{center}
 $
 S(x+a,y-b)\land x+a\le y-b
$
\end{center}
for a \emph{computation atom} $S(x+a,y-b)$, with $S\in\mathbf{R}$ and some integers $a,b\ge 0$.
 \end{enumerate}
 \end{definition}

\noindent
Let $\mathtt{incl}$-$\mathtt{ESO}$-$\mathtt{HORN}$ 
denote the class of \emph{inclusion Horn formulas} and, also, the class of languages they define. 

\medskip
Note that the ``inclusion'' meaning of logic $\mathtt{incl}$-$\mathtt{ESO}$-$\mathtt{HORN}$ is given by hypotheses $x\le y$ and~$x+a\le y-b$.
It means that the inductive computation of each value $R(x,y)$, for $x\le y$ and $R\in\mathbf{R}$, only use values of the form \linebreak$S(x+a,y-b)$, for $S\in\mathbf{R}$ and an \emph{included} interval $[x+a,y-b]\subseteq [x,y]$.

\smallskip
\noindent
\emph{Notation:} We will freely use the intuitive abbreviations $x~>~a$, $x=a$, for a constant integer $a\ge 1$, and $x \le n-a$, $x<n-a$, $x=n-a$, for a constant integer $a\ge 0$, and similarly for $y$. 
For example, $x>3$ is written in place of $\lnot \mathtt{min}(x-2)$
and $y\le n-2$ is written in place of $\lnot \mathtt{max}(y+1)$. 

\smallskip
\noindent
\emph{Technical remarks about our logics:} Without loss of generality, we can suppose that each clause having a hypothesis atom of the form $S(x-a,y-b)$ or $S(y-b,x-a)$, for $a,b\ge 0$, has \emph{also} the hypotheses 
$x>a$ (if $a>0$) and $y>b$ (if $b>0$). The same for each hypothesis atom of the form $Q_s(x-a)$ or $Q_s(y-b)$, for $a,b>0$. Similarly, we assume that each clause with a hypothesis of the form $Q_s(x+a)$ (resp. $Q_s(y+a)$), with $a>0$, \emph{also} contains the hypothesis $x\le n-a$ (resp. $y\le n-a$). Similarly, for each atom $S(x+a,y-b)$, for $a,b\ge 0$.

\smallskip
\noindent
\emph{Comparing the input presentation in our logics:} 
The presentation of the input is \emph{more restrictive} in Definition~\ref{def:predlogics} of predecessor logics than in that of inclusion logic (Definition~\ref{def:incllogic}) because we have forbidden the use of the successor function for uniformity/aesthetics.
However, allowing the same \emph{largest} set of input literals $(\lnot)U(x+a)$, $(\lnot)U(y+a)$, 
for $U\in \{(Q_s)_{s\in\Sigma}, \mathtt{min},\mathtt{max}\}$ and $a\in\mathbb{Z}$, 
\emph{does not modify} the expressive power of predecessor logics: steps~5 and~6 of the normalization of inclusive logic in Section~\ref{sec:normal} can be easily adapted to predecessor logics.

\subsection*{Using our logics for programming: an example}

\noindent
We now give an example  of a natural problem expressed in one of our logics. 
\smallskip
The language $\mathtt{notBordered}$ is the set of all words $w\in\Sigma^+$ with no proper prefix equal to a proper suffix:\\
$\mathtt{notBordered}=\{w \mid \forall w'$ such that $w=w'u=vw'$,  $w'= \epsilon$ or $w'=w\}$.\\
This language can be defined by $\Phi_{\mathtt{notBordered}} \coloneqq \exists \mathtt{Border}\forall x \forall y \psi$ of $\mathtt{pred}$-$\mathtt{ESO}$-$\mathtt{HORN}$ where $\psi$ is the conjunction of the following clauses where $\mathtt{Border}(x,y)$ means $w_1\dots w_x=w_{y-x+1}\dots w_y$ (see Figure~\ref{fig_notBordered}):
\begin{enumerate}
\item $\mathtt{min}(x) \land \lnot\mathtt{min}(y) \land Q_s(x) \land Q_s(y) \to \mathtt{Border}(x,y)$
\item $\lnot\mathtt{min}(x) \land \lnot\mathtt{min}(y) \land \mathtt{Border}(x-1,y-1) \land Q_s(x) \land Q_s(y) \to \mathtt{Border}(x,y)$, for all $s\in\Sigma$;
\item $\mathtt{max}(y) \land \mathtt{Border}(x,y) \to \bot$.
\end{enumerate}
If $\mathtt{Border}(x,y)$ is true when $y$ is maximal, then $w_1\dots w_x=w_{n-x+1}\dots w_n$ and therefore $w \notin \mathtt{notBordered}$.

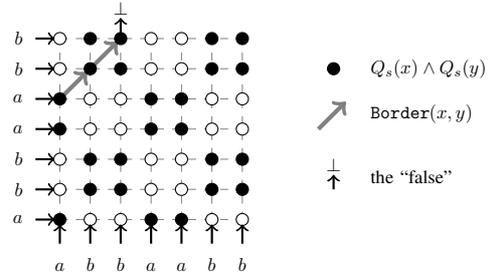
\begin{figure}[h]
\centering
\begin{tikzpicture}[scale=0.4,every node/.style={scale=0.7}]
\draw [gray,dashed, very thin] (0,0) grid (6,6);
\draw[-to,shorten >=-1pt,thick](-0.8,0) to (-0.25,0);
\draw[left](-1,0) node{$ a$};
\draw[-to,shorten >=-1pt,thick](-0.8,1) to (-0.25,1);
\draw[left](-1,1) node{$ b$};
\draw[-to,shorten >=-1pt,thick](-0.8,2) to (-0.25,2);
\draw[left](-1,2) node{$ b$};
\draw[-to,shorten >=-1pt,thick](-0.8,3) to (-0.25,3);
\draw[left](-1,3) node{$ a$};
\draw[-to,shorten >=-1pt,thick](-0.8,4) to (-0.25,4);
\draw[left](-1,4) node{$ a$};
\draw[-to,shorten >=-1pt,thick](-0.8,5) to (-0.25,5);
\draw[left](-1,5) node{$ b$};
\draw[-to,shorten >=-1pt,thick](-0.8,6) to (-0.25,6);
\draw[left](-1,6) node{$ b$};
\draw[-to,shorten >=-1pt,thick](0,-0.8) to (0,-0.25);
\draw[below](0,-1) node{$\strut a$};
 \filldraw[fill=black] (0,0) circle (0.2cm) ;
 \filldraw[fill=white] (1,0) circle (0.2cm) ;
 \filldraw[fill=white] (2,0) circle (0.2cm) ;
 \filldraw[fill=black] (3,0) circle (0.2cm) ;
 \filldraw[fill=black] (4,0) circle (0.2cm) ;
 \filldraw[fill=white] (5,0) circle (0.2cm) ;
 \filldraw[fill=white] (6,0) circle (0.2cm) ;
\draw[-to,shorten >=-1pt,thick](1,-0.8) to (1,-0.25);
\draw[below](1,-1) node{$\strut b$};
 \filldraw[fill=white] (0,1) circle (0.2cm) ;
 \filldraw[fill=black] (1,1) circle (0.2cm) ;
 \filldraw[fill=black] (2,1) circle (0.2cm) ;
 \filldraw[fill=white] (3,1) circle (0.2cm) ;
 \filldraw[fill=white] (4,1) circle (0.2cm) ;
 \filldraw[fill=black] (5,1) circle (0.2cm) ;
 \filldraw[fill=black] (6,1) circle (0.2cm) ;
\draw[-to,shorten >=-1pt,thick](2,-0.8) to (2,-0.25);
\draw[below](2,-1) node{$\strut b$};
 \filldraw[fill=white] (0,2) circle (0.2cm) ;
 \filldraw[fill=black] (1,2) circle (0.2cm) ;
 \filldraw[fill=black] (2,2) circle (0.2cm) ;
 \filldraw[fill=white] (3,2) circle (0.2cm) ;
 \filldraw[fill=white] (4,2) circle (0.2cm) ;
 \filldraw[fill=black] (5,2) circle (0.2cm) ;
 \filldraw[fill=black] (6,2) circle (0.2cm) ;
\draw[-to,shorten >=-1pt,thick](3,-0.8) to (3,-0.25);
\draw[below](3,-1) node{$\strut a$};
  \filldraw[fill=black] (0,3) circle (0.2cm) ;
 \filldraw[fill=white] (1,3) circle (0.2cm) ;
 \filldraw[fill=white] (2,3) circle (0.2cm) ;
 \filldraw[fill=black] (3,3) circle (0.2cm) ;
 \filldraw[fill=black] (4,3) circle (0.2cm) ;
 \filldraw[fill=white] (5,3) circle (0.2cm) ;
 \filldraw[fill=white] (6,3) circle (0.2cm) ;
\draw[-to,shorten >=-1pt,thick](4,-0.8) to (4,-0.25);
\draw[below](4,-1) node{$\strut a$};
\draw[-to,shorten >=-1pt,gray,ultra thick] (0,4) -- (0.84,4.84);
  \filldraw[fill=black] (0,4) circle (0.2cm) ;
 \filldraw[fill=white] (1,4) circle (0.2cm) ;
 \filldraw[fill=white] (2,4) circle (0.2cm) ;
 \filldraw[fill=black] (3,4) circle (0.2cm) ;
 \filldraw[fill=black] (4,4) circle (0.2cm) ;
 \filldraw[fill=white] (5,4) circle (0.2cm) ;
 \filldraw[fill=white] (6,4) circle (0.2cm) ;
\draw[-to,shorten >=-1pt,thick](5,-0.8) to (5,-0.25);
\draw[below](5,-1) node{$\strut b$};
 \filldraw[fill=white] (0,5) circle (0.2cm) ;
\draw[-to,shorten >=-1pt,gray,ultra thick] (1,5) -- (1.84,5.84);
  \filldraw[fill=black] (1,5) circle (0.2cm) ;
 \filldraw[fill=black] (2,5) circle (0.2cm) ;
 \filldraw[fill=white] (3,5) circle (0.2cm) ;
 \filldraw[fill=white] (4,5) circle (0.2cm) ;
 \filldraw[fill=black] (5,5) circle (0.2cm) ;
 \filldraw[fill=black] (6,5) circle (0.2cm) ;
\draw[-to,shorten >=-1pt,thick](6,-0.8) to (6,-0.25);
\draw[below](6,-1) node{$\strut b$};
 \filldraw[fill=white] (0,6) circle (0.2cm) ;
 \filldraw[fill=black] (1,6) circle (0.2cm) ;
\draw[-to,shorten >=-1pt,thick](2,6.2) to (2,6.6);
\draw(2,7) node{$ \bot$};
 \filldraw[fill=black] (2,6) circle (0.2cm) ;
 \filldraw[fill=white] (3,6) circle (0.2cm) ;
 \filldraw[fill=white] (4,6) circle (0.2cm) ;
 \filldraw[fill=black] (5,6) circle (0.2cm) ;
 \filldraw[fill=black] (6,6) circle (0.2cm) ;

 \filldraw[fill=black] (9,5) circle (0.2cm) ;
 \draw[right](10,5) node{$Q_s(x)\land Q_s(y)$};

 \draw[-to,shorten >=-1pt,gray,ultra thick] (8.5,3) -- (9.34,3.84);
 \draw[right](10,3.5) node{$\mathtt{Border}(x,y)$};

 \draw[-to,shorten >=-1pt,thick](9,1) to (9,1.4);
 \draw(9,1.8) node{$ \bot$};
  \draw[right](10,1.4) node{the ``false''};

\end{tikzpicture}
\caption{\label{fig_notBordered}Computation of $\Phi_{\mathtt{notBordered}}$ on the word $abbaabb$}
\end{figure}

So, as a consequence of this paper, 
$\mathtt{notBordered}$ belongs to $\mathtt{RealTime}_{\mathtt{CA}}$.
In fact, more is known \cite{Terrier96uvu}:
\[
\mathtt{notBordered}\in
\mathtt{RealTime}_{\mathtt{CA}}\setminus (\mathtt{TREILLIS}\cup \mathtt{RealTime}_{\mathtt{IA}}).
\]

\section{Normalizing our logics}\label{sec:normal}

\noindent
The most difficult and main parts of the proofs of our descriptive complexity results, i.e., equalities 1-3 of Subsection~\ref{results}, are the following \emph{normalization lemmas}.
They are key ingredients because our normalized formulas can be \emph{literally simulated} by grids and finally by CA of the corresponding real-time classes.

\smallskip
\begin{lemma}[normalization of predecessor logics]\label{lemmapred}
Each formula $\Phi\in\mathtt{pred}$-$\mathtt{ESO}$-$\mathtt{HORN}$ 
(resp. $\Phi\in\mathtt{pred}$-$\mathtt{dio}$-$\mathtt{ESO}$-$\mathtt{HORN}$)
is equivalent to a formula  $\Phi'\in\mathtt{pred}$-$\mathtt{ESO}$-$\mathtt{HORN}$ 
(resp. $\Phi'\in\mathtt{pred}$-$\mathtt{dio}$-$\mathtt{ESO}$-$\mathtt{HORN}$)
\emph{normalized} as follows: each clause of $\Phi'$ is of one of the following forms:

\begin{itemize}

\item \emph{input clause} of the form, for $s\in\Sigma$  and $R\in\mathbf{R}$:\\
 $\mathtt{min}(x) \land \mathtt{min}(y)\land Q_s(y)\to R(x,y)$, or\\
 $\mathtt{min}(x) \land \lnot\mathtt{min}(y)\land Q_s(y)\to R(x,y)$\\
 (resp. $x=y\land \mathtt{min}(x)\land Q_s(x) \to R(x,y)$, or\\
 $x=y\land \lnot\mathtt{min}(x)\land Q_s(x) \to R(x,y)$).
 
\item the \emph{contradiction clause}, for a fixed $R_{\bot}\in\mathbf{R}$:
$\mathtt{max}(x)\land \mathtt{max}(y)\land R_{\bot}(x,y)\to \bot$;

\item  \emph{computation clause} of the form $\delta_1\land\ldots\land\delta_r\to R(x,y)$, for $R\in\mathbf{R}$, where each hypothesis $\delta_i$ 
is a conjunction of the form
$S(x-1,y) \land \lnot \mathtt{min}(x)$ or $S(x,y-1) \land \lnot \mathtt{min}(y)$, for $S\in\mathbf{R}$.

\end{itemize}

\end{lemma}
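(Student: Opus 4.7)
The plan is to successively introduce fresh binary second-order auxiliary computation predicates, each definable by Horn clauses already in the target normalized shape, and to rewrite the original clauses in terms of these auxiliaries. Because every added predicate is uniquely determined as the minimal fixed point of its defining clauses, the existential second-order quantifier absorbs it transparently and logical equivalence is preserved at each step. The three things to achieve are: pushing every argument shift down to $1$; eliminating every ``swapped-argument'' computation atom $S(y-b,x-a)$; and replacing every input literal occurring in a clause hypothesis by a computation atom whose value has been propagated across the grid from the row $x=1$ (resp.\ from the diagonal $x=y$ in the $\mathtt{dio}$ case).

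I would proceed in four steps. \textbf{(1)~Reversed atoms.} For each predicate $S$ appearing in a hypothesis $S(y-b,x-a)$, I add a fresh $\overleftarrow{S}$ together with the clause $S(y,x)\to\overleftarrow{S}(x,y)$ (legal after renaming of the universal variables), then replace the offending atom by $\overleftarrow{S}(x-a,y-b)$. \textbf{(2)~Shift reduction.} For each $(S,a,b)$ with $a+b\ge 2$ such that $S(x-a,y-b)$ occurs, I introduce a cascade $S^{(a',b')}$ for $0\le a'\le a$, $0\le b'\le b$, with $S^{(0,0)}=S$ and the shift-by-one defining clauses $S^{(a'-1,b')}(x-1,y)\land\lnot\mathtt{min}(x)\to S^{(a',b')}(x,y)$ and its mirror on $b'$, then substitute $S^{(a,b)}(x,y)$ for $S(x-a,y-b)$. \textbf{(3)~Input propagation.} For each shifted input literal $(\lnot)U(z-a)$ still occurring in a hypothesis, I introduce a binary predicate $P_{U,a}$ that carries the value of the literal across the grid. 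In the $\mathtt{pred}$ case, $P_{Q_s,0}$ is loaded by the allowed input clauses $\mathtt{min}(x)\land\mathtt{min}(y)\land Q_s(y)\to P_{Q_s,0}(x,y)$ and $\mathtt{min}(x)\land\lnot\mathtt{min}(y)\land Q_s(y)\to P_{Q_s,0}(x,y)$, and analogously for $U=\mathtt{min}$; it is then propagated by $P_{U,0}(x-1,y)\land\lnot\mathtt{min}(x)\to P_{U,0}(x,y)$ and shifted by $P_{U,a-1}(x,y-1)\land\lnot\mathtt{min}(y)\to P_{U,a}(x,y)$; a symmetric family handles literals on the $x$-coordinate. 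Every input literal in a clause hypothesis is then replaced by the corresponding computation atom. \textbf{(4)~Contradiction unification.} I gather every clause with conclusion $\bot$ into a single predicate $R_\bot$: for each such original clause $\delta_1\land\cdots\land\delta_r\to\bot$, whose hypotheses have already been made shift-by-one by~(1)--(3), I add $\delta_1\land\cdots\land\delta_r\to R_\bot(x,y)$, propagate $R_\bot$ forward by $R_\bot(x-1,y)\land\lnot\mathtt{min}(x)\to R_\bot(x,y)$ and $R_\bot(x,y-1)\land\lnot\mathtt{min}(y)\to R_\bot(x,y)$ so that $R_\bot(n,n)$ witnesses any original failure, and keep the unique final clause $\mathtt{max}(x)\land\mathtt{max}(y)\land R_\bot(x,y)\to\bot$; the residual $\mathtt{max}$ literals are absorbed here since their only effect is to trigger $\bot$. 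Correctness then follows by a straightforward induction on $x+y$ showing that each auxiliary predicate holds if and only if the literal or atom it encodes holds in $\langle w\rangle$.

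For the diagonal variant $\mathtt{pred}$-$\mathtt{dio}$-$\mathtt{ESO}$-$\mathtt{HORN}$, the same scheme applies, the only change being that input literals are loaded along the diagonal $x=y$ using the admissible base clauses $x=y\land\mathtt{min}(x)\land Q_s(x)\to P_{Q_s,0}(x,y)$ and $x=y\land\lnot\mathtt{min}(x)\land Q_s(x)\to P_{Q_s,0}(x,y)$; the very same shift-by-one propagation rules then diffuse this information off the diagonal into the whole grid. The main obstacle lies in step~(3): the input-carrying predicates must be designed so that their \emph{only} input clauses fit the very restricted normalized form (in particular, with no $\mathtt{max}$ and with $Q_s$ applied to the prescribed coordinate only), while their propagation clauses fit the normalized computation pattern $S(x-1,y)\land\lnot\mathtt{min}(x)$ or $S(x,y-1)\land\lnot\mathtt{min}(y)$. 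All other transformations are purely syntactic and preserve the minimal fixed point semantics.
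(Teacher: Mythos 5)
Your overall strategy (fresh binary predicates defined by Horn clauses, absorbed by the existential second-order quantifier and justified on the minimal model) is the same as the paper's, and your steps (2), (3) and (4) correspond roughly to the paper's steps for shift reduction, input transport and contradiction processing. However, three essential difficulties are not actually resolved. First, your step (1) does not eliminate swapped atoms: the defining clause $S(y,x)\to\overleftarrow{S}(x,y)$ itself still contains the hypothesis $S(y,x)$, which is of neither normalized form $S(x-1,y)\land\lnot\mathtt{min}(x)$ nor $S(x,y-1)\land\lnot\mathtt{min}(y)$, and it cannot be rewritten into one by local propagation: the value at site $(x,y)$ may only depend on sites $(x',y')$ with $x'\le x$ and $y'\le y$, and the mirror site $(y,x)$ does not satisfy this when $y>x$. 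This is exactly why the paper must first define $=$, $<$, $\le$ as computation predicates and then \emph{fold} the square onto the triangle $x\le y$, splitting every clause into its restrictions to the two triangles, introducing inverse predicates $R^{\mathtt{inv}}$ with $R^{\mathtt{inv}}(x,y)\iff R(y,x)$, and gluing $R$ and $R^{\mathtt{inv}}$ along the diagonal; none of this machinery appears in your proposal.

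Second, nothing in your construction removes hypotheses $S(x,y)$ with zero shift in both coordinates: they are legal in the source logic, your step (2) only treats $a+b\ge 2$, and they are forbidden in the normal form. This is the paper's most delicate step: it groups the zero-shift atoms with the conclusion into propositional Horn clauses $\theta_i(x,y)$, precomputes for every subset $J$ of clause bodies the set $K_J$ of indices $h$ with $\bigwedge_{i\in J}\theta_i\to R_h(x,y)$ a tautology, replaces the clauses by $\bigwedge_{i\in J}\alpha_i(x,y)\to R_h(x,y)$ for $h\in K_J$, and proves equivalence by induction on $x+y$ using the minimal-model property of Horn formulas. Third, your claim that residual $\mathtt{max}$ literals are ``absorbed'' at the contradiction clause is incorrect: a clause such as $\mathtt{max}(y)\land S(x,y-1)\to R(x,y)$ uses $\mathtt{max}$ to derive an ordinary computation atom that can feed further clauses, and whether $y=n$ is not computable by forward propagation from the $\mathtt{min}$ side (your proposed loading clause for $P_{\mathtt{max},0}$ would itself be an illegal input clause). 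The paper resolves this by duplicating each predicate into $R^y_{\leftarrow\mathtt{max}}$ and $R^y_{\leftarrow\lnot\mathtt{max}}$, whose meanings are the implications $\mathtt{max}(y)\to R(x,y)$ and $\lnot\mathtt{max}(y)\to R(x,y)$, carrying both hypothetical values in parallel and discharging the hypothesis only where $\mathtt{max}(x)\land\mathtt{max}(y)$ is actually available. Without these three ingredients your construction does not reach the stated normal form.
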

Let $\mathtt{normal}$-$\mathtt{pred}$-$\mathtt{ESO}$-$\mathtt{HORN}$
(resp. $\mathtt{normal}$-$\mathtt{pred}$-$\mathtt{dio}$-$\mathtt{ESO}$-$\mathtt{HORN}$)
denote the class of formulas (languages) so defined.

\smallskip
\begin{lemma}[normalization of inclusion logic]\label{lemmaincl}
Each formula $\Phi\in\mathtt{incl}$-$\mathtt{ESO}$-$\mathtt{HORN}$ is equivalent to a formula 
$\Phi'\in\mathtt{incl}$-$\mathtt{ESO}$-$\mathtt{HORN}$ \emph{normalized} as follows: each clause of $\Phi'$ is of one of the following forms:

\begin{itemize}

\item \emph{input clause} of the form $x=y \land Q_s(x)\to R(x,y)$,
for $s\in\Sigma$ and $R\in\mathbf{R}$;
 
\item the \emph{contradiction clause}, for a fixed $R_{\bot}\in\mathbf{R}$, 
$\mathtt{min}(x)\land \mathtt{max}(y)\land R_{\bot}(x,y)\to \bot$;

\item  \emph{computation clause} of the form\footnote{Note that the hypothesis $x<y$ is equivalent to the expected inequality $x+1\le y$ or $x \le y-1$.}
$x<y \land \delta_1\land\ldots\land\delta_r\to R(x,y)$,
where $R\in\mathbf{R}$ and where each hypothesis $\delta_i$ 
is a computation atom of either form 
$S(x+1,y)$ or $S(x, y-1)$, for $S \in\mathbf{R}$.

\end{itemize}

\end{lemma}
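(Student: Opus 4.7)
My plan is to transform an arbitrary formula $\Phi$ in $\mathtt{incl}$-$\mathtt{ESO}$-$\mathtt{HORN}$ into the prescribed normal form by a sequence of equivalence-preserving rewrites, each enlarging the existential signature $\mathbf{R}$ with fresh computation predicates whose values at an interval $[x,y]$ absorb data that would otherwise be fetched through a now-forbidden hypothesis. The guiding observation is that, in the inclusion logic, the truth of $R(x,y)$ for $x<y$ depends only on predicates interpreted over strictly included subintervals; the normal form merely insists that each such dependence shrink the interval by exactly one unit on a single side, that is, to $[x+1,y]$ or $[x,y-1]$. Since any position $x+a$ or $y-b$ inside $[x,y]$ can be reached from an endpoint by iterating unit shifts, and every input letter is available at the corresponding singleton base case, the unit-shift discipline can in principle simulate both the ``larger shifts'' and the ``arbitrary input queries'' of the original formula.

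\textbf{Main steps.} First, I would decompose every computation atom $S(x+a,y-b)$ with $a+b\ge 2$ into a chain of unit moves: for each pair $(a',b')$ with $0\le a'\le a$ and $0\le b'\le b$, introduce a fresh predicate $S^{a',b'}$ intended to satisfy $S^{a',b'}(x,y)\Leftrightarrow S(x+a',y-b')$, with the unit-shift clauses $x<y\land S^{a'-1,b'}(x+1,y)\to S^{a',b'}(x,y)$ and $x<y\land S^{a',b'-1}(x,y-1)\to S^{a',b'}(x,y)$, identifying $S^{0,0}$ with $S$. Second, I would fold away the remaining (in)equalities: every clause with hypothesis $x=y$ collapses into a base-case input clause, and the others retain only $x<y$ as their comparison. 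Third, I would replace each input literal $U(x+a)$ or $U(y+a)$ occurring in a computation clause by an atom over a fresh propagation predicate $P^{U,a}$, inductively defined from singleton base clauses $x=y\land Q_s(x)\to P^{Q_s,0}(x,y)$ and carried forward by unit shifts such as $x<y\land P^{Q_s,0}(x,y-1)\to P^{Q_s,0}(x,y)$ and $x<y\land P^{Q_s,a-1}(x+1,y)\to P^{Q_s,a}(x,y)$, together propagating the letter at a fixed position to every interval containing it. Fourth, I would collapse every clause with conclusion $\bot$ into a single predicate $R_\bot$ whose value at $(1,n)$ encodes the disjunction (in the Horn sense) of all original contradiction triggers, leaving only the canonical clause $\mathtt{min}(x)\land\mathtt{max}(y)\land R_\bot(x,y)\to\bot$. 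Final clean-up steps ensure every remaining clause matches one of the three templates of the lemma and that every auxiliary predicate is of arity two.

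\textbf{Main obstacle.} The essential difficulty lies in the third step, and particularly in the treatment of the boundary literals $\mathtt{min}(x+a)$ and $\mathtt{max}(y+a)$: the normal form's base clause $x=y\land Q_s(x)\to R(x,y)$ offers no hook to inspect $\mathtt{min}$ or $\mathtt{max}$ at a singleton, so the propagation scheme used for $Q_s$ cannot be initialized. My way around this is to reroute every use of a boundary literal through the unique contradiction clause: each affected computation clause is duplicated into ``guess'' copies, one for each possible truth value of the boundary literal, and the commitment is tracked by a fresh predicate whose correctness becomes testable only at the contradiction stage, where the guard $\mathtt{min}(x)\land\mathtt{max}(y)$ forces the guess to match reality. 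Once this rerouting is in place, the remaining steps go through and correctness of $\Phi'\equiv\Phi$ follows by induction on the interval length $y-x$: satisfying assignments of $\Phi$ extend uniquely, via the intended meanings of the auxiliary predicates, to satisfying assignments of $\Phi'$, with the projection erasing the new predicates providing the inverse.
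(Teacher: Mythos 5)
Your overall architecture matches the paper's: a single propagated $R_{\bot}$ discharged at $(1,n)$, unit-shift decomposition of computation atoms, input made available on the diagonal and transported to larger intervals, and boundary information handled by ``conditional'' copies of predicates whose condition is discharged where the guard $\mathtt{min}(x)\land\mathtt{max}(y)$ pins down the endpoints (the paper realizes your ``guess copies'' as Horn-definable implications $R_{\hyp}^{\eta}(x,y)\iff(\eta\to R(x,y))$, which is the right way to read your step so that minimal-model semantics is preserved). However, there are two genuine gaps.

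First, Definition~\ref{def:incllogic} allows input literals $U(x+a)$ and $U(y+a)$ for \emph{any} $a\in\mathbb{Z}$, so a computation clause may query a letter $Q_s(x-a)$ or $Q_s(y+b)$ with $a,b>0$, i.e., at a position \emph{outside} the interval $[x,y]$. Your propagation predicates $P^{Q_s,a}$ are initialized at singletons and grown by the transport clauses, so — as you yourself say — they reach exactly the intervals \emph{containing} the referenced position; they can never certify a letter lying strictly to the left of $x$ or to the right of $y$, and the inclusion discipline forbids a hypothesis such as $P(x-1,y)$ that would fetch it from a larger interval. Your contradiction-clause rerouting is tailored to $\mathtt{min}/\mathtt{max}$ (whose truth is determined by the fixed positions $1$ and $n$) and does not transfer as stated: the condition ``$Q_s(x-a)$'' is relative to the moving left endpoint, so its index must be re-indexed as the interval grows and the condition must be discharged by a verification clause the moment the position becomes an endpoint (this is the paper's step~6, with predicates $R_{\hyp,s}^{x-a}$ and verification clauses using $W_s^x$). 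Second, the normal form forbids hypotheses at $(x,y)$ itself, yet the source logic allows $S(x+0,y-0)=S(x,y)$ and your own shift decomposition reintroduces such atoms as $S^{a,b}(x,y)$. Eliminating same-interval hypotheses is not a routine ``clean-up'': because atoms at the same point can chain through several clauses, one must saturate the purely propositional part of the clause set (the paper's step~7, i.e., Claim~\ref{claim:horn}, which replaces the clauses by all consequences $\bigwedge_{i\in J}\alpha_i(x,y)\to R_h(x,y)$ for $h\in K_J$ and proves equivalence by induction on $x+y$). Your proposal needs an explicit argument here; without it the transformed formula is not in the claimed normal form.
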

Let $\mathtt{normal}$-$\mathtt{incl}$-$\mathtt{ESO}$-$\mathtt{HORN}$ denote the class of formulas (resp. languages) so defined.

\subsection*{Proof of the normalization lemmas~\ref{lemmapred} and~\ref{lemmaincl}}
\noindent
The normalization processes of our three logics are quite similar each other; further, some steps are exactly the same. Therefore, we choose to present here below the successive normalization steps for one logic:
$\mathtt{pred}$-$\mathtt{ESO}$-$\mathtt{HORN}$. 
Afterwards, we will succinctly describe how those steps should be adapted for the two other logics.

\subsubsection*{Proof of Lemma~\ref{lemmapred}: Normalization of predecessor Horn formulas}

Let a formula $\Phi\in\mathtt{pred}$-$\mathtt{ESO}$-$\mathtt{HORN}$.
For simplicity of notation, we first assume that the only computation atoms of $\Phi$ are of the form 
$R(x-a,y-b)$, $a,b\ge 0$ (no atom of the form $R(y-b,x-a)$). We will show at the end of the proof how to manage the general case. $\Phi$ will be transformed into an equivalent normalized form $\Phi'$ by a sequence of 10  steps:
\begin{enumerate}
\item Processing the contradiction clauses;
\item Processing the input;
\item Restriction of computation atoms to $R(x-1,y)$, $R(x,y-1)$, and $R(x,y)$;
\item Elimination of atoms $x>a$, $x=a$, $y>a$, $y=a$;
\item Processing of $\mathtt{min}$ and  $\mathtt{max}$;
\item Defining equality and inequalities;
\item Folding of the domain;
\item Deleting $\mathtt{max}$ in the initialization clauses;
\item From initialization clauses to input clauses;
\item Elimination of atoms $R(x,y)$ as hypotheses.
\end{enumerate}
In each of these 10 steps, we will introduce \emph{new} (binary) \emph{computation predicates}, to be added to the set $\mathbf{R}$ of existentially quantified predicates, and new clauses to define them.

\medskip
\noindent
{\bf 1) Processing the contradiction clauses:} Without loss of generality, one can assume there is the \emph{only} contradiction clause $\mathtt{max}(x)\land \mathtt{max}(y)\land R_{\bot}(x,y)\to\bot$. Indeed, each contradiction clause $\ell_1\land\ldots\land\ell_k\to\bot$ can be equivalently replaced by the conjunction of computation clauses $\ell_1\land\ldots\land\ell_k\to R_{\bot}(x,y)$ with the clause $R_{\bot}(x,y)\to\bot$ where $R_{\bot}$ is a new computation predicate (intuitively, always false). However, in place of the previous clause, we ``delay'' the contradiction, by propagating predicate $R_{\bot}$ till point $(n,n)$, thanks to the conjunction of the ``transport'' clauses $R_{\bot}(x-1,y) \land \lnot \mathtt{min}(x)\to R_{\bot}(x,y)$ and $R_{\bot}(x,y-1) \land \lnot \mathtt{min}(y)\to R_{\bot}(x,y)$ and of the unique contradiction clause  $\mathtt{max}(x) \land \mathtt{max}(y)\land R_{\bot}(x,y)\to \bot$ required by the normal form.

\medskip
\noindent
{\bf 2) Processing the input:} The idea is to make available the letters of the input word \emph{only} on the sides $x=1$ and $y=1$ of the square $\{(x,y)\in[1,n]^2\}$, this by carrying out their transport thanks to new ``transport'' predicates $W^x_s$ and $W^y_s$, for $s\in\Sigma$, inductively defined by the following clauses:
\\
$\bullet$ \emph{initialization clauses} $Q_s(x)\land \mathtt{min}(y) \to W^x_s(x,y)$ and $Q_s(y)\land \mathtt{min}(x) \to W^y_s(x,y)$;\\ 
$\bullet$ \emph{transport clauses} $W^x_s(x,y-1)\land \lnot \mathtt{min}(y) \to W^x_s(x,y)$ and $W^y_s(x-1,y)\land \lnot\mathtt{min}(x) \to W^y_s(x,y)$.

\noindent
By transitivity, these clauses imply clauses $Q_s(x)~\to~W^x_s(x,y)$ 
 and $Q_s(y)~\to~W^y_s(x,y)$.\linebreak
In other words, 
the minimal model of the conjunction of those clauses that expands structure $\langle w \rangle$ satisfies  equivalences
$\forall x \forall y \; (W^x_s(x,y) \iff Q_s(x))$ and $\forall x \forall y \; (W^y_s(x,y) \iff Q_s(y))$.
This justifies the replacement of the input atoms of form $Q_s(x-a)$ and $Q_s(y-b)$ by the respective atoms $W^x_s(x-a,y)$ and $W^y_s(x,y-b)$ in all the clauses, except in the initialization clauses.

\medskip
\noindent
{\bf 3) Restriction of computation atoms to $R(x-1,y)$, $R(x,y-1)$, $R(x,y)$:}
The idea is to introduce new ``shift'' predicates $R^{x-a}$, $R^{y-b}$ and $R^{x-a,y-b}$, for fixed integers $a,b>0$ and $R\in\mathbf{R}$: typically, we define the predicate $R^{x-a,y-b}$ that intuitively satisfies the equivalence $R^{x-a,y-b}(x,y)\iff R(x-a,y-b)$.
Let us suggest the method by an example. Assume we have initially the Horn clause 
\begin{center}
$x>3\land y>2 \land R(x-2,y-1)\land S(x-3,y-2) \to T(x,y).$
\end{center}
This clause is replaced by the clause
\begin{center}
$x>3\land y>2 \land R^{x-2}(x,y-1)\land S^{x-2,y-2}(x-1,y) ) \to T(x,y),$
\end{center}
for which the predicates $R^{x-1}$ and $R^{x-2}$ are defined by the clauses
$x>1\land R(x-1,y)\to R^{x-1}(x,y)$ and $x>2\land R^{x-1}(x-1,y)\to R^{x-2}(x,y)$
which imply $x>2 \land R(x-2,y)\to R^{x-2}(x,y)$ and then $x>2 \land y>1 \linebreak
\land R(x-2,y-1)\to R^{x-2}(x,y-1)$,
and the predicates $S^{x-1}$, $S^{x-2}$, $S^{x-2,y-1}$ and $S^{x-2,y-2}$ defined by the respective clauses:
$x>1\land S(x-1,y)\to S^{x-1}(x,y)$,\linebreak
$x>2 \land S^{x-1}(x-1,y)\to S^{x-2}(x,y)$,\linebreak
$x>2 \land y>1 \land S^{x-2}(x,y-1)\to S^{x-2,y-1}(x,y)$, \linebreak
and $x>2 \land y>2 \land S^{x-2,y-1}(x,y-1)\to S^{x-2,y-2}(x,y)$,\linebreak
which imply together the clause\\
$x>2 \land y>2 \land S(x-2,y-2) \to S^{x-2,y-2}(x,y)$
and then also
$x>3 \land y>2 \land S(x-3,y-2) \to S^{x-2,y-2}(x-1,y)$.

\medskip
\begin{remark}
Atoms on $\mathtt{min}$ and $x$ are of the forms $\mathtt{min}(x-a)$ or $\lnot\mathtt{min}(x-a)$ for $a\ge 0$, or, equivalently, $x=a+1$ or $x>a+1$. Besides, for each integer $a\ge 1$, the atom $\mathtt{max}(x-a)$ is false.
Therefore, one may consider that the only literals on $x$ involving $\mathtt{min}$ or $\mathtt{max}$ are of the form 
 $\mathtt{min}(x)$, $\lnot\mathtt{min}(x)$, $\mathtt{max}(x)$, $\lnot\mathtt{max}(x)$, $x=a$, $x>a$, for an integer $a>1$, and similarly, for $y$.
\end{remark}

\smallskip
\noindent
{\bf 4)  Elimination of atoms $x>a$, $x=a$, $y>a$, $y=a$:}
By recurrence on integer $a\ge 1$, let us define the \emph{binary} predicates $R^{x>a}$ (and, similarly, $R^{x=a}$, $R^{y>a}$, $R^{y=a}$)
whose intuitive meaning is  $x>a$ (resp. $x=a$, $y>a$, $y=a$).
The predicate $R^{x>1}$ is defined by the clause $\lnot\mathtt{min}(x) \to R^{x>1}(x,y)$. 
For $a>1$, let us define $R^{x>a}$ from $R^{x>a-1}$ by the clause
$ 
R^{x>a-1}(x-1,y) \land \lnot\mathtt{min}(x) \to R^{x>a}(x,y).
$
By recurrence on integer $a\ge 1$, these clauses imply
$
x>a\to R^{x>a}(x,y).
$
\linebreak
This justifies the replacement of the atoms $x>a$ and 
\linebreak $x=a$, for $a>1$, by $R^{x>a}(x,y)$ and $R^{x=a}(x,y)$, 
respectively, and similarly for $y$ in place of $x$. 

\medskip
After step 4, the only literals involving $\mathtt{min}$ or $\mathtt{max}$ are 
$(\lnot)\mathtt{min}(x)$, $(\lnot)\mathtt{max}(x)$, $(\lnot)\mathtt{min}(y)$, $(\lnot)\mathtt{max}(y)$.

\medskip
\noindent
{\bf 5) Processing of $\mathtt{min}$ and  $\mathtt{max}$:} 
To each literal $\eta(x)$ 
of the form 
$\mathtt{min}(x)$, $\lnot\mathtt{min}(x)$, $\mathtt{max}(x)$ or $\lnot\mathtt{max}(x)$, 
associate the new binary predicate $R^{\eta(x)}$ 
defined by the conjunction of the \emph{initialization clause}
$\eta(x)\land \mathtt{min}(y) \to R^{\eta(x)}(x,y)$ 
and  of the \emph{transport clause} $R^{\eta(x)}(x,y-1) \land \lnot \mathtt{min}(y) \to R^{\eta(x)}(x,y)$.
Do similarly for the literals $\eta(y)\in \{(\lnot) \mathtt{min}(y), (\lnot) \mathtt{max}(y)\}$.
This justifies we replace each such literal $\eta(x)$ (resp. $\eta(y)$) by the ``equivalent'' atom $R^{\eta(x)}(x,y)$ 
(resp. $R^{\eta(y)}(x,y)$) in all the clauses, except in the above initialization clauses and in the contradiction clause or in case $\eta(x)$ (resp. $\eta(y)$) is $\lnot \mathtt{min}(x)$ (resp. $\lnot \mathtt{min}(y)$) and is joined to a hypothesis of the form $R(x-1,y)$ (resp. $R(x,y-1)$).

\medskip
\noindent
{\bf Recapitulation:} After step~5 each clause is of one of the following forms:
\begin{enumerate}
\item an \emph{initialization clause} of one of the two forms:\\
$\bullet$ \emph{initialization} for $x=1$: $\mathtt{min}(x) \land \eta(y)\to R(x,y)$\\ 
with $\eta(y) \in \{(Q_s(y))_{s\in \Sigma},
(\lnot)\mathtt{min}(y),(\lnot)\mathtt{max}(y)\}$;\\
$\bullet$ \emph{initialization} for $y=1$: $\mathtt{min}(y) \land \eta(x)\to R(x,y)$\\ 
with $\eta(x) \in \{(Q_s(x))_{s\in \Sigma},
(\lnot)\mathtt{min}(x),(\lnot)\mathtt{max}(x)\}$;\
\item ``the'' \emph{contradiction clause}
$\mathtt{max}(x)\land \mathtt{max}(y)\land R_{\bot}(x,y)\to \bot$;
\item a \emph{computation clause} of the form\\
$\delta_1(x,y)\land\ldots \land \delta_r(x,y)\to R(x,y)$,
where each hypothesis $\delta_i$ is of one of the three forms
$R(x,y)$, $R(x-1,y)\land \lnot\mathtt{min}(x)$,  $R(x,y-1)\land \lnot\mathtt{min}(y)$.\\
In fact, without loss of generality, we can \emph{assume}
that each computation clause is of one of the following forms:
\begin{enumerate}
\item  $S(x-1,y)\land  \lnot \mathtt{min}(x) \to R(x,y)$;
\item $S(x,y-1)\land  \lnot \mathtt{min}(y) \to R(x,y)$;
\item $S(x,y)\land T(x,y) \to R(x,y)$.
\end{enumerate}

\end{enumerate}

\medskip
\noindent
\emph{Justification of the assumption:} ``Decompose'' each computation clause into clauses of forms (a,b,c) by introducing new intermediate predicates. For example, the computation clause
$R_1(x-1,y)\land  \lnot \mathtt{min}(x)\land R_2(x,y-1)\land \lnot \mathtt{min}(y)\land R_3(x,y) 
\to R_4(x,y)$
is ``equivalent'' to the conjunction of the following clauses using new predicates $R_5,R_6,R_7$:
$R_1(x-1,y)\land  \lnot \mathtt{min}(x)\to R_5(x,y)$; 
$R_2(x,y-1)\land  \lnot \mathtt{min}(y)\to R_6(x,y)$;
$R_5(x,y)\land R_6(x,y)\to R_7(x,y)$;
$R_7(x,y)\land R_3(x,y) \to R_4(x,y)$.

\medskip
We now plan to fold the square domain $\{(x,y)\in[1,n]^2\}$ 
along the diagonal $x=y$ on the \emph{over-diagonal triangle} $T_n=\{(x,y)\in[1,n]^2 \vert\ x\le y\}$. This requires to first define equality and inequalities.

\medskip
\noindent
{\bf 6) Defining equality and inequalities:} Let us jointly define the predicates $R_=$ and $R_{\mathtt{pred}}$ of intuitive meaning $R_=(x,y) \iff x=y$ and $R_{\mathtt{pred}}(x,y)\iff x-1=y$\linebreak
by the following clauses:
$\mathtt{min}(x)\land \mathtt{min}(y) \to R_=(x,y)$; $\lnot \mathtt{min}(x) \land R_=(x-1,y) \to R_{\mathtt{pred}}(x,y)$; 
$\lnot \mathtt{min}(y) \land R_{\mathtt{pred}}(x,y-1)\to R_=(x,y)$.

Then define the predicate $R_<$ such that $R_<(x,y) \iff x<y$ with the two clauses
$\lnot \mathtt{min}(y) \land R_=(x,y-1) \to R_<(x,y)$ and $\lnot \mathtt{min}(y) \land R_<(x,y-1) \to R_<(x,y)$.
Define similarly the predicate $R_{\le}$ such that $R_{\le}(x,y) \iff x\le y$\linebreak
 with the two clauses
$\mathtt{min}(x) \land \mathtt{min}(y) \to R_{\le}(x,y)$ and $\lnot \mathtt{min}(x) \land R_{<}(x-1,y)\to R_{\le}(x,y)$.

For easy reading, we will freely write $x=y$, $x< y$ and $x\le y$ in place of the atoms $R_{=}(x,y)$, $R_{<}(x,y)$ and $R_{\le}(x,y)$, respectively.

\medskip
\noindent
{\bf 7) Folding of the domain:}  Let us fold the square domain $\{(x,y)\in[1,n]^2\}$  
along the diagonal $x=y$ on the over-diagonal triangle $T_n=\{(x,y)\in[1,n]^2 \vert\ x\le y\}$
so that each point $(y,x)$ such that $x\le y$ is sent to its symmetrical point $(x,y)\in T_n$.
For that purpose, let us associate to each predicate $R\in\mathbf{R}$ a new (inverse) predicate $R^{\mathtt{inv}}$ 
whose intuitive meaning is the following: for each $x\le y$, we have $R^{\mathtt{inv}}(x,y)\iff R(y,x)$. So, each clause $C$ will be replaced by two clauses: the first one is the restriction of $C$ to the triangle $T_n$; the second one is the \emph{folding} on $T_n$ of the restriction of $C$ to the \emph{under-diagonal triangle} using predicates $R^{\mathtt{inv}}$. Finally, we will express that each $R\in\mathbf{R}$ coincides with its inverse $R^{\mathtt{inv}}$ on the fold $x=y$.

\smallskip
\emph{Folding the initialization clauses:} 
Each initialization clause of the form $\mathtt{min}(x) \land \eta(y)\to R(x,y)$ 
(with $\eta(y) \in \{Q_s(y) \vert s\in \Sigma\}\cup  \{(\lnot)\mathtt{min}(y),(\lnot)\mathtt{max}(y)\}$) applies to the line $x=1$ which is included in the triangle $T_n$ and consequently it should be \emph{unchanged} in the folding; in contrast, each initialization clause of the form $\mathtt{min}(y) \land \eta(x)\to R(x,y)$ (with $\eta(x) \in \{Q_s(x) \vert s~\in~\Sigma\}\cup  \{(\lnot)\mathtt{min}(x),(\lnot)\mathtt{max}(x)\}$) is \emph{replaced} by its \emph{folded} version $\mathtt{min}(x) \land \eta(y)\to R^{\mathtt{inv}}(x,y)$.

\smallskip
\emph{Folding the computation clauses:}
Let us describe how to fold the clauses (a) or (b) (folding clauses (c) is easy): \\
$\bullet$ \emph{Folding of clauses} (a):  A clause of the form $S(x-1,y)\land  \lnot \mathtt{min}(x) \to R(x,y)$ is equivalent to the conjunction of clauses
i) $x\le y \land S(x-1,y)\land  \lnot \mathtt{min}(x) \to R(x,y)$ and\\
ii) $x> y \land S(x-1,y)\land  \lnot \mathtt{min}(x) \to R(x,y)$.
Notice that clause (i) applies to the triangle $T_n$ since $x\le y$ implies $x-1<y$: therefore, clause (i) should be left \emph{unchanged}. Clause~(ii) is equivalent (by exchanging variables $x$ and $y$) to the clause
$y>x \land S(y-1,x)\land \lnot \mathtt{min}(y) \to R(y,x)$ 
whose folded (equivalent) form on $T_n$ is $x<y \land S^{\mathtt{inv}}(x,y-1)\land \lnot \mathtt{min}(y) \to R^{\mathtt{inv}}(x,y)$ since $x<y$ implies $x\le y-1$.\\
$\bullet$ \emph{Folding of clauses} (b): Similarly, a clause of the form $S(x,y-1)\land \lnot \mathtt{min}(y) \to R(x,y)$ is equivalent to the conjunction of clauses $x<y \land S(x,y-1)\land \lnot \mathtt{min}(y) \to R(x,y)$
and $x\le y \land S^{\mathtt{inv}}(x-1,y)\land \lnot \mathtt{min}(x)\to R^{\mathtt{inv}}(x,y)$.

\smallskip
\emph{Folding the contradiction clause:} Clearly, it is harmless to confuse the (contradiction) predicate
$R_{\bot}$ and its inverse $(R_{\bot})^{\mathtt{inv}}$; consequently, the contradiction clause itself
$\mathtt{max}(x) \land \mathtt{max}(y)\land R_{\bot}(x,y)\to \bot$ is its own folded version.

\smallskip
\emph{The diagonal fold:} Finally, for each $R\in\mathbf{R}$, the following two clauses mean that $R$ coincides with its inverse $R^{\mathtt{inv}}$ on the diagonal: 
$x=y\land R(x,y) \to R^{\mathtt{inv}}(x,y)$; $x=y\land R^{\mathtt{inv}}(x,y) \to R(x,y)$.

\medskip
\noindent
{\bf Recapitulation:} By a careful examination of the set of clauses obtained after steps 1-7, we can check that each of them is of one of the following forms:
\begin{enumerate}
\item an \emph{initialization clause} of the form:
$\mathtt{min}(x) \land \eta(y)\to R(x,y)$ 
with $\eta(y) \in \{Q_s(y) \vert s\in \Sigma\}\cup  \{(\lnot)\mathtt{min}(y),(\lnot)\mathtt{max}(y)\}$;
\item ``the'' \emph{contradiction clause} $\mathtt{max}(x)\land \mathtt{max}(y)\land R_{\bot}(x,y)\to \bot$;
\item a \emph{computation clause} of one of the following forms:\\
(a) $x\le y \land S(x-1,y)\land  \lnot \mathtt{min}(x) \to R(x,y)$;\\
(b) $x<y \land S(x,y-1)\land  \lnot \mathtt{min}(y) \to R(x,y)$;\\
(c) $x\le y \land S(x,y)\land T(x,y) \to R(x,y)$;\\
(d) $x=y \land S(x,y) \to R(x,y)$.
\end{enumerate}

\noindent
{\bf 8) Deleting $\mathtt{max}$ in the initialization clauses:}
The idea is to consider in parallel for each point $(x,y)$ the case where the hypothesis $\mathtt{max}(y)$ holds and the contrary case where the negation $\lnot\mathtt{max}(y)$ holds. For that purpose, we duplicate each computation predicate  
$R$ in two new predicates denoted $R^y_{\hyp \mathtt{max}}$ and $R^y_{\hyp \lnot\mathtt{max}}$. Intuitively, the atom $R^y_{\hyp \mathtt{max}}(x,y)$ 
(resp. $R^y_{\hyp \lnot\mathtt{max}}(x,y)$)
expresses the implication $\mathtt{max}(y)\to R(x,y)$ (resp. $\lnot\mathtt{max}(y)\to R(x,y)$).

\emph{Transforming the initialization clauses:}
According to the desired semantics of $R^y_{\hyp \mathtt{max}}$ 
and $R^y_{\hyp \lnot\mathtt{max}}$, each initialization clause of the form $\mathtt{min}(x) \land \mathtt{max}(y)\to R(x,y)$ (resp. $\mathtt{min}(x) \land \lnot\mathtt{max}(y)\to R(x,y)$) should be rewritten as
$\mathtt{min}(x) \to R^y_{\hyp \mathtt{max}}(x,y)$
(resp. $\mathtt{min}(x) \to R^y_{\hyp \lnot\mathtt{max}}(x,y)$).
Similarly, each initialization clause of the form $\mathtt{min}(x) \land \eta(y)\to R(x,y)$,
for $\eta(y) \in \{Q_s(y) \vert s\in \Sigma\}\cup \{(\lnot)\mathtt{min}(y)\}$
should be replaced by the conjunction of the following two clauses:
$\mathtt{min}(x) \land \eta(y)\to R^y_{\hyp \mathtt{max}}(x,y)$
and $\mathtt{min}(x) \land \eta(y)\to R^y_{\hyp \lnot\mathtt{max}}(x,y)$.

\emph{Transforming the computation clauses:}
We describe it for each above form (a-d).\\
$\bullet$ Each clause (a) $x\le y \land S(x-1,y)\land  \lnot \mathtt{min}(x) \to R(x,y)$ is replaced by the ``equivalent'' conjunction of the following two clauses
 $x\le y \land S^y_{\hyp \mathtt{max}}(x-1,y)\land  \lnot \mathtt{min}(x) \to R^y_{\hyp \mathtt{max}}(x,y)$ and
 $x\le y \land S^y_{\hyp \lnot \mathtt{max}}(x-1,y)\land  \lnot \mathtt{min}(x) \to R^y_{\hyp \lnot\mathtt{max}}(x,y)$.\\
$\bullet$ Each clause (b) $x<y \land S(x,y-1)\land \lnot \mathtt{min}(y) \to R(x,y)$ is ``equivalent''  to 
$x<y \land S^y_{\hyp \lnot \mathtt{max}}(x,y-1)\land  \lnot \mathtt{min}(y) \to R(x,y)$ since the hypothesis $\lnot\mathtt{max}(y-1)$ always holds. Consequently, clause (b) should be replaced by the ``equivalent''  conjunction of the following two clauses:\\
$x<y \land S^y_{\hyp \lnot \mathtt{max}}(x,y-1)\land  \lnot \mathtt{min}(y) \to R^y_{\hyp \mathtt{max}}(x,y)$ and\\
$x<y \land S^y_{\hyp \lnot \mathtt{max}}(x,y-1)\land  \lnot \mathtt{min}(y) \to R^y_{\hyp \lnot\mathtt{max}}(x,y)$.\\
$\bullet$ Each clause (c) $x\le y \land S(x,y)\land T(x,y) \to R(x,y)$
is replaced by the ``equivalent''  conjunction of the following two clauses:
$x\le y \land S^y_{\hyp \mathtt{max}}(x,y)\land T^y_{\hyp \mathtt{max}}(x,y) \to R^y_{\hyp \mathtt{max}}(x,y)$ 
and $x\le y \land S^y_{\hyp \lnot \mathtt{max}}(x,y)\land T^y_{\hyp \lnot \mathtt{max}}(x,y) \to R^y_{\hyp \lnot\mathtt{max}}(x,y)$.
$\bullet$ Make a similar substitution for each above clause (d). 

\emph{Processing the contradiction clause:} Obviously, the contradiction clause
$\mathtt{max}(x) \land \mathtt{max}(y)\land R_{\bot}(x,y)\to \bot$
is equivalent to the formula
$
\mathtt{max}(x) \land \mathtt{max}(y) \land (\mathtt{max}(y)\to R_{\bot}(x,y))\to \bot
$
and should be rewritten as
$\mathtt{max}(x) \land \mathtt{max}(y)\land (R_{\bot})^y_{\hyp \mathtt{max}}(x,y)\to \bot$,
which is of the required form if the predicate $(R_{\bot})^y_{\hyp \mathtt{max}}$ is renamed $R_{\bot}$.

\medskip
\noindent
{\bf 9) From initialization clauses to input clauses:}
The \emph{initialization clauses} are now of the form $\mathtt{min}(x)\to R(x,y)$ or
$\mathtt{min}(x) \land \eta(y)\to R(x,y)$,
for $\eta(y) \in \{Q_s(y) \vert s\in \Sigma\} \linebreak \cup \{(\lnot)\mathtt{min}(y)\}$.
By a separation in cases, it is easy to transform each of these clauses into an equivalent conjunction of \emph{input clauses} of the required (normalized) forms:
 $\mathtt{min}(x) \land \mathtt{min}(y)\land Q_s(y)\to R(x,y)$;
 $\mathtt{min}(x) \land \lnot\mathtt{min}(y)\land Q_s(y)\to R(x,y)$.
 
\medskip
After step 9, the formula obtained is of the claimed normal form, \emph{except} that some computation clauses may have atoms $R(x,y)$ as hypotheses. Our last step is to eliminate such hypotheses.

\medskip
\noindent 
{\bf 10) Elimination of atoms $R(x,y)$ as hypotheses:} 
The first idea is to group together in each computation clause the hypothesis atoms of the form $R(x,y)$ and the conclusion of the clause. Accordingly, the formula can be rewritten in the form
$\Phi:= \exists \mathbf{R} \forall x \forall y [ \bigwedge_{i} C_i (x,y) \land \bigwedge_{i\in[1,k]} (\alpha_i(x,y)\to \theta_i(x,y)) ]$
where the $C_i$'s are the input clauses and the contradiction clause, and each computation clause is written in the form $\alpha_i(x,y) \to \theta_i(x,y)$ where $\alpha_i(x,y)$ is a conjunction of formulas of the only forms $R(x-1,y)\land \lnot\mathtt{min}(x)$, $R(x,y-1)\land \lnot\mathtt{min}(y)$, \emph{but not} $R(x,y)$, and $\theta_i(x,y)$ is a Horn clause whose \emph{all} the atoms are of the form $R(x,y)$.
The second idea is to ``solve'' the Horn clauses $\theta_i$ (containing  \emph{only atoms} of the form $R(x,y)$) according to the input clauses and \emph{all the possible} conjunctions of hypotheses $\alpha_i$ that may be true. Notice the two following facts: the hypotheses of the input clauses are input literals and the conjuncts of the $\alpha_i$'s are of the only forms $R(x-1,y)\land \lnot\mathtt{min}(x)$, $R(x,y-1)\land \lnot\mathtt{min}(y)$. So, we can prove by induction on the sum $x+y$ that the obtained formula $\Phi'$ which is a conjunction of clauses (whose hypotheses do include no atom of the form $R(x,y)$ anymore) is equivalent to the above formula $\Phi$.
For a detailed proof, see the appendix.

\medskip
\noindent
{\bf General case:} Steps~1-7 are easy to adapt in the general case where the initial formula may contain hypotheses of the form $R(y-b,x-a)$. The new points are the following:
step~3 restricts the computation atoms to four forms:  $R(x,y)$, $R(y,x)$, $R(x-1,y)$ and $R(x,y-1)$;
step~7 (folding the domain) is adapted so that it eliminates the atoms of the form $R(y,x)$ by using the following equivalence for $x\le y$, $R(y,x)\iff R^{\mathtt{inv}}(x,y)$.
See the details in the appendix.

\medskip
This achieves the proof of the normalization result $\mathtt{pred}$-$\mathtt{ESO}$-$\mathtt{HORN}=\mathtt{normal}$-$\mathtt{pred}$-$\mathtt{ESO}$-$\mathtt{HORN}$.

\medskip
\noindent
{\bf Adaptation of steps 1-10 for normalization of predecessor Horn formulas with diagonal input-output:}
Step 1 is not modified. In step 2, the initialization clauses are now 
 $x=y\land Q_s(x)\to W^x_s(x,y)$ and  $x=y\land Q_s(y) \to W^y_s(x,y)$ whereas the transport clauses are not modified. 
 Steps 3 to 5 and the recapitulation after step 5 are not modified either, except that now each initialization clause is of one of the three forms: 1)~$x=y\land Q_s(x)\to R(x,y)$; 
 2) $\mathtt{min}(x) \land \eta(y)\to R(x,y)$, with $\eta(y) \in \{(\lnot)\mathtt{min}(y),(\lnot)\mathtt{max}(y)\}$;
 3) $\mathtt{min}(y) \land \eta(x)\to R(x,y)$, with $\eta(x) \in \{(\lnot)\mathtt{min}(x),(\lnot)\mathtt{max}(x)\}$.
 Steps 6 and~7 (folding of the domain) and the recapitulation after step~7 are not modified either, except that now each initialization clause has one of the only forms 1 and~2 above. 
 
Step~8 (deleting $\mathtt{max}$ in the initialization clauses) can be easily adapted according to those initialization clauses 
 whose forms after step 8 are now restricted to \linebreak
 1)~$x=y\land Q_s(x)\to R(x,y)$; 
 2) $\mathtt{min}(x) \land\mathtt{min}(y)\to R(x,y)$; 
 3) $\mathtt{min}(x) \land \lnot \mathtt{min}(y)\to R(x,y)$.
 Clause 2 can be replaced by the equivalent clause 2') $x=y \land \mathtt{min}(x)\to R(x,y)$.
 
 Step~9 (from initialization clauses to input clauses) is modified as follows.
 Define the predicates $R^{\mathtt{min}(x)}$, $R^{\mathtt{min}(y)}$ and $R^{\lnot \mathtt{min}(y)}$ by the initialization clauses\linebreak
 4) $x=y \land \mathtt{min}(x)\to R^{\mathtt{min}(x)}(x,y)$ and \linebreak
 5) $x=y \land \mathtt{min}(x)\to R^{\mathtt{min}(y)}(x,y)$,
 and the computation clauses 
 $\lnot \mathtt{min}(y)\land R^{\mathtt{min}(x)}(x,y-1) \to R^{\mathtt{min}(x)}(x,y)$,
 $\lnot \mathtt{min}(y)\land R^{\mathtt{min}(y)}(x,y-1) \to R^{\lnot \mathtt{min}(y)}(x,y)$,
 and $\lnot \mathtt{min}(y)\land R^{\lnot\mathtt{min}(y)}(x,y-1) \to R^{\lnot \mathtt{min}(y)}(x,y)$.
 This allows to replace the initialization clause 3 by the computation clause
 $R^{\mathtt{min}(x)}(x,y) \land R^{\lnot \mathtt{min}(y)}(x,y)\to R(x,y)$. 
 After those transformations all the initialization clauses are of the form $x=y\land Q_s(x)\to R(x,y)$ (clause 1 above) 
 or $x=y\land \mathtt{min}(x)\to R(x,y)$ (clauses 2', 4 and 5 above).
By a separation in cases, it is easy to transform each of these clauses into an equivalent conjunction of \emph{input clauses} of the required (normalized) forms:
$x=y\land \mathtt{min}(x)\land Q_s(x) \to R(x,y)$, or
 $x=y\land \lnot\mathtt{min}(x)\land Q_s(x) \to R(x,y)$.
 
 Step 10 (Elimination of atoms $R(x,y)$ as hypotheses) and the end of the proof are the same as those for 
 $\mathtt{pred}$-$\mathtt{ESO}$-$\mathtt{HORN}$. This achieves the proof of the equality
 $\mathtt{pred}$-$\mathtt{dio}$-$\mathtt{ESO}$-$\mathtt{HORN}=\mathtt{normal}$-$\mathtt{pred}$-$\mathtt{dio}$-$\mathtt{ESO}$-$\mathtt{HORN}$.
 Lemma~\ref{lemmapred} is proved.
$\square$

\medskip
\subsubsection*{Proof of Lemma~\ref{lemmaincl}: Normalization of inclusion Horn formulas}
  
It divides into seven steps.

\medskip
\noindent
{\bf 1) Processing the contradiction clauses:}  Here again, we delay the contradiction and propagate the predicate $R_{\bot}$ till the point $(1,n)$ by the conjunction of the computation clauses $x<y \land R_{\bot}(x+1,y) \to R_{\bot}(x,y)$ and \linebreak
$x<y  \land R_{\bot}(x,y-1) \to R_{\bot}(x,y)$ and of the unique \emph{contradiction clause}
$\mathtt{min}(x) \land \mathtt{max}(y) \land R_{\bot}(x,y)\to \bot$.

\medskip
\noindent
{\bf 2) Processing the input:} We make available the letters of the input word on the only diagonal $x=y$ by introducing new predicates $W^{x+a}_{s}$ and $W^{y+a}_{s}$, for $a \in\mathbb{Z}$, whose intuitive meaning is:
$W^{x+a}_{s}(x,y)\iff Q_s(x+a) \land 1 \leq x+a \leq n$ (resp. $W^{y+a}_{s}(x,y)\iff Q_s(y+a) \land 1 \leq y+a \leq n$). They are inductively defined by the following clauses:\\
$\bullet$ \emph{Initialization clauses} (on the diagonal):\\
$x=y \land Q_s(x) \to W_{s}^x(x,y)$; $(x=y) \land Q_s(x) \to W_{s}^y(x,y)$;\\
 $x=y  \land Q_s(x-a) \land x>a \to W_{s}^{x-a}(x,y)$, and \\
 $x=y \land Q_s(x-a) \land x>a \to W_{s}^{y-a}(x,y)$, for $a>0$;\\
$x=y \land Q_s(y+a) \land y\leq n-a  \to W_{s}^{x+a}(x,y)$, and \\
$x=y \land Q_s(y+a) \land y\leq n-a \to W_{s}^{y+a}(x,y)$, for $a>0$.\\
$\bullet$ \emph{Transport clauses}, for $a\in \mathbb{Z}$:\\
$x<y \land W^{x+a}_{s}(x,y-1) \to W^{x+a}_{s}(x,y)$, and\\
$x<y \land W^{y+a}_{s}(x+1,y) \to W^{y+a}_{s}(x,y)$.

This allows to replace each input atom $Q_s(x+a)$ (resp. $Q_s(y+a)$),  $a\in \mathbb{Z}$, by the computation atom 
$W^{x+a}_{s}(x,y)$ (resp. $W^{y+a}_{s}(x,y)$), in all the clauses, except in the initialization clauses.

\medskip
Note that after step 2 the atoms on input predicates $Q_s$ occur (see the initialization clauses above) always jointly with $x=y$ and in the only three forms $Q_s(x)$, $Q_s(x-a)$ (jointly with $x>a$), or $Q(y+a)$ (jointly with $y\le n-a$), for $a>0$.  

 \medskip
\noindent
{\bf 3) Processing the $\mathtt{min} / \mathtt{max}$ literals:}
One may consider that the only literals on $x$ involving $\mathtt{min}$ or $\mathtt{max}$ are of the forms 
 $x=a$, $x>a$, for an integer $a\ge1$, or $x=n-a$, $x<n-a$, for $a\ge 0$, and similarly for $y$.


As we have done for the $Q_s$, we want to make available the information about $\mathtt{min}$ and $\mathtt{max}$, i.e., about extrema, on the only diagonal $x=y$. We introduce for that new computation predicates defined inductively: 
$R^{x=a}$, $R^{x>a}$, for $a\ge1$, and $R^{x=n-a}$, $R^{x<n-a}$, for $a\ge 0$, and similarly for $y$, with obvious intuitive meaning: for instance,  $R^{y>a}(x,y) \iff y>a$.
For example, define the predicate $R^{x=a}$
(resp. $R^{y=a}$) by the two clauses $x=y \land x=a \to R^{x=a}(x,y)$ and $x<y \land R^{x=a}(x,y-1)\to R^{x=a}(x,y)$ (resp. $x=y \land x=a \to R^{y=a}(x,y)$ and $x<y \land
R^{y=a}(x+1,y)\\
\to R^{y=a}(x,y)$). As another example, define $R^{x<n-a}$ by the clauses $x=y \land y<n-a \to R^{x<n-a}(x,y)$ and $x<y \land R^{x<n-a}(x,y-1)\to R^{x<n-a}(x,y)$.

This allows to replace the ``extremum'' atoms $x=a$, $x>a$, $x=n-a$, $x<n-a$ by the respective computation \linebreak 
atoms $R^{x=a}(x,y)$, $R^{x>a}(x,y)$, $R^{x=n-a}(x,y)$, $R^{x<n-a}(x,y)$, in all the clauses, except in the initialization clauses and in the contradiction clause. And similarly for $y$.

\medskip
The important fact is that after step~3, the predicate $\mathtt{min}$ (resp. $\mathtt{max}$) only occurs in the form $x=a$ or $x>a$ (resp. in the form $y=n-a$ or $y<n-a$) and \emph{always occurs jointly with} $x=y$, i.e., is only used on the diagonal.
 
 \medskip
\noindent
{\bf 4) Restriction of computation atoms to $R(x+1,y)$, $R(x,y-1)$, and $R(x,y)$:} This is a variant of the similar step in the normalization of predecessor logics (step 3). We introduce now new ``shift'' predicates $R^{x+a}$, $R^{y-b}$ and $R^{x+a,y-b}$, for fixed integers $a,b>0$ and $R~\in~\mathbf{R}$, with easy interpretation and definitions.
In particular, the intuitive interpretation of the predicate $R^{x+a,y-b}$ is: 
$R^{x+a,y-b}(x,y) \iff x+a\le y-b \land R(x+a,y-b)$. \linebreak
As an example, the ``normalized'' clause $x<y  \land S^{x+2,y-2}(x+1,y)\to S^{x+3,y-2}(x,y)$ defines the predicate 
$S^{x+3,y-2}$ from the predicate $S^{x+2,y-2}$.

 \medskip
\noindent
{\bf Recapitulation:} After step 4, one may consider that each clause is of one of the following forms (1-3):
 \begin{enumerate}
   
\item an \emph{initialization clause} $x=y \land \delta \to R(x,y)$, where~$\delta$~is
 \begin{itemize} 
\item 
 either an input atom $Q_s(x)$, 
 \item 
 or an equality $x=a$, for a fixed $a\ge 1$, or $y=n-b$, for a fixed $b\ge 0$,
\item 
 or a conjunction $Q_s(x-a) \land x>a$, or\\ $Q_s(y+b) \land y \leq n-b$, for $a,b\ge 1$;
 \end{itemize} 
 
\item a \emph{computation clause} of one of the forms (a,b,c):
\begin{enumerate} 
\item $x<y \land S(x+1,y) \to R(x,y)$;
\item $x<y \land S(x,y-1) \to R(x,y)$;
\item $x \preceq y \land S(x,y) \land T(x,y) \to R(x,y)$, where~$\preceq\;\in \{<,=\}$;
\end{enumerate}  
 
\item ``the'' \emph{contradiction clause}\\
$\mathtt{min}(x) \land \mathtt{max}(y)\land R_{\bot}(x,y)\to \bot$,
which can be rephrased $x=1 \land y=n \land R_{\bot}(x,y)\to \bot$.
\end{enumerate}
 
\medskip
\noindent
\emph{Justification for initialization clauses:} By separation in cases, one easily obtains the above three forms of  initialization clauses.

\smallskip
\noindent
\emph{Justification for computation clauses:} Here again, ``decompose'' each computation clause in clauses of above forms (a,b,c) by introducing new intermediate predicates. For example, the computation clause
$x<y \land R_1(x+1,y)\land R_2(x,y-1) \to R_3(x,y)$
is ``equivalent'' to the conjunction of the following clauses using new predicates $R_4,R_5$:
$x<y \land R_1(x+1,y)\to R_4(x,y)$; 
$x<y \land R_2(x,y-1) \to \\ R_5(x,y)$;
$x<y \land R_4(x,y)\land R_5(x,y)\to R_3(x,y)$.

\medskip
Steps 5 and 6 that follow lie on a generalization of the method used in step 8 of the normalization of predecessor logics above (eliminating $\mathtt{max}$ in the initialization clauses). Roughly expressed, for any computation predicate
$R\in\mathbf{R}$ and a hypothesis $\eta$, we introduce a new predicate $R_{\hyp}^{\eta}$ whose intuitive meaning is:
$R_{\hyp}^{\eta}(x,y) \iff (\eta \to R(x,y))$.

\medskip
\noindent
{\bf 5) Elimination of equalities $x=a$ ($a\ge 1$) and $y=n-b$ ($b\ge 0$), except in the contradiction clause:}\linebreak
Let~$A$ (resp.~B) be the maximum of the integers $a$ (resp. $b$) that occur in the equalities $x=a$ (resp. $y=n-b$) of the clauses.
For each $R\in\mathbf{R}$, we introduce the new predicates  $R_{\hyp}^{x=a}$,  $R_{\hyp}^{y=n-b}$ and 
 $R_{\hyp}^{x=a,y=n-b}$, for all $a\in[1,A]$ and $b\in [0,B]$, whose intuitive meaning has been announced. For example, we should have $R_{\hyp}^{x=a,y=n-b}(x,y) \iff (x~=~a \land y=n-b \to R(x,y))$.
 
 \emph{Transforming the initialization clauses:} Each initialization clause $x=y \land x=a \to R(x,y)$
 (resp. \linebreak
  $x=y \land y=n-b \to R(x,y)$) is transformed into the clause $x=y \to R_{\hyp}^{x=a}(x,y)$
 (resp. $x=y \to R_{\hyp}^{y=n-b}(x,y)$).
 
  \emph{Transforming the computation clauses:} To each clause (a) $x<y \land S(x+1,y) \to R(x,y)$ add the clauses
  \linebreak
  $x<y \land S_{\hyp}^{x=a,y=n-b}(x+1,y) \to R_{\hyp}^{x=a-1,y=n-b}(x,y)$, for all $a\in[2,A]$ and $b\in [0,B]$ (\emph{justification}: the hypothesis $x+1=a$ is equivalent to $x=a-1$).
  Similarly, for each clause (b) $x<y \land S(x,y-1) \to R(x,y)$ add the clauses\linebreak
  $x<y \land S_{\hyp}^{x=a,y=n-b}(x,y-1) \to R_{\hyp}^{x=a,y=n-(b-1)}(x,y)$, for all $a\in[1,A]$ and $b\in [1,B]$.
  Also add for clauses (a,b) the similar (simplified) clauses with only one (instead of two) equality hypothesis. 
  For example, for clause (a) we add the clauses 
  $x<y \land S_{\hyp}^{x=a}(x+1,y) \to R_{\hyp}^{x=a-1}(x,y)$, for all $a\in[2,A]$, 
  and $x<y \land S_{\hyp}^{y=n-b}(x+1,y) \to R_{\hyp}^{y=n-b}(x,y)$, for all $b\in [0,B]$. 
  
  For each clause (c) $x \preceq y \land S(x,y) \land T(x,y) \to R(x,y)$, where~$\preceq\;\in \{<,=\}$, add clauses that \emph{cumulate} the hypotheses provided they are \emph{compatible}. More precisely, for all $a\in[1,A]$ and $b\in [0,B]$ and any two compatible (possibly empty) subsets $\eta, \theta$ of the set of two hypotheses $\{x=a,y=n-b\}$, we have the clause
  $x \preceq y \land S_{\hyp}^{\eta}(x,y) \land T_{\hyp}^{\theta}(x,y) \to R_{\hyp}^{\eta\cup\theta}(x,y)$.
  For example,\\ $x \preceq y \land S_{\hyp}^{x=a}(x,y) \land T_{\hyp}^{y=n-b}(x,y) \to R_{\hyp}^{x=a,y=n-b}(x,y)$
  and $x \preceq y \land S_{\hyp}^{y=n-b}(x,y) \land T_{\hyp}^{x=a,y=n-b}(x,y) \to R_{\hyp}^{x=a,y=n-b}(x,y)$.
  
  \emph{Processing the contradiction clause:} The contradiction clause
   is equivalent to 
    $x=1 \land y=n \land (x=1 \land y~=~n \to R_{\bot}(x,y))\to \bot$. Consequently, it should be replaced by the clause
    $x=1 \land y=n \land (R_{\bot})_{\hyp}^{x=1,y=n}(x,y)\to \bot$, which is the contradiction clause required if the predicate $(R_{\bot})_{\hyp}^{x=1,y=n}$ is renamed $R_{\bot}$.
  
\medskip
\noindent
{\bf 6) Elimination of atoms $Q_s(x-a)$, $Q_s(y+b)$ ($a,b>0$):}
This step is quite similar to previous step 5: it is described in the appendix.
 
\medskip
\noindent
{\bf Recapitulation:} After step 6, all the initialization clauses are of the form $x=y \land Q_s(x) \to R(x,y)$ as required\footnote{
Note that an initialization clause of the form $x=y \to R(x,y)$ can be rewritten 
$\bigwedge_{s\in\Sigma}(x=y \land Q_s(x) \to R(x,y))$ (separation in cases).}.

\medskip
\noindent
{\bf 7) Elimination of atoms $R(x,y)$ as hypotheses:} This step is exactly similar to the corresponding step~10 of normalizing predecessor logics.

\medskip
This completes the proof of the equality
 $\mathtt{incl}$-$\mathtt{ESO}$-$\mathtt{HORN}=\mathtt{normal}$-$\mathtt{incl}$-$\mathtt{ESO}$-$\mathtt{HORN}$,
 i.e., Lemma~\ref{lemmaincl}.
 $\square$

\section{Equivalence between our logics and CA complexity classes}\label{sec:Logic_CA}

The communication scheme of real-time classes finds a natural expression in our normalized logics. The \emph{input clauses}, the only clauses using unary predicates $Q_s$, express the way the input is fed to the automaton. The \emph{computation clauses} with a computation atom $R(x,y)$, $R\in \mathbf{R}$, as their conclusion simulate the computation of the CA. Deducing or not deducing the ``false'' by contradiction clauses means rejection or acceptance. Each of our normalized logics can be described graphically on a grid, indexed by $[1,n]^2$ (see Figure~\ref{fig_grid}).

\subsection*{Encoding automata states by predicates}
The set $\mathbf{R}$ of predicates that will be used in formulas expressing the computation of an automaton $\mathcal{A}$, with $Q$ the set of its states, is $\mathbf{R}=\{R_q \mid q \in Q\}$. The intuitive meaning of this predicates is the following: $R_q(c,t)$ is true $\iff$ the cell $c$ is in the state $q$ at time $t$.

\subsection*{Encoding predicates by automata states}
Let $\Phi$ be a formula defining a language $L$, in one of our logics, of the form $\Phi = \exists \mathbf{R}\forall x \forall y \psi(x,y)$ with $\mathbf{R}=(R_1,\dots,R_m)$ and $R_1=R_{\bot}$. The set of states that will be used by an automaton $\mathcal{A}$ accepting $L$ is $Q=\{\sharp,\lambda \} \cup \{0,1\}^m$ with $\sharp$ the permanent state and $\lambda$ the quiescent state. We denote by $(q)_i$ ($i \in [1,m]$), the $i^{th}$ bit of a state $q \in \{0,1\}^m$. Intuitively, $(q)_i$ refers to the predicate $R_i$. Since the predicate $R_1=R_{\bot}$ represents the ``false'', the set of accepting states is the set of states $q\in \{0,1\}^m$ with the first bit $(q)_1$ equal to 0, that is $Q_{accept}=\{0\} \times \{0,1\}^{m-1}$.


\subsection{Logical characterization of $\mathtt{RealTime_{CA}}$}\label{subsec:CA}

In this section, we will show that the languages accepted in real-time by two-way CA with input fed in a parallel way and output read on the first cell are exactly the languages defined by a formula of $ \mathtt{pred}$-$ \mathtt{ESO}$-$\mathtt{HORN}$.

\begin{theorem}\label{theorem_CA}
  $\mathtt{RealTime_{CA}} =  \mathtt{pred}$-$\mathtt{ESO}$-$\mathtt{HORN}$
  \end{theorem}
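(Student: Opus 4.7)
The proof is by double inclusion, and the plan leverages the equality $\mathtt{RealTime}_{\mathtt{CA}} = \mathtt{RealTime}_{\mathtt{OIA}}$ recalled in Section~II.A to work with one-way iterative arrays, whose space-time shape fits the normal form of Lemma~\ref{lemmapred}. The main geometric observation is that the shear $(x,y)\mapsto (c,t):=(x,\,x+y-1)$ is the reversible deformation announced in Subsection~\ref{results}: it bijects the square $[1,n]^2$ onto the OIA's parallelogram-shaped active region $\{(c,t):1\le c\le n,\ c\le t\le c+n-1\}$. Under this shear the input column $x=1$ becomes the sequential input stream at cell $1$ (the letter $w_y$ enters at time $t=y$); each dependency on $(x-1,y)$ (resp.\ $(x,y-1)$) becomes a dependency on cell $c-1$ (resp.\ $c$) at time $t-1$, matching the OIA neighborhood $\{-1,0\}$; and the output point $(n,n)$ becomes $(n,2n-1)$, the real-time instant of an OIA on an input of length $n$.

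For the inclusion $\mathtt{pred}$-$\mathtt{ESO}$-$\mathtt{HORN}\subseteq\mathtt{RealTime}_{\mathtt{CA}}$, I would first apply Lemma~\ref{lemmapred} to replace $\Phi$ by a normalized equivalent $\Phi'$. The states of the simulating OIA are tuples in $\{0,1\}^{|\mathbf{R}|}$ together with the quiescent state $\lambda$, the $i$-th bit recording whether $R_i$ currently holds at the corresponding grid point; the input clauses of $\Phi'$ define the sequential transition $\delta_{\mathrm{input}}$, the computation clauses (whose hypotheses are only $S(x-1,y)\wedge\lnot\mathtt{min}(x)$ or $S(x,y-1)\wedge\lnot\mathtt{min}(y)$) define the two-argument transition $\delta$, and the contradiction clause on $R_{\bot}$ is read as the rejection test of cell $n$ at time $2n-1$. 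Invoking $\mathtt{RealTime}_{\mathtt{OIA}}=\mathtt{RealTime}_{\mathtt{CA}}$ closes this direction.

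For the converse $\mathtt{RealTime}_{\mathtt{CA}}\subseteq\mathtt{pred}$-$\mathtt{ESO}$-$\mathtt{HORN}$, I would start from a real-time OIA $\mathcal{A}=(Q,\Sigma,Q_{accept},\{-1,0\},\delta,\delta_{\mathrm{input}})$ accepting $L$. For each $q\in Q$ I introduce a binary predicate $R_q$ whose intended meaning is \emph{``cell $x$ is in state $q$ at time $x+y-1$''}, and set $\Phi_{\mathcal{A}}:=\exists(R_q)_{q\in Q}\,\forall x\forall y\,\psi$ where $\psi$ is the conjunction of: the input clauses $\mathtt{min}(x)\wedge\mathtt{min}(y)\wedge Q_s(y)\to R_q(x,y)$ for each $s$ with $\delta_{\mathrm{input}}(\lambda,s)=q$; the sequential-input clauses $\mathtt{min}(x)\wedge\lnot\mathtt{min}(y)\wedge Q_s(y)\wedge R_{q'}(x,y-1)\to R_q(x,y)$ for $\delta_{\mathrm{input}}(q',s)=q$; the border clauses $\lnot\mathtt{min}(x)\wedge\mathtt{min}(y)\wedge R_{q_1}(x-1,y)\to R_q(x,y)$ for $\delta(q_1,\lambda)=q$; the generic clauses $\lnot\mathtt{min}(x)\wedge\lnot\mathtt{min}(y)\wedge R_{q_1}(x-1,y)\wedge R_{q_2}(x,y-1)\to R_q(x,y)$ for $\delta(q_1,q_2)=q$; and the rejection clauses $\mathtt{max}(x)\wedge\mathtt{max}(y)\wedge R_q(x,y)\to\bot$ for every $q\notin Q_{accept}$. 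Every clause fits Definition~\ref{def:predlogics}. A straightforward induction on $x+y$ on the minimal model proves that $R_q(x,y)$ holds precisely when $\mathcal{A}$'s cell $x$ is in state $q$ at time $x+y-1$, so $\langle w\rangle\models\Phi_{\mathcal{A}}$ iff $\mathcal{A}$ accepts $w$ in real time.

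The main technical obstacle has already been met by Lemma~\ref{lemmapred}: eliminating long-range hypotheses $S(x-a,y-b)$ and the flipped form $S(y-b,x-a)$, and folding the square onto the over-diagonal triangle. Granted that normal form, the theorem reduces to identifying the shear above as the reversible deformation between $\mathtt{GRID}_1$ and the OIA space-time diagram, together with the routine check that the boundary case $y=1$ (where no previous-time state is available) is absorbed by the quiescent state $\lambda$ in the border clauses.
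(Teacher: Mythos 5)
Your proposal is correct and its overall architecture coincides with the paper's: both reduce the theorem, via the normalization Lemma~\ref{lemmapred} and the folklore equality $\mathtt{RealTime_{CA}}=\mathtt{RealTime_{OIA}}$, to the shear $(x,y)\mapsto(c,t)=(x,\,x+y-1)$ identifying $\mathtt{GRID}_1$ with the space-time diagram of a real-time OIA, and your proof of $\mathtt{normal}$-$\mathtt{pred}$-$\mathtt{ESO}$-$\mathtt{HORN}\subseteq\mathtt{RealTime_{OIA}}$ (states in $\{0,1\}^{|\mathbf{R}|}$ plus $\lambda$, input clauses giving $\delta_{\mathtt{input}}$, computation clauses giving $\delta$, rejection read at $(n,2n-1)$) is Lemma~\ref{lemme_pred_CA} almost verbatim. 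The one place you genuinely diverge is the converse inclusion: the paper (Lemma~\ref{lemme_CA_pred}) shears the two-way CA into a \emph{parallel-input} OCA with neighborhood $\{-2,-1,0\}$ running in time $n$, so its clauses carry hypotheses $R_{q_2}(c-2,t-1)\land R_{q_1}(c-1,t-1)\land R_{q_0}(c,t-1)$ guarded by $c>t$ versus $c=t$, exploiting the fact that Definition~\ref{def:predlogics} admits arbitrary shifts $S(x-a,y-b)$; you instead transcribe a real-time OIA directly, obtaining clauses whose computation hypotheses are only $R(x-1,y)$ and $R(x,y-1)$, with the boundaries $x=1$ and $y=1$ isolated by $\mathtt{min}/\lnot\mathtt{min}$ guards and the quiescent state. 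Both are sound; the paper's version illustrates the slack of the unnormalized logic, while yours lands essentially in normal form already and makes the two directions visibly inverse to each other. Two details to tighten when writing it out: in the first direction, the normalized computation clauses of the form $\lnot\mathtt{min}(y)\land\bigwedge_{h}R_h(x,y-1)\to R_i(x,y)$ also fire at $x=1$ and must be folded into $\delta_{\mathtt{input}}$ together with the input clauses (the paper does this explicitly in its definition of $\delta_{\mathtt{input}}$); and the correctness equivalence should be stated, as in the paper, for the \emph{minimal model} of the Horn part $\psi'$, so that truth of $R_q(x,y)$ exactly characterizes, rather than merely follows from, the state of cell $x$ at time $x+y-1$.
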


The proof will use the following inclusion scheme:
\begin{center}
  $ \mathtt{pred}$-$\mathtt{ESO}$-$\mathtt{HORN} = \mathtt{normal}$-$\mathtt{pred}$-$\mathtt{ESO}$-$\mathtt{HORN}\subseteq \mathtt{RealTime_{OIA}}$\\
 $= \mathtt{RealTime_{CA}}  \subseteq \mathtt{pred}$-$\mathtt{ESO}$-$\mathtt{HORN} $
\end{center}

The equality $ \mathtt{pred}$-$\mathtt{ESO}$-$\mathtt{HORN} = \mathtt{normal}$-$\mathtt{pred}$-$\mathtt{ESO}$-$\mathtt{HORN}$ has already been proved in Section 3 and the other equality $ \mathtt{RealTime_{OIA}}= \mathtt{RealTime_{CA}}$ is folklore in automata theory. Two inclusions are left to be proved. 

\begin{lemma}\label{lemme_CA_pred}
$\mathtt{RealTime_{CA}}  \subseteq \mathtt{pred}$-$\mathtt{ESO}$-$\mathtt{HORN}$
\end{lemma}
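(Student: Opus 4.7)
The plan is to exploit the folklore equality $\mathtt{RealTime_{CA}} = \mathtt{RealTime_{OIA}}$ (one-way iterative array with sequential input, neighborhood $\{-1,0\}$, and output on the last cell), because the space-time diagram of an OIA literally has the $\mathtt{GRID}_1$ shape. Given such an OIA $\mathcal{A} = (Q, \Sigma, Q_{accept}, \{-1,0\}, \delta, \delta_{input})$ accepting $L$ in real-time $2n-1$, we shall build a formula $\Phi \in \mathtt{pred}$-$\mathtt{ESO}$-$\mathtt{HORN}$ with binary computation predicates $\mathbf{R} = \{R_q : q \in Q\}$. The bijection we choose between the grid $[1,n]^2$ and the relevant space-time region sends $(x,y)$ to the pair (cell $x$, time $x+y-1$). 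This is the ``simple reversible deformation'' mentioned in the introduction: the input column of the OIA lands on $x=1$ (so the atom $Q_s(y)$ exactly encodes that $w_y$ enters cell~$1$ at time $y$), the bottom row $y=1$ is the first time each cell becomes active, and the output instant (cell $n$ at time $2n-1$) lands precisely at the corner $(n,n)$.

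Next, we express the deterministic OIA transition by four families of Horn clauses. For each $s \in \Sigma$ and $q,q' \in Q$, we include (i)~the initial clause $\mathtt{min}(x) \land \mathtt{min}(y) \land Q_s(y) \to R_{\delta_{input}(\lambda,s)}(x,y)$; (ii)~the sequential-input clause $\mathtt{min}(x) \land \lnot\mathtt{min}(y) \land Q_s(y) \land R_q(x,y-1) \to R_{\delta_{input}(q,s)}(x,y)$; (iii)~the bottom-row clause that handles the quiescent second neighbor, $\lnot\mathtt{min}(x) \land \mathtt{min}(y) \land R_q(x-1,y) \to R_{\delta(\lambda,q)}(x,y)$; and (iv)~the generic interior clause $\lnot\mathtt{min}(x) \land \lnot\mathtt{min}(y) \land R_q(x,y-1) \land R_{q'}(x-1,y) \to R_{\delta(q,q')}(x,y)$. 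Finally, for every non-accepting state $q \notin Q_{accept}$, we add the contradiction clause $\mathtt{max}(x) \land \mathtt{max}(y) \land R_q(x,y) \to \bot$. All of these clauses fit into the syntax of Definition~\ref{def:predlogics}: their hypotheses are either input literals of the permitted forms or computation atoms $R(x-1,y)$, $R(x,y-1)$.

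Correctness will then follow from a straightforward induction on $x+y$, showing that in the minimal Herbrand model of $\Phi$ extending $\langle w \rangle$, the atom $R_q(x,y)$ holds iff cell $x$ of $\mathcal{A}$ is in state $q$ at time $x+y-1$ on input $w$. The four families (i)--(iv) cover the four cases $(x=y=1)$, $(x=1<y)$, $(x>1=y)$, $(x,y>1)$, each mimicking exactly one case of the OIA rule; the grid atoms $R(x-1,y)$ and $R(x,y-1)$ encode, under our coordinate change, the two space-time predecessors of a cell (with $\lambda$ standing in for the absent neighbor on a boundary). Since $\mathcal{A}$ is deterministic, exactly one $R_q$ is derivable at each grid point, in particular at $(n,n)$; therefore $\bot$ is derivable (so $\Phi$ is unsatisfiable) exactly when the state of cell $n$ at time $2n-1$ is non-accepting, i.e., exactly when $w \notin L(\mathcal{A})$. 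This gives $L(\Phi)=L$ and completes the inclusion.

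The only subtleties, rather than true obstacles, are the choice of the affine coordinate change $(c,t) \mapsto (c, t-c+1)$ making the two space-time predecessors land at the grid predecessors $(x-1,y)$ and $(x,y-1)$ while sending the output instant to $(n,n)$, and the bookkeeping of the quiescent state $\lambda$ at the two boundaries (absorbed by clauses (i) and (iii)). Note that no normalization is needed at this stage: since we only aim at membership in $\mathtt{pred}$-$\mathtt{ESO}$-$\mathtt{HORN}$ (and not in its normal form), the richer syntax of Definition~\ref{def:predlogics} is available, and it is Lemma~\ref{lemmapred} that later takes care of the converse direction of Theorem~\ref{theorem_CA}.
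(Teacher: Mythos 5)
Your proof is correct, but it takes a genuinely different route from the paper's. The paper does not pass through the folklore equality $\mathtt{RealTime_{CA}}=\mathtt{RealTime_{OIA}}$ for this inclusion; instead it applies the shear $c\mapsto c+t-1$ directly to the two-way CA, obtaining an OCA $\mathcal{A}'$ with neighborhood $\{-2,-1,0\}$, parallel input and running time $n$, and then encodes $\mathcal{A}'$ using the space-time coordinates $(c,t)$ themselves as the two first-order variables. Its input clauses therefore sit on the side $t=1$ (of the form $\mathtt{min}(t)\land Q_s(c)\to R_s(c,t)$), and its computation clauses contain atoms $R_{q_2}(c-2,t-1)$ together with the guards $c>t$ and $c=t$ to single out the border cell --- guards that compare the two variables and hence implicitly rely on the auxiliary predicates $R_<$, $R_=$ of step~6 of the normalization to fit the letter of Definition~\ref{def:predlogics}. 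Your version, reducing to an OIA and shearing by $(c,t)\mapsto(c,t-c+1)$, places the input on the side $x=1$ and produces only unit-shift atoms $R(x-1,y)$, $R(x,y-1)$ guarded by $(\lnot)\mathtt{min}$ literals; the resulting clauses are already close to the normal form of Lemma~\ref{lemmapred} and stay strictly within the raw syntax of the logic, at the price of invoking the $\mathtt{RealTime_{CA}}\subseteq\mathtt{RealTime_{OIA}}$ direction of the folklore equivalence (which the paper assumes anyway, in the reverse direction, in its inclusion scheme for Theorem~\ref{theorem_CA}). The correctness core --- one Horn clause per transition rule, boundary cases absorbing the quiescent and permanent states, induction on $x+y$ in the minimal model, contradiction clauses at $(n,n)$ for the non-accepting states --- is the same in both proofs, so each approach is a legitimate instance of the ``reversible deformation'' announced in the introduction, yours being the exact inverse of the construction the paper uses for the converse inclusion in Lemma~\ref{lemme_pred_CA}.
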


\begin{proof}
We will show that for each automaton $\mathcal{A} \in \mathtt{RealTime_{CA}}$ we can build a formula $\Phi \in \mathtt{pred}$-$\mathtt{ESO}$-$\mathtt{HORN}$ such that:
  $w \in L(\mathcal{A})\iff \langle w \rangle \text{ satisfies the formula } \Phi$.

First of all, one can be easily convinced that the computation power of a CA $\mathcal{A}\in \mathtt{RealTime_{CA}}$ with neighborhood $\mathcal{N}_{\mathcal{A}}=\{-1,0,1\}$ is equal to the computation power of a OCA $\mathcal{A}'$ with neighborhood $\mathcal{N}_{\mathcal{A}'}=\{-2,-1,0\}$ running within the same computation time $n$ where $n$ is the size of the input (see Figure~\ref{NeighborhoodChange}). This transformation can be seen on the space-time diagram of $\mathcal{A}$ with the variable change: $c \mapsto c+t-1$, where $c$ is the index of the cell and $t$ the time step of the computation. This geometrical transformation does not change the computation power so that: $L(\mathcal{A}) = L(\mathcal{A'})$.


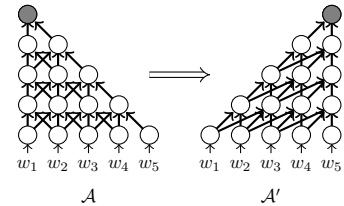
\begin{figure}[h]
\centering
\begin{tikzpicture}[scale=0.4,every node/.style={scale=0.7}]

\foreach \x in {1,2,...,4}{
\foreach \y in {\x,...,4}{
\draw[-to,shorten >=-1pt,thick](\x,5-\y) to (\x,5-\y+0.7);
}}

\foreach \x in {1,2,...,3}{
\foreach \y in {\x,...,3}{
\draw[-to,shorten >=-1pt,thick](\x,4-\y) to (\x+0.7,4-\y+0.7);
}}

\foreach \x in {1,2,...,4}{
\foreach \y in {\x,...,4}{
\draw[-to,shorten >=-1pt,thick](\x+1,5-\y) to (\x+0.3,5-\y+0.7);
}}

\foreach \x in {1,2,...,5}{
\foreach \y in {\x,...,5}{
\filldraw[fill=white] (\x,6-\y) circle (0.3cm) ;
}}

\filldraw[fill=gray] (1,5) circle (0.3cm) ;
\foreach \x in {1,2,...,5}{
\draw[->](\x,0.4) to (\x,0.7);
\draw(\x,0) node{$w_{\x}$};
}
\draw(3,-1) node{$\mathcal{A}$};

\draw[-implies,double equal sign distance] (5,3) -- (7,3);

\foreach \x in {1,2,...,4}{
\foreach \y in {\x,...,4}{
\draw[-to,shorten >=-1pt,thick](11-\x,5-\y) to (11-\x+0.7,5-\y+0.7);
}}

\foreach \x in {1,2,...,3}{
\foreach \y in {\x,...,3}{
\draw[-to,shorten >=-1pt,thick](10-\x,4-\y) to (10-\x+1.7,4-\y+0.8);
}}

\foreach \x in {1,2,...,4}{
\foreach \y in {\x,...,4}{
\draw[-to,shorten >=-1pt,thick](11-\x+1,5-\y) to (11-\x+1,5-\y+0.7);
}}

\foreach \x in {1,2,...,5}{
\foreach \y in {\x,...,5}{
\filldraw[fill=white] (12-\x,6-\y) circle (0.3cm) ;
}}

\filldraw[fill=gray] (11,5) circle (0.3cm) ;
\foreach \x in {1,2,...,5}{
\draw[->](\x+6,0.4) to (\x+6,0.7);
\draw(\x+6,0) node{$w_{\x}$};
}
\draw(9,-1) node{$\mathcal{A'}$};
\end{tikzpicture}
\caption{\label{NeighborhoodChange}Neighborhood's change: from $\{-1,0,1\}$ to $\{-2,-1,0\}$}
  \end{figure}


\emph{Input:} The parallel input of $\mathcal{A}'$ is expressed by the conjunction $\psi_{input}=\bigwedge_{s\in \Sigma} (\mathtt{min}(t) \land Q_s(c) \to R_s(c,t))$ for $\Sigma$ the input alphabet: each cell $c$ is in the state $w_c\in \Sigma$ at the start of the computation.\\[2pt]
\emph{Computation:} The state evolution of a cell of $\mathcal{A}'$ is given by the transition function:\\
$ \langle c,t \rangle = \delta_{\mathcal{A}'} (\langle c-2,t-1 \rangle,\langle c-1,t-1 \rangle,\langle c,t-1 \rangle )$. \\A transition rule $\delta_{\mathcal{A}'}(q_2,q_1,q_0)=q$ for $q_2\neq \sharp$ is expressed by the computation clause:\\
$c>t \land \lnot\mathtt{min}(t) \land R_{q_{2}}(c-2,t-1) \land R_{q_{1}}(c-1,t-1)\linebreak\land R_{q_{0}}(c,t-1)\to R_q(c,t)$.\\
\noindent
The specific case where $q_2$ is the permanent state $\sharp$ ($\delta_{\mathcal{A}'}(\sharp,q_1,q_0)=q$) is handled by the clause:\\
$c=t \land \lnot\mathtt{min}(t) \land R_{q_{1}}(c-1,t-1)\land R_{q_{0}}(c,t-1)\to R_q(c,t)$.\\
  We denote $\psi_{compute}$ the conjunction of the above two sets of clauses.\\[2pt]
  \emph{Remark:} By construction of $\psi_{input}$ and $\psi_{compute}$, we have the following equivalence for $\psi'\coloneqq \psi_{input}\land\psi_{compute}$.
  The computation atom $R_q(c,t)$ is true in the minimal model ($\langle w\rangle,\mathbf{R}$) of $\forall c \forall t  \psi'(c,t)$
  $\iff$ the cell $c$ is in the state $q$ at time $t$.\\[2pt]
  \emph{Output:} The contradiction clauses express the output on the last cell:\\
$\psi_{output} \coloneqq \bigwedge_{q\in Q\setminus Q_{accept}}(\mathtt{max}(c) \land \mathtt{max}(t) \land R_q(c,t) \to \bot)$

  The formula $\psi$ expressing the computation of $\mathcal{A}'$ is the conjunction  $\psi \coloneqq \psi'\land \psi_{output}$.\\

  \fbox{\begin{minipage}{0.9\linewidth}
      For each word $w=w_1\dots w_n \in \Sigma^+$ we have the following equivalences:

      \smallskip
  \begin{center}   
The cell $1$ of $\mathcal{A}$ is in an accepting state $q$ at time $n$\\
$\iff$\\
The cell $n$ of $\mathcal{A}'$ is in an accepting state $q$ at time $n$\\
$\iff$\\
The conjunction $\bigwedge_{q\in Q\setminus Q_{accept}}R_q(n,n)$ is false 
in the minimal model $(\langle w \rangle, \mathbf{R})$ of $\forall c \forall t \psi'(c,t)$\\
$\iff$\\
$\langle w \rangle$ satisfies the formula  $\exists\mathbf{R}\forall c \forall t \psi(c,t)$.
\end{center}
\end{minipage}
}
\smallskip

This proves $L(\mathcal{A}) \in \mathtt{pred}$-$\mathtt{ESO}$-$\mathtt{HORN}$.
\end{proof}
\smallskip

\begin{lemma}\label{lemme_pred_CA}
  $ \mathtt{normal}$-$\mathtt{pred}$-$\mathtt{ESO}$-$\mathtt{HORN}\subseteq \mathtt{RealTime_{OIA}}$
\end{lemma}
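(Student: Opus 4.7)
The plan is to translate each normalized formula $\Phi = \exists\mathbf{R}\,\forall x\forall y\,\psi(x,y) \in \mathtt{normal}$-$\mathtt{pred}$-$\mathtt{ESO}$-$\mathtt{HORN}$ literally into the space-time diagram of a real-time OIA via the shear $(x,y)\mapsto(c,t):=(x,\,x+y-1)$. Under this map, the two grid predecessors $(x-1,y)$ and $(x,y-1)$ used in a computation clause go exactly to $(c-1,t-1)$ and $(c,t-1)$, i.e., to the one-way neighborhood $\mathcal{N}=\{-1,0\}$ at the previous time step. The input column $x=1$ is traversed as $(c,t)=(1,1),(1,2),\ldots,(1,n)$, which is precisely the sequential input protocol of cell $1$ of an OIA; and the output corner $(n,n)$ becomes $(c,t)=(n,2n-1)$, the real-time instant for cell $n$.

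Write $\mathbf{R}=\{R_1,\ldots,R_m\}$ with $R_1=R_{\bot}$, and build the OIA $\mathcal{A}$ with state set $Q=\{\sharp,\lambda\}\cup\{0,1\}^m$, accepting set $Q_{accept}=\{0\}\times\{0,1\}^{m-1}$, and neighborhood $\{-1,0\}$. A state $(q_1,\ldots,q_m)\in\{0,1\}^m$ held by cell $c$ at time $t$ is designed to encode the tuple of truth values $(R_1(c,t-c+1),\ldots,R_m(c,t-c+1))$ in the minimal model of $\psi$ extending $\langle w\rangle$. The input transition of cell $1$ reading $w_t$ produces the bit vector prescribed by the input clauses of $\Phi$: if $t=1$ then the clauses with hypothesis $\mathtt{min}(x)\wedge\mathtt{min}(y)\wedge Q_s(y)$ fire, and if $t>1$ the clauses with $\mathtt{min}(x)\wedge\lnot\mathtt{min}(y)\wedge Q_s(y)$ fire, with $s=w_t$. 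The two-argument transition $\delta(q^{\mathrm{left}},q^{\mathrm{self}})$ produces, in each coordinate $i$, the disjunction over all computation clauses of $\Phi$ with conclusion $R_i(x,y)$ of the conjunction obtained by reading each hypothesis $S(x-1,y)\wedge\lnot\mathtt{min}(x)$ as the appropriate bit of $q^{\mathrm{left}}$ and each hypothesis $S(x,y-1)\wedge\lnot\mathtt{min}(y)$ as the appropriate bit of $q^{\mathrm{self}}$. The permanent state $\sharp$ on the left and the quiescent state $\lambda$ as self are treated as ``all bits $0$'', which is consistent with the normal form because every computation clause using $R(x-1,y)$ already carries $\lnot\mathtt{min}(x)$ and every clause using $R(x,y-1)$ already carries $\lnot\mathtt{min}(y)$, so such clauses simply do not fire at the grid boundary.

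Correctness is then a routine induction on the pair $(c,t)$ with $1\le c\le t\le c+n-1$: the base case $t=c$ follows from the input clauses for $c=1$, and from the clauses whose only hypothesis is of the form $R(x-1,y)\wedge\lnot\mathtt{min}(x)$ firing between cell $c-1$ and the still-quiescent cell $c$ for $c>1$. The inductive step encodes the firing of all applicable computation clauses on $R(x-1,y)$ and $R(x,y-1)$. Evaluating at cell $n$ at time $2n-1$ therefore recovers the bit $R_{\bot}(n,n)$, so by the shape of the unique contradiction clause cell $n$ is in an accepting state iff the minimal model does not satisfy $R_{\bot}(n,n)$, iff $\langle w\rangle\models\Phi$, iff $w\in L$. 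Since the decision is made at the minimal time at which cell $n$ could have received all of $w_1,\ldots,w_n$, the OIA runs in real time.

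The main point to watch is the interplay between the monotone fixed-point semantics of the Horn formula and the deterministic step-by-step OIA transition, especially at the spatial and temporal boundaries. Normalization does all the heavy lifting here: each clause has a unique conclusion atom $R(x,y)$, every hypothesis refers only to the one-step-back left-or-self neighborhood in the sheared coordinates, and the guards $\lnot\mathtt{min}(x)$ and $\lnot\mathtt{min}(y)$ protect exactly the sites where the OIA has nothing yet to read (the non-existent cell $0$, or the cell's own pre-input quiescent state). Once this boundary bookkeeping is set up, the clause-by-clause translation into transitions is mechanical and the induction closes immediately.
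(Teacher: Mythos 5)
Your proposal is correct and follows essentially the same route as the paper: the same shear $(x,y)\mapsto(c,t)=(x,x+y-1)$, the same bit-vector encoding of the computation predicates in the state set $\{\sharp,\lambda\}\cup\{0,1\}^m$, and the same clause-by-clause reading of the transition functions, with the $\lnot\mathtt{min}$ guards handling the boundary exactly as in the paper's construction. The only detail to tighten is that the first cell's input transition $\delta_{\mathtt{input}}(s,q)$ must also OR in the bits coming from computation clauses whose hypotheses read $R(x,y-1)$ from that cell's own previous state $q$ (the paper folds this into $\delta_{\mathtt{input}}$ explicitly), rather than producing only the input-clause bits.
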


\begin{proof}
We will show that for every language $L \subseteq \Sigma ^+$ defined by a formula $\Phi \in  \mathtt{normal}$-$\mathtt{pred}$-$\mathtt{ESO}$-$\mathtt{HORN}$, a one-way iterative array $\mathcal{A} \in \mathtt{RealTime_{OIA}}$ can be constructed such that $L = L(\mathcal{A})$.

\smallskip
Let $\Phi \in \mathtt{normal}$-$\mathtt{pred}$-$\mathtt{ESO}$-$\mathtt{HORN}$ be a formula of the form $\Phi = \exists \mathbf{R}\forall x \forall y \psi(x,y)$ where $\mathbf{R}=(R_1,\dots,R_m)$ with $R_1=R_{\bot}$ and $\psi$ is a conjunction of Horn clauses of the three following forms: 
\begin{itemize}

\item \emph{input clause} of either form: 
\begin{center} 
 $\mathtt{min}(x) \land \mathtt{min}(y)\land Q_s(y)\to R(x,y)$ or
 $\mathtt{min}(x) \land \lnot\mathtt{min}(y)\land Q_s(y)\to R(x,y)$,\\
 \end{center} 

\item the \emph{contradiction clause} 
$\mathtt{max}(x)\land \mathtt{max}(y)\land R_{\bot}(x,y)\to \bot$;

\item  \emph{computation clause} of one of the following forms for some sets $H,H' \subseteq[1,m]$ and $i \in[1,m]$:
  \begin{itemize}
  \item $\lnot\mathtt{min}(x)\land\bigwedge_{h\in H}R_h(x-1,y) \land \lnot\mathtt{min}(y) \land \bigwedge_{h\in H'}R_h(x,y-1) \to R_i(x,y)$;
  \item $\lnot\mathtt{min}(x)\land\bigwedge_{h\in H}R_h(x-1,y) \to R_i(x,y)$;
    \item $\lnot\mathtt{min}(y) \land \bigwedge_{h\in H}R_h(x,y-1) \to R_i(x,y)$.
    \end{itemize}
  
\end{itemize}

 We will denote $\psi'$ the formula $\psi$ without the contradiction clause.

 We first translate the formula $\Phi$ into a dependency graph of type $\mathtt{GRID}_1$ (see Figure~\ref{fig_grid}, with input exclusively on the left side). It will then be turned into a real-time OIA $\mathcal{A}$.  
The transition function is composed by an input transition function $\delta_{\mathtt{input}}$ only applied on the first cell and the general transition function $\delta$ applied on the other cells $c$.\\[2pt] 

\newpage
\noindent For all $i\in[1,m]$ ; $s\in \Sigma$ ; $q,l,r\in Q\setminus \{\sharp,\lambda\}$:\\

\emph{Input transition function:} $\delta_{\mathtt{input}}: \Sigma \times Q \to Q $. The bit $(\delta_{input}(s,\sharp))_i$ is $1$ if and only if there is an input clause  $\mathtt{min}(x) \land \mathtt{min}(y) \land Q_s(y) \to R_i(x,y)$ in $\psi$.
 \smallskip
$(\delta_{input}(s,q))_i$ is $1$  if and only if there exists in $\psi$ an input clause $\mathtt{min}(x) \land \lnot\mathtt{min}(y) \land Q_s(y) \to R_i(x,y)$ or a computation clause $\lnot\mathtt{min}(y)\land \bigwedge_{h\in H}R_h(x,y-1) \to R_i(x,y)$
 with $(q)_h=1$, for all $h\in H$.

 \smallskip
 \emph{Transition function:}  $\delta:Q\times Q \to Q$ applied on all cells $c\in[2,n]$. The bit $(\delta(l,r))_i$ is $1$
   if and only if there  exists in $\psi$ a computation clause $\bigwedge_{h\in H}R_h(x-1,y)\land \lnot\mathtt{min}(x)\land\bigwedge_{h\in H'}R_h(x,y-1)\land \lnot\mathtt{min}(y) \to R_i(x,y)$ such that $(l)_h=1$ for all $h\in H$ and  $(r)_h=1$ for all $h\in H'$.\\
 The bit $(\delta(l,\lambda))_i$ is $1$ if and only if there is in $\psi$ a computation clause $\lnot\mathtt{min}(x)\land\bigwedge_{h\in H}R_h(x-1,y)\to R_i(x,y)$ such that $(l)_h=1$, for all $h\in H$.


\begin{figure}
\centering
\begin{tikzpicture}[scale=0.4,every node/.style={scale=0.7}]

\draw[dashed,ultra thin](1,1) grid (5,9);

\foreach \x in {1,2,...,5}{
\foreach \y in {1,...,4}{
\draw[-to,shorten >=-1pt,thick](\x,\y) to (\x,\y+0.7);
}}

\foreach \x in {1,2,...,4}{
\foreach \y in {1,...,5}{
\draw[-to,shorten >=-1pt,thick](\x,\y) to (\x+0.7,\y);
}}

\foreach \x in {1,2,...,5}{
\foreach \y in {1,...,5}{
\filldraw[fill=white] (\x,\y) circle (0.3cm) ;
}}

\filldraw[fill=gray] (5,5) circle (0.3cm) ;
\foreach \x in {1,2,...,5}{
\draw[->](0.4,\x) to (0.7,\x);
\draw(0,\x) node{$w_{\x}$};
}
\draw(3,-1) node{$(x,y)$};

\draw[-implies,double equal sign distance] (5.5,4) -- (7,4);

\draw[dashed,ultra thin](9,1) grid (13,9);

\foreach \x in {1,2,...,5}{
\foreach \y in {1,...,4}{
\draw[-to,shorten >=-1pt,thick](\x+8,\x+\y-1) to (\x+8,\y+\x-1+0.7);
}}

\foreach \x in {1,2,...,4}{
\foreach \y in {0,...,4}{
\draw[-to,shorten >=-1pt,thick](\x+8,\y+\x) to (\x+8+0.7,\y+\x+0.7);
}}

\foreach \x in {1,2,...,5}{
\foreach \y in {1,...,5}{
\filldraw[fill=white] (\x+8,\x+\y-1) circle (0.3cm) ;
}}

\filldraw[fill=gray] (13,9) circle (0.3cm) ;
\foreach \x in {1,2,...,5}{
\draw[->](8+0.4,\x) to (8+0.7,\x);
\draw(8+0,\x) node{$w_{\x}$};
}

\draw(6,-1) node{$\mapsto$};

\draw(11,-1) node{$(c=x,t=x+y-1)$};

\draw(3,10) node{$\mathtt{GRID_1}$};
\draw(11,10) node{$\mathtt{RealTime_{OIA}}$};

\end{tikzpicture}
\caption{Variable change}\label{fig:VarChangeOIA}
  \end{figure}
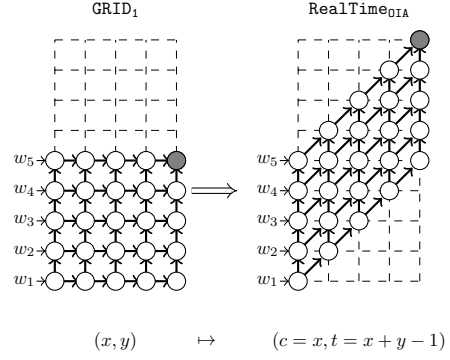


  Let $\mathcal{A}=(Q,\Sigma,Q_{accept},\mathcal{N},\delta_{\mathtt{input}}, \delta)$ be the OIA defined above after making the change of variables of Figure~\ref{fig:VarChangeOIA}:\linebreak {$c=x$;} $t=x+y-1$.
  By construction of $\mathcal{A}$, we have the following equivalences concluding the proof.\\
$ \forall w\in\Sigma^n$:
\begin{enumerate}
 \item $ \forall (a,b)\in[1,n]^2,  \forall i\in[1,m]$, the atom $R_i(a,b)$ is true in the minimal model 
$(\langle w \rangle, \mathbf{R})$ of $\forall x \forall y \; \psi'(x,y)$ $\iff$  The cell $c=a$ is at time $t=a+b-1$ in a state $q$ with $(q)_i=1$.
\item  $\mathcal{A}$ accepts $w$ in real-time 
$\iff$  $\langle w \rangle$ satisfies $\Phi$.
\end{enumerate}
\end{proof}

\subsection{Logical characterization of $\mathtt{RealTime_{IA}}$}\label{subsec:IA}

In this section, we will show that the languages accepted in real-time by IA are exactly the languages defined by formulas of $ \mathtt{pred}$-$\mathtt{dio}$-$ \mathtt{ESO}$-$\mathtt{HORN}$.

\begin{theorem}\label{theorem_IA}
$\mathtt{RealTime_{IA}} =  \mathtt{pred}$-$\mathtt{dio}$-$ \mathtt{ESO}$-$\mathtt{HORN}$.
\end{theorem}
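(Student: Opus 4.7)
The plan is to follow the inclusion chain
\[
\mathtt{pred}\text{-}\mathtt{dio}\text{-}\mathtt{ESO}\text{-}\mathtt{HORN}
= \mathtt{normal}\text{-}\mathtt{pred}\text{-}\mathtt{dio}\text{-}\mathtt{ESO}\text{-}\mathtt{HORN}
\subseteq \mathtt{RealTime_{IA}}
\subseteq \mathtt{pred}\text{-}\mathtt{dio}\text{-}\mathtt{ESO}\text{-}\mathtt{HORN},
\]
which parallels the proof of Theorem~\ref{theorem_CA}. The equality is given by Lemma~\ref{lemmapred}, so the work lies in establishing the two inclusions.

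For the easy inclusion $\mathtt{RealTime_{IA}} \subseteq \mathtt{pred}$-$\mathtt{dio}$-$\mathtt{ESO}$-$\mathtt{HORN}$, I will use the variable change $c = y-x+1$, $t = y$, which identifies the grid diagonal $\{(i,i)\}$ with cell~$1$ at times $1, \ldots, n$ (exactly where the sequential input arrives in an IA) and the grid corner $(n,n)$ with the IA output, cell~$1$ at time~$n$. Under this change, the three IA-predecessors at time $t-1$ of cell~$c$, namely cells $c-1, c, c+1$, correspond to the grid atoms $R(x, y-1)$, $R(x-1, y-1)$, $R(x-2, y-1)$ respectively. Each transition rule $\delta(q_{c-1}, q_c, q_{c+1}) = q$ therefore becomes the Horn clause
\[
R_{q_{c-1}}(x,y-1) \land R_{q_c}(x-1,y-1) \land R_{q_{c+1}}(x-2,y-1) \to R_q(x,y),
\]
whose atoms are all of the form $R(x-a, y-b)$ and so are allowed in $\mathtt{pred}$-$\mathtt{dio}$-$\mathtt{ESO}$-$\mathtt{HORN}$. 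The input rule $\delta_{\mathrm{input}}$ is encoded by input clauses of the form $x = y \land Q_s(x) \land \ldots \to R_q(x,y)$, rejection becomes contradiction clauses $\mathtt{max}(x) \land \mathtt{max}(y) \land R_q(x,y) \to \bot$ for each non-accepting $q$, and boundary effects near cells $1, 2$ and the outer edge $c = n-t+1$ are handled by adding $\mathtt{min}/\mathtt{max}$ guards, exactly as in Lemma~\ref{lemme_CA_pred}.

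The harder inclusion $\mathtt{normal}$-$\mathtt{pred}$-$\mathtt{dio}$-$\mathtt{ESO}$-$\mathtt{HORN} \subseteq \mathtt{RealTime_{IA}}$ is where I expect the real work. The obstacle is a factor-$2$ time mismatch: the normalized $\mathtt{GRID}_3$ has dependency depth $2n-2$ from the input diagonal to the output at $(n,n)$, while a real-time IA has only $n$ time steps. The plan is to pack two consecutive grid anti-diagonals into each IA step. After folding the grid along $x=y$ (as in step~7 of the normalization proof, introducing an inverse predicate $R^{\mathtt{inv}}$), cell~$c$ at IA time~$t$ will store two slots: slot~$1$ holds $R$ and $R^{\mathtt{inv}}$ at the upper-triangle position $(t-c+1,\,t+c-1)$ on the anti-diagonal $x+y=2t$, and slot~$2$ holds them at $(t-c,\,t+c-1)$ on the anti-diagonal $x+y=2t-1$. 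Under this assignment, the IA's triangular real-time region $\{(c,t): 1 \le c \le \min(t,\,n-t+1)\}$, equipped with its two slots per cell, suffices to cover the whole upper triangle $\{(x,y): 1 \le x \le y \le n\}$, and cell~$1$ at time~$n$ carries $R(n,n)$ in slot~$1$.

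The fixed transition function $\delta$ then chains two normalized rewrites per IA step: given the slot-$1$ contents of cells $c-1, c, c+1$ at time $t-1$, it first applies the normalized dependencies $R(x-1,y)$ and $R(x,y-1)$ to fill slot~$2$ at cells $c-1$ and $c$ at time $t$ (producing the intermediate anti-diagonal $x+y=2t-1$), then uses those freshly computed slot-$2$ values to fill slot~$1$ of cell~$c$ at time $t$ (the target anti-diagonal $x+y=2t$). Sequential input is handled at cell~$1$ via the clause $x=y \land Q_s(x) \to R(x,y)$ which injects $w_t$ into slot~$1$ of cell~$1$ at time~$t$, and the contradiction clause at $(n,n)$ translates to rejection at cell~$1$ at time~$n$. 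The main technical burden will be to verify that this encoding is well-defined on all boundaries (the diagonal fold, the input clauses, and the outer edge $c = n-t+1$) and that $\delta$ is a fixed finite function independent of $n$; once this is done, showing that $\mathcal{A}$ accepts $w$ iff $\langle w \rangle \models \Phi$ is a routine induction on $t$ analogous to the one in Lemma~\ref{lemme_pred_CA}.
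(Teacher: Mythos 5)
Your proposal is correct and follows essentially the same route as the paper: the same inclusion chain through the normalization lemma, a $45^\circ$ change of variables for the inclusion $\mathtt{RealTime_{IA}} \subseteq \mathtt{pred}$-$\mathtt{dio}$-$\mathtt{ESO}$-$\mathtt{HORN}$, and for the converse the rotation $c'=y-x+1$, $t'=x+y-1$ followed by a factor-$2$ compression of the space-time diagram. The only differences are cosmetic: the paper packs three virtual sites (one odd--odd plus its two predecessors) into each real IA site via $(c,t)=(\lceil c'/2\rceil,\lceil t'/2\rceil)$ and treats the triangle $x\ge y$ ``similarly,'' whereas you pack two slots per site with a two-stage update and handle the lower triangle by an explicit fold with $R^{\mathtt{inv}}$.
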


The proof of Theorem~\ref{theorem_IA} is close to the one of Theorem~\ref{theorem_CA} and is obtained by the following inclusion scheme:\\
\smallskip
$\mathtt{pred}$-$\mathtt{dio}$-$\mathtt{ESO}$-$\mathtt{HORN} = \mathtt{normal}$-$\mathtt{pred}$-$\mathtt{dio}$-$ \mathtt{ESO}$-$\mathtt{HORN}$\\
$\subseteq \mathtt{RealTime_{IA}} \subseteq  \mathtt{pred}$-$\mathtt{dio}$-$ \mathtt{ESO}$-$\mathtt{HORN}$.\\

The equality $\mathtt{pred}$-$\mathtt{dio}$-$ \mathtt{ESO}$-$\mathtt{HORN} = \mathtt{normal}$-$ \mathtt{pred}$-$\mathtt{dio}$-$ \mathtt{ESO}$-$\mathtt{HORN}$ has been already proved in Section 3. We will now prove the two remaining inclusions.

\begin{lemma}\label{lemma_IA_dio}
$\mathtt{RealTime_{IA}} \subseteq  \mathtt{pred}$-$\mathtt{dio}$-$ \mathtt{ESO}$-$\mathtt{HORN}$.
\end{lemma}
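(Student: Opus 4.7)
The approach follows Lemma~\ref{lemme_CA_pred}: given a real-time $\mathtt{IA}$ $\mathcal{A}=(Q,\Sigma,Q_{accept},\{-1,0,1\},\delta_{\mathtt{input}},\delta)$, I will build a formula $\Phi\in\mathtt{pred}$-$\mathtt{dio}$-$\mathtt{ESO}$-$\mathtt{HORN}$ with one computation predicate $R_q$ per state $q\in Q$, such that in its minimal model $R_q(x,y)$ holds iff the relevant cell-time of $\mathcal{A}$ is in state $q$. The real-time space-time region of $\mathcal{A}$ is the triangle $\{(c,t):1\le c\le t\le n-c+1\}$, and the geometric key is the change of coordinates $x=t$, $y=c+t-1$ (inverse $c=y-x+1$, $t=x$). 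It sends the output cell $1$ at time $n$ to the corner $(n,n)$, each sequential input $w_i$ (fed to cell $1$ at time $i$) to the diagonal point $(i,i)$, and the three IA predecessors $(c-1,t-1),(c,t-1),(c+1,t-1)$ to $(x-1,y-2),(x-1,y-1),(x-1,y)$, all of the form $R(x-a,y-b)$ with $a,b\ge 0$ allowed by the logic; the input thus lies on the diagonal, matching $\mathtt{GRID}_3$.

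The formula $\Phi$ splits into four groups of clauses. (i) \emph{Input clauses on the diagonal} encode $\delta_{\mathtt{input}}$: for each $s\in\Sigma$, one clause $x=y\land\mathtt{min}(x)\land Q_s(x)\to R_{q_0}(x,y)$ with $q_0=\delta_{\mathtt{input}}(s,\lambda,\lambda)$, and, for each triple $(s,q_1,q_2)$, one clause $x=y\land\lnot\mathtt{min}(x)\land Q_s(x)\land R_{q_1}(x-1,y-1)\land R_{q_2}(x-1,y)\to R_q(x,y)$ with $q=\delta_{\mathtt{input}}(s,q_1,q_2)$, the two extra hypotheses supplying the previous states of cells $1$ and $2$. (ii) A \emph{quiescence clause} $\mathtt{min}(x)\land\lnot\mathtt{min}(y)\to R_\lambda(x,y)$ ensures that at time $1$ every cell $c\ge 2$ carries the quiescent state $\lambda$. (iii) \emph{Computation clauses} off the diagonal encode $\delta$: for each transition $\delta(q_{-1},q_0,q_1)=q$, $\lnot\mathtt{min}(x)\land R_{q_{-1}}(x-1,y-2)\land R_{q_0}(x-1,y-1)\land R_{q_1}(x-1,y)\to R_q(x,y)$. (iv) \emph{Contradiction clauses} $\mathtt{max}(x)\land\mathtt{max}(y)\land R_q(x,y)\to\bot$, one per $q\in Q\setminus Q_{accept}$, fire only at $(n,n)$, the image of cell $1$ at time $n$.

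The step I expect to be the main obstacle is blocking \emph{spurious} activations of the computation clauses: on the diagonal $x=y$ they would duplicate the input clauses, and in the ``lower triangle'' $x>y$ they would be awakened by the predecessor-clamping convention $\mathtt{pred}(1)=1$, producing cascades of phantom states that can pollute the diagonal and, worse, the output point $(n,n)$. Since $\mathtt{pred}$-$\mathtt{dio}$-$\mathtt{ESO}$-$\mathtt{HORN}$ supplies the primitive $x=y$ but no primitive comparison $x<y$ between variables, I would introduce an auxiliary computation predicate $R_<$ defined by $\mathtt{min}(x)\land\lnot\mathtt{min}(y)\to R_<(x,y)$ and $\lnot\mathtt{min}(x)\land R_<(x-1,y-1)\to R_<(x,y)$; a short induction gives $R_<(x,y)\iff x<y$ in the minimal model. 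I then add $R_<(x,y)$ as an extra hypothesis to every computation clause, which kills it on the diagonal and throughout the lower triangle. Correctness is then a straightforward induction on $x$: groups (i)--(iii) make $R_q(x,y)$ hold in the minimal model iff $y\ge x$ and the cell $c=y-x+1$ of $\mathcal{A}$ is in state $q$ at time $t=x$ (still-quiescent cells being tagged $R_\lambda$, propagated through the computation clauses by $\delta(\lambda,\lambda,\lambda)=\lambda$). In particular $R_q(n,n)$ holds exactly for $q=\langle 1,n\rangle$, so the contradiction clauses fire iff $\mathcal{A}$ rejects $w$, which proves $L(\mathcal{A})\in\mathtt{pred}$-$\mathtt{dio}$-$\mathtt{ESO}$-$\mathtt{HORN}$.
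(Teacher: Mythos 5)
Your construction is correct and is essentially the paper's proof: both apply a linear change of variables to the IA space-time diagram so that the sequential input lands on the diagonal and the output cell at time $n$ lands on $(n,n)$ (the $\mathtt{GRID}_3$ arrangement), encode states by predicates $R_q$, transitions by Horn clauses with predecessors of the form $R(x-a,y-b)$, and rejection by contradiction clauses at $(n,n)$. Your coordinates $(x,y)=(t,\,c+t-1)$ are merely the transpose of the paper's $c\mapsto c+t-1$, and your explicit $R_<$ guard and quiescence clause just spell out details the paper leaves implicit (it writes the hypothesis $c>t$ directly, relying on the definability of $<$ established in the normalization section).
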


\begin{proof}
 Let $\mathcal{A}$ be an automaton in $\mathtt{RealTime_{IA}}$, we apply the transformation  $c \mapsto c+t-1$ on its time-space diagram. This transformation gives us a new automaton $\mathcal{A'}$ with the neighborhood $\mathcal{N'}=\{-2,-1,0\}$ and the input still fed sequentially but in the following way: the $i^{th}$ bit of the input is given to the cell $i$ at time $i$.
Since the input presentation is the only change between the computation of an automaton in $\mathtt{RealTime_{IA}}$ and an automaton in $\mathtt{RealTime_{CA}}$, the computation clauses and the contradiction clauses will stay the same.

\noindent\emph{Input:} The diagonal input of $\mathcal{A'}$ is expressed by the conjunction of the input clauses:\\
\smallskip
$\psi_{input} \coloneqq\bigwedge_{s\in \Sigma} (t=c \land Q_s(c) \to R_s(c,t))$ for $\Sigma\subset Q$.

\noindent\emph{Computation:} The conjunction $\psi_{compute}$ is defined from the transition rules of $\mathcal{A'}$ as in the previous section.

\noindent Let $\psi'$ be the conjunction $\psi_{input} \land \psi_{compute}$.

\noindent\emph{Output:} The output reading is done on the same cell as in the previous section.\\
$\psi_{output}\coloneqq\bigwedge_{q\in Q\setminus Q_{accept}}(\mathtt{max}(c) \land \mathtt{max}(t) \land R_q(c,t) \to \bot)$
\smallskip

The formula $\psi$ of $ \mathtt{pred}$-$\mathtt{dio}$-$ \mathtt{ESO}$-$\mathtt{HORN}$ expressing the computation of $\mathcal{A}'$ is:  $\psi \coloneqq \psi'\land \psi_{output}$.\\

  \fbox{\begin{minipage}{0.9\linewidth}
      For each word $w=w_1\dots w_n \in \Sigma^+$ we have the following equivalences:

      \smallskip
  \begin{center}   
The cell $1$ of $\mathcal{A}$ is in an accepting state $q$ at time $n$\\
$\iff$\\
The cell $n$ of $\mathcal{A}'$ is in an accepting state $q$ at time $n$\\
$\iff$\\
The conjunction $\bigwedge_{q\in Q\setminus Q_{accept}}R_q(n,n)$ is false 
in the minimal model $(\langle w \rangle, \mathbf{R})$ of $\forall c \forall t \psi'(c,t)$\\
$\iff$\\
$\langle w \rangle$ satisfies the formula  $\exists\mathbf{R}\forall c \forall t \psi(c,t)$.
\end{center}
\end{minipage}
}\\

This proves $L(\mathcal{A}) \in\mathtt{pred}$-$\mathtt{dio}$-$ \mathtt{ESO}$-$\mathtt{HORN}$.

\end{proof}

\begin{lemma}
  $\mathtt{normal}$-$\mathtt{pred}$-$\mathtt{dio}$-$\mathtt{ESO}$-$\mathtt{HORN} \subseteq \mathtt{RealTime_{IA}}$
  \end{lemma}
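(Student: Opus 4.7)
I would follow the structure of Lemma~\ref{lemme_pred_CA} almost verbatim, adapting the change of variables to the $\mathtt{IA}$ setting. Let $\Phi\in\mathtt{normal}$-$\mathtt{pred}$-$\mathtt{dio}$-$\mathtt{ESO}$-$\mathtt{HORN}$ be given as $\exists\mathbf{R}\forall x\forall y\,\psi(x,y)$ with $\mathbf{R}=(R_1,\ldots,R_m)$ and $R_1=R_{\bot}$. Its clauses define a grid computation of type $\mathtt{GRID}_3$ on $[1,n]^2$: the input clauses $x=y\land Q_s(x)\to R_i(x,y)$ (with the $\mathtt{min}$ or $\lnot\mathtt{min}$ guard) place the letters $w_i$ on the diagonal, the computation clauses propagate values through the predecessors $(x{-}1,y)$ and $(x,y{-}1)$, and the contradiction clause tests $R_{\bot}(n,n)$ at the output corner.

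The strategy is to view this grid as the time-space diagram of a real-time $\mathtt{IA}$ through the \emph{inverse} of the deformation $c\mapsto c+t-1$ used in Lemma~\ref{lemma_IA_dio}. I first apply the same variable change $c=x,\ t=x+y-1$ as in Lemma~\ref{lemme_pred_CA}: both grid dependencies become causal, equal to $(c{-}1,t{-}1)$ and $(c,t{-}1)$ respectively, so the formula literally describes a one-way CA-like automaton $\mathcal{B}$ with neighborhood $\{-1,0\}$, diagonal input $w_{c}$ at $(c,2c{-}1)$, and output at $(n,2n{-}1)$. Because $\mathcal{B}$ is exactly the image of an $\mathtt{IA}$ under the shift $c\mapsto c+t-1$, I reverse the shift: the state of the $\mathtt{IA}$ cell $c_{IA}$ at time $t_{IA}$ is defined to encode the state of $\mathcal{B}$-cell $c_{IA}+t_{IA}-1$ at $\mathcal{B}$-time $2t_{IA}-1$. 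Under this encoding, the diagonal $\mathcal{B}$-input at $(c,2c{-}1)$ matches the sequential $\mathtt{IA}$-input fed at cell~$1$ at time~$c$, the $\mathcal{B}$-output at $(n,2n{-}1)$ matches the $\mathtt{IA}$-output at cell~$1$ at time~$n$, and one $\mathtt{IA}$-step simulates two consecutive $\mathcal{B}$-steps.

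The $\mathtt{IA}$ $\mathcal{A}=(Q,\Sigma,Q_{accept},\{-1,0,1\},\delta_{\mathtt{input}},\delta)$ is then built exactly as in Lemma~\ref{lemme_pred_CA} with $Q=\{\sharp,\lambda\}\cup\{0,1\}^m$, $Q_{accept}=\{0\}\times\{0,1\}^{m-1}$, and the $i$-th bit of each state tracking the truth of $R_i$; the distinctive point is that each computation clause of $\psi$ is compiled \emph{twice} into $\delta$, and each input clause is combined with one further compiled computation step into $\delta_{\mathtt{input}}$ at cell~$1$ (which at time $t_{IA}+1$ also consumes the new input bit $w_{t_{IA}+1}$ corresponding to the $\mathcal{B}$-diagonal cell $t_{IA}+1$). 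A routine induction on $t_{IA}\in[1,n]$ then shows that, for every~$c_{IA}$, the state of cell $c_{IA}$ at time $t_{IA}$ encodes the truth of the $R_i$'s at the corresponding $\mathcal{B}$-cell, and hence at the corresponding grid position; so $\mathcal{A}$ accepts~$w$ in real time iff $R_{\bot}(n,n)$ is false in the minimal model of $\Phi$ iff $\langle w\rangle\models\Phi$.

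The main obstacle is justifying this \emph{double compilation}: after composing twice two clauses whose hypothesis atoms are only of the forms $S(x{-}1,y)$ or $S(x,y{-}1)$, every resulting hypothesis must land on one of the three $\mathcal{B}$-cells stored by the three IA-neighbours of $c_{IA}$ at time $t_{IA}$. This is the arithmetic fact that two one-way $\{-1,0\}$ $\mathcal{B}$-steps compose into the window $\{-2,-1,0\}$, which after the $+1$ shift between two successive IA times (due to $c=c_{IA}+t_{IA}-1$) becomes exactly the two-way IA window $\{-1,0,1\}$; the three needed $\mathcal{B}$-cells $c_{IA}{+}t_{IA}{-}2,\ c_{IA}{+}t_{IA}{-}1,\ c_{IA}{+}t_{IA}$ at $\mathcal{B}$-time $2t_{IA}{-}1$ are precisely encoded by the IA-neighbours $c_{IA}{-}1,\ c_{IA},\ c_{IA}{+}1$ at time $t_{IA}$. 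Boundary cases (cell~$1$ with no left neighbour, cells outside the region $c_{IA}+t_{IA}\le n+1$) are handled through the $\mathtt{min}$ and $\mathtt{max}$ guards of the clauses and the quiescent/permanent states $\lambda,\sharp$, exactly as in Lemma~\ref{lemme_pred_CA}.
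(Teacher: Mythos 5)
Your deformation is the right kind of move and the arithmetic you do is internally consistent, but the construction as written only simulates \emph{half} of the grid, and that half is not enough. Trace your composite map: the IA site $(c_{IA},t_{IA})$ encodes the $\mathcal{B}$-site $(c_{IA}+t_{IA}-1,\,2t_{IA}-1)$, which in the original grid coordinates is $(x,y)=(c_{IA}+t_{IA}-1,\;t_{IA}-c_{IA}+1)$, so $x-y=2c_{IA}-2\ge 0$. Hence only the sites with $x\ge y$ are ever represented by an IA cell. But in the normal form of $\mathtt{pred}$-$\mathtt{dio}$-$\mathtt{ESO}$-$\mathtt{HORN}$ the computation clauses are quantified over the whole square: a diagonal site $(x,x)$ may acquire $R(x,x)$ through a hypothesis $S(x-1,x)\land\lnot\mathtt{min}(x)$, and $(x-1,x)$ lies strictly in the triangle $x<y$ that your map never covers; in particular the output site $(n,n)$ directly depends on $(n-1,n)$. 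This is exactly where your closing remark that the missing left neighbour of cell $1$ is ``handled through the $\mathtt{min}$ and $\mathtt{max}$ guards and the quiescent/permanent states'' breaks down: the value that the nonexistent cell $0$ would have to supply is the value of a genuine grid site of the upper triangle, not a boundary default. So the induction you invoke cannot close at the diagonal, and the acceptance test at $(n,n)$ is not computed.

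The fix is the one the paper uses: treat the two triangles symmetrically and make each IA site carry \emph{both} a site of $\{x\le y\}$ and its mirror in $\{x\ge y\}$ (the paper's change of variables is $c'=y-x+1$, $t'=x+y-1$ for $x\le y$, followed by the same grouping-by-two that you describe, with the statement that $x\ge y$ is ``similarly handled''). Concretely, each IA cell must store, besides the $\mathcal{B}$-sites you list, the states of the mirrored grid sites, and the transition at cell $1$ glues the two copies along the diagonal, where $(x,x)$ reads one predecessor from each copy. Apart from this missing second embedding — which is essential, not cosmetic — your route (an intermediate one-way automaton $\mathcal{B}$ obtained by $c=x$, $t=x+y-1$, then un-shifting and compressing time by $2$) is just a re-parametrization of the paper's single change of variables and buys nothing different.
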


  \begin{proof}[Proof sketch] Since the proof is similar to that of Lemma~\ref{lemme_pred_CA}, we will here just give an hint on how to associate to each site $(x,y)\in[1,n]^2$ such that $x\leq y$ a site $(c,t)$ of the space-time diagram of an iterative array $\mathcal{A}$ running in real-time (see Figure~\ref{dio-IA}). The sites $(x,y)\in[1,n]^2$ such that $x\geq y$ are similarly handled.\\
 First, we apply to the set of sites $(x,y)$ of the dependency graph of $\Phi$ the variable change $c'=y-x+1;t'=x+y-1$. This variable change turns respectively the dependencies $(x-1,y)\to(x,y)$ and $(x,y-1)\to (x,y)$ into $(c'+1,t'-1)\to(c',t')$ and $(c'-1,t'-1)\to(c',t')$ expressing the two-way communication of an iterative array $\mathcal{A'}$. The sites $(c',t')$ of the space-time diagram of $\mathcal{A'}$ takes their values in $[1,n]\times[1,2n-1]$ and the $i^{th}$ bit of the input is fed on the site $(1,2i-1)$ (see Figure~\ref{dio-IA}). \\
 In order to obtain the space-time diagram of an IA $\mathcal{A}$ running in real time, each site $(c,t)=(\left \lceil{c'/2}\right \rceil,\left \lceil{t'/2}\right \rceil)$ of this diagram will record the set of sites $\{(c'-1,t'-1),(c',t'),(c'+1,t'-1)\}$ of the space-time diagram of $\mathcal{A'}$, with $c'$ and $t'$ odd and greater than $1$ (see Figure~\ref{dio-IA}).\\
 

\begin{figure}[h]
\centering
\begin{tikzpicture}[scale=0.4,every node/.style={scale=0.7}]

\draw[dashed,ultra thin](1,1-1) grid (5,9-1);

\foreach \x in {1,2,...,4}{
\foreach \y in {\x,...,4}{
\draw[-to,shorten >=-1pt,thick](\x,\y-1) to (\x,\y-1+0.7);
}}

\foreach \x in {1,2,...,4}{
\foreach \y in {\x,...,4}{
\draw[-to,shorten >=-1pt,thick](\x,\y) to (\x+0.7,\y);
}}

\foreach \x in {1,2,...,5}{
\foreach \y in {\x,...,5}{
\filldraw[fill=white] (\x,\y-1) circle (0.3cm) ;
}}

\filldraw[fill=gray](5,5-1) circle (0.3cm) ;
\foreach \x in {1,2,...,5}{
   \filldraw[white,fill=white] (\x,\x-1-1) circle (0.5cm) ; 
  \draw[->](\x,\x-0.6-1) to (\x,\x-0.3-1);  
\draw(\x,\x-1-1) node{$w_{\x}$};
}
\draw(3,-1-1) node{$(x,y)$};

\draw(3,9) node{$\mathtt{GRID_3}$};

\draw[-implies,double equal sign distance] (6,4-1) -- (7.5,4-1);


\draw[dashed,ultra thin](9,1-1) grid (13,9-1);

\foreach \x in {1,2,...,4}{
  \foreach \y in {\x,...,4}{
    
    \def\c{8+\y-\x+1}
    \def\t{\y+\x}
\draw[-to,shorten >=-1pt,thick](\c+1,\t-1) to (\c+0.25,\t+0.75-1);
}}

\foreach \x in {1,2,...,4}{
  \foreach \y in {\x,...,4}{
        \def\c{8+\y-\x+1}
    \def\t{\y+\x-1}
\draw[-to,shorten >=-1pt,thick](\c,\t-1) to (\c+0.7,\t+0.7-1);
}}

\foreach \x in {1,2,...,5}{
  \foreach \y in {\x,...,5}{
    \def\c{8+\y-\x+1}
    \def\t{\y+\x-1}
\filldraw[fill=white] (\c,\t-1) circle (0.3cm) ;
}}

\filldraw[fill=gray](9,9-1) circle (0.3cm) ;
\foreach \x in {1,2,...,5}{
  \draw[->](8+0.4,2*\x-1-1) to (8+0.7,2*\x-1-1);
\draw(8+0,2*\x-1-1) node{$w_{\x}$};
}

\draw(5,-1-1) node{$\mapsto$};

\draw(10,-1-1) node{$(c'=y-x+1,t'=x+y-1)$};

\draw(11,9) node{$\mathcal{A'}$};

\draw[dashed,ultra thin,step=2](16,0) grid (20,8);

\draw[-to,shorten >=-1pt,thick](16,0) to (18-0.25,1+0.75);
\draw[-to,shorten >=-1pt,thick](18,2) to (20-0.25,3+0.75);
\draw[-to,shorten >=-1pt,thick](16,2) to (18-0.25,3+0.75);
\draw[-to,shorten >=-1pt,thick](16,4) to (18-0.25,5+0.75);

\draw[-to,shorten >=-1pt,thick](18,2) to (16+0.25,3+0.75);
\draw[-to,shorten >=-1pt,thick](18,4) to (16+0.25,5+0.75);
\draw[-to,shorten >=-1pt,thick](18,6) to (16+0.25,7+0.75);
\draw[-to,shorten >=-1pt,thick](20,4) to (18+0.25,5+0.75);

\draw[-to,shorten >=-1pt,thick](16,0) to (16,1+0.7);
\draw[-to,shorten >=-1pt,thick](16,2) to (16,3+0.7);
\draw[-to,shorten >=-1pt,thick](16,4) to (16,5+0.7);
\draw[-to,shorten >=-1pt,thick](16,6) to (16,7+0.7);
\draw[-to,shorten >=-1pt,thick](18,2) to (18,3+0.7);
\draw[-to,shorten >=-1pt,thick](18,4) to (18,5+0.7);

\filldraw[fill=white] (16,0) circle (0.3cm) ;
\filldraw[fill=white] (16,2) circle (0.3cm) ;
\filldraw[fill=white] (16,4) circle (0.3cm) ;
\filldraw[fill=white] (16,6) circle (0.3cm) ;
\filldraw[fill=white] (16,8) circle (0.3cm) ;
\filldraw[fill=white] (18,2) circle (0.3cm) ;
\filldraw[fill=white] (18,4) circle (0.3cm) ;
\filldraw[fill=white] (18,6) circle (0.3cm) ;
\filldraw[fill=white] (20,4) circle (0.3cm) ;

\filldraw[fill=gray] (16,8) circle (0.3cm) ;
\foreach \x in {1,2,...,5}{

\draw[->](15+0.4,2*\x-2) to (15+0.7,2*\x-2);
\draw(15+0,2*\x-2) node{$w_{\x}$};
}

\draw(18,-2) node{$(c,t)=(\left \lceil{\frac{c'}{2}}\right \rceil,\left \lceil{\frac{t'}{2}}\right \rceil)$};

\draw(18,9) node{$\mathcal{A}$};

\end{tikzpicture}
\caption{\label{dio-IA}Variable change and grouping}
  \end{figure}
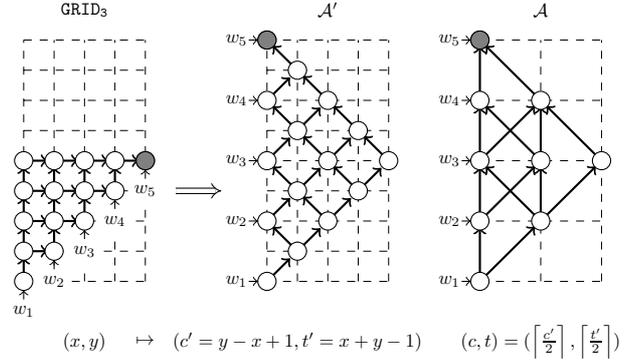
\end{proof}
  
  \subsection{Logical characterization of $\mathtt{Trellis}$ and linear conjunctive grammars}\label{subsec:Trellis-LinConj}

  Conjunctive grammars, introduced by Okhotin~\cite{Okhotin04rairo}, are an extension of context-free grammars. This class of formal grammars is obtained by allowing the use of a conjunction operator (denoted $\&$) in the right side of the context-free rules, meaning an intersection between derived languages. It has been shown in \cite{Okhotin04rairo} that languages obtained by the linear restriction of conjunctive grammars are the same as the ones recognized by trellis automata. The class of languages generated by linear conjunctive grammars is denoted $LinConj$.\\
  Each rule of a linear conjunctive grammar $\mathcal{G}=(\Sigma,N,P,S)$ in normal form is a rule of the form:\\
  \smallskip
  $A \to sB_1 \& \dots \& sB_{\ell}\& C_1t \&\dots \&C_pt$ or  $A \to s$\\
  \smallskip
  with $\ell +p \ge 1$,  $A,B_i,C_j\in N$ and $s,t\in\Sigma$;

  As a context-free language, a conjunctive language has an algebraic representation by least solution of a system of language equations with concatenation, union and intersection.
  For example, the rule \\
  \smallskip
  $A \to sB_1 \& \dots \& sB_{\ell}\&C_1t\& \dots \& C_pt$ \\
\smallskip
  gives us the language equation: \\
\smallskip
$L(A)= sL(B_1) \cap \dots \cap sL(B_{\ell})\cap L(C_1)t\cap\dots \cap L(C_p)t$ \\
\smallskip
where $L(A)$ is the set of all words having the syntactic property $A$. \\

Our inclusive logic characterizes the class of (complements of) linear conjunctive languages:

  \begin{theorem}\label{theorem_LinConj}
 For any $L \in\Sigma^+$ we have:\\
$L\in\mathtt{incl}$-$\mathtt{ESO}$-$\mathtt{HORN}$ $\iff$ $(\Sigma^+ \setminus L) \in LinConj$ .
\end{theorem}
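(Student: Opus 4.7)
The plan is to exhibit direct, complementary translations between formulas of $\mathtt{incl}$-$\mathtt{ESO}$-$\mathtt{HORN}$ in the normal form of Lemma~\ref{lemmaincl} and linear conjunctive grammars in Okhotin's normal form. Both formalisms capture the same inductive computation on nested intervals $[x,y]\subseteq[1,n]$, but with opposite acceptance conventions: a Horn formula accepts when $\bot$ is \emph{not} derivable, whereas a grammar generates the words that \emph{are} derivable from the start symbol. This is why the complement appears in the statement.

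For the direction $(\Sigma^+\setminus L)\in LinConj \Rightarrow L\in\mathtt{incl}$-$\mathtt{ESO}$-$\mathtt{HORN}$, I would take a grammar $\mathcal{G}=(\Sigma,N,P,S)$ in normal form generating $\Sigma^+\setminus L$ and associate to each nonterminal $A\in N$ a binary computation predicate $R_A$, intended to satisfy $R_A(x,y)\iff w_x\cdots w_y\in L(A)$. Each rule $A\to s$ is translated into the input clause $x=y\land Q_s(x)\to R_A(x,y)$, and each rule $A\to sB_1\&\cdots\&sB_\ell\&C_1t\&\cdots\&C_pt$ into the clause $x<y\land Q_s(x)\land Q_t(y)\land\bigwedge_i R_{B_i}(x+1,y)\land\bigwedge_j R_{C_j}(x,y-1)\to R_A(x,y)$, with the $Q_s$- or $Q_t$-conjunct dropped when $\ell=0$ or $p=0$. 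Adding the single contradiction clause $\mathtt{min}(x)\land\mathtt{max}(y)\land R_S(x,y)\to\bot$, a routine induction on $y-x$ proves the intended equivalence, so the resulting $\Phi$ derives $\bot$ on $w$ iff $w\in L(\mathcal{G})=\Sigma^+\setminus L$, i.e.\ $L(\Phi)=L$.

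For the converse, I would start with $\Phi\in\mathtt{incl}$-$\mathtt{ESO}$-$\mathtt{HORN}$ in normal form and first establish a \emph{locality} property: by induction on $y-x$, the truth value of any $R(x,y)$ in the minimal model depends only on the substring $w_x\cdots w_y$, since input clauses use $Q_s(x)$ only together with $x=y$ and computation clauses reference only strictly smaller sub-intervals. This justifies defining, for each $R\in\mathbf{R}$, a nonterminal $A_R$ with $L(A_R):=\{u\in\Sigma^+ : R(1,|u|) \text{ holds in the minimal model on input } u\}$. Input clauses $x=y\land Q_s(x)\to R(x,y)$ become rules $A_R\to s$; each computation clause $x<y\land\bigwedge_i S_i(x+1,y)\land\bigwedge_j T_j(x,y-1)\to R(x,y)$ becomes, for every pair $(s,t)\in\Sigma\times\Sigma$, a rule $A_R\to sA_{S_1}\&\cdots\&sA_{S_\ell}\&A_{T_1}t\&\cdots\&A_{T_p}t$ (with the $s$-factors or $t$-factors omitted when $\ell=0$ or $p=0$, respectively). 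Choosing $A_{R_\bot}$ as start symbol, the contradiction clause yields $L(\mathcal{G})=\{w:R_\bot(1,|w|) \text{ holds on } w\}=\Sigma^+\setminus L(\Phi)$.

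The main obstacle is the apparent mismatch that normalized Horn computation clauses carry \emph{no} letter information, while every linear-conjunctive rule in Okhotin's normal form must consume a letter. I resolve this by producing, per Horn clause, one grammar rule for each choice of the otherwise unconstrained endpoint letters: the union across $(s,t)\in\Sigma^2$ of the resulting letter-specific rules faithfully simulates a single ``letter-blind'' Horn step. A secondary care is needed for one-sided clauses (only left-advance or only right-retract), which fit Okhotin's normal form since $\ell+p\ge 1$ holds as soon as the hypothesis list is nonempty; the degenerate case $r=0$ can be eliminated upfront by a trivial preprocessing that replaces $x<y\to R(x,y)$ by its equivalent cumulative propagation. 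The whole correspondence rests critically on the locality of the normal form, which is precisely what Lemma~\ref{lemmaincl} secures.
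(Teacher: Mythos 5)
Your proposal is correct and follows essentially the same route as the paper: both directions rest on the normal form of Lemma~\ref{lemmaincl}, translate input clauses to rules $A\to s$ and computation clauses to one rule $A_R\to sA_{S_1}\&\cdots\&sA_{S_\ell}\&A_{T_1}t\&\cdots\&A_{T_p}t$ per pair $(s,t)\in\Sigma^2$ (and conversely), and use the same induction on intervals to prove $R(a,b)$ holds iff $w_a\cdots w_b\in L(A_R)$. Your explicit treatment of the locality property and of degenerate clauses is a welcome refinement of details the paper leaves implicit, but it does not change the argument.
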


 \noindent The theorem is proved in two steps.

\begin{lemma}\label{lemma_incl_LinConj}
$ L\in\mathtt{incl}$-$\mathtt{ESO}$-$\mathtt{HORN}\Rightarrow (\Sigma^+ \setminus L) \in LinConj$ 
\end{lemma}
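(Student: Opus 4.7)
The plan is to reduce to the normalized inclusion logic via Lemma~\ref{lemmaincl} and then translate every normalized clause, essentially verbatim, into rules of a linear conjunctive grammar generating $\Sigma^+\setminus L$. After normalization, $\Phi=\exists\mathbf{R}\,\forall x\forall y\,\psi(x,y)$ consists of three kinds of clauses: input clauses $x=y\land Q_s(x)\to R(x,y)$, the unique contradiction clause $\mathtt{min}(x)\land\mathtt{max}(y)\land R_{\bot}(x,y)\to\bot$, and computation clauses $x<y\land\delta_1\land\cdots\land\delta_r\to R(x,y)$ in which each $\delta_i$ is either $S(x+1,y)$ or $T(x,y-1)$. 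Crucially, each hypothesis refers to a strictly shorter subinterval of $[x,y]$, so whether the minimal model satisfies $R(x,y)$ depends only on the substring $w_x\cdots w_y$; this matches the intuition behind conjunctive grammars, where a nonterminal represents a syntactic property of a substring.

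Next, I would define a linear conjunctive grammar $\mathcal{G}=(\Sigma,\mathbf{R},P,R_{\bot})$ whose nonterminals are the computation predicates and whose start symbol is the contradiction predicate $R_{\bot}$. Productions are obtained as follows. Every input clause $x=y\land Q_s(x)\to R(x,y)$ yields the terminal rule $R\to s$. Every computation clause $x<y\land \bigwedge_{i=1}^{a}S_i(x+1,y)\land\bigwedge_{j=1}^{b}T_j(x,y-1)\to R(x,y)$ yields, for every $(s,t)\in\Sigma^2$, the rule
\[R\,\to\,sS_1\,\&\,\cdots\,\&\,sS_a\,\&\,T_1 t\,\&\,\cdots\,\&\,T_b t,\]
where the $s$-prefixed (resp. $t$-suffixed) conjuncts are omitted if $a=0$ (resp. $b=0$); since $a+b\ge 1$ after normalization, this fits the linear conjunctive normal form. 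The conjunct $sS$ means ``the word starts with $s$ and its suffix is in $L(S)$'', matching the semantics of the atom $S(x+1,y)$ after $w_x:=s$; symmetrically for $Tt$ and $T(x,y-1)$.

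The heart of the proof is then an induction on $y-x$ establishing, for every $w\in\Sigma^+$, every $1\le x\le y\le|w|$ and every $R\in\mathbf{R}$, the equivalence
\[R(x,y)\text{ is in the minimal model of }\forall x\forall y\,\psi'\iff w_x\cdots w_y\in L_{\mathcal{G}}(R),\]
where $\psi'$ denotes $\psi$ with the contradiction clause removed. The base case $y=x$ follows from the one-to-one correspondence between input clauses and terminal rules. In the inductive step, if $R(x,y)$ with $x<y$ is minimally derived by a computation clause, then by induction $w_{x+1}\cdots w_y\in L_{\mathcal{G}}(S_i)$ and $w_x\cdots w_{y-1}\in L_{\mathcal{G}}(T_j)$ for each hypothesis; instantiating the corresponding rule with $s:=w_x$ and $t:=w_y$ yields $w_x\cdots w_y\in L_{\mathcal{G}}(R)$. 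Conversely, any derivation of $w_x\cdots w_y$ from $R$ uses a non-terminal rule whose conjuncts, by induction, force the hypotheses of the associated computation clause to hold, so that clause fires and deduces $R(x,y)$.

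Applied at $(x,y)=(1,n)$, the equivalence gives: $w\in L$ iff $\Phi$ is satisfiable on $\langle w\rangle$, iff the minimal model of $\forall x\forall y\,\psi'$ omits $R_{\bot}(1,n)$, iff $w\notin L_{\mathcal{G}}(R_{\bot})$. Therefore $\Sigma^+\setminus L=L(\mathcal{G})\in LinConj$. The main difficulty I anticipate is the ``only if'' direction of the induction, where the minimality of the Horn model must be used carefully to show that a true atom $R(x,y)$ originates from some applicable clause whose hypotheses are themselves derivable; the letter-unconstrained form of the normalized computation clauses is precisely what allows the $(s,t)$-parameterized family of grammar rules to capture them faithfully.
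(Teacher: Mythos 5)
Your proposal is correct and follows essentially the same route as the paper: normalize via Lemma~\ref{lemmaincl}, take $N=\mathbf{R}$ with start symbol $R_{\bot}$, turn input clauses into terminal rules and each computation clause into the $(s,t)$-parameterized family $R\to sS_1\&\cdots\&sS_{\ell}\&T_1t\&\cdots\&T_pt$, and prove the key equivalence by induction on interval length. The paper phrases that equivalence as a tautology of a Horn implication rather than membership in the minimal model, but for Horn formulas these are the same thing, so the two arguments coincide.
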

\begin{proof}Let us show that for each language $L\in \mathtt{incl}$-$\mathtt{ESO}$-$\mathtt{HORN}$ defined by a formula $\Phi\in \mathtt{normal}$-$\mathtt{incl}$-$\mathtt{ESO}$-$\mathtt{HORN}$, $\Phi = \exists\mathbf{R}\forall x \forall y \psi(x,y)$, we can build a grammar $G_{\Phi}=(\Sigma,N,P,S)$ such that: $w\in L \iff w\notin L(G_{\Phi})$

\noindent The grammar $G_{\Phi}$ is defined as follows.\\
\smallskip
The set of non-terminal symbols of $G_{\Phi}$ is $N=\mathbf{R}$ and $S=R_{\bot}$ is the start symbol. The rules of $\mathcal{G}_{\Phi}$ are the following:\\
\smallskip
To each input clause $x=y \land Q_s(x)\to R(x,y)$ of $\Phi$ we associate the rule $R\to s$.\\
\smallskip
To each computation clause of $\Phi$\\
$x<y \land S_1(x+1,y)\land\ldots\land S_{\ell}(x+1,y)\land T_1(x,y-1)\land\ldots\land T_p(x,y-1)\to R(x,y)$
we associate the rules\\
 $R\to sS_1\&\ldots \&sS_{\ell}\&T_1t\&\ldots \&T_pt$, for all $s,t\in\Sigma$.

 Let $\psi'$ be the conjunction $\psi$ without the contradiction clause.
 By an easy induction, the following equivalence can be proved, for all $R \in \mathbf{R}$, all words $w=w_1\dots w_n \in \Sigma^+$, and all intervals $[a,b] \subseteq [1,n]$:\\
 $w_a \dots w_b \in L(R) \iff$ \\
 $ \left(\forall x \forall y \psi' \land \bigwedge_{i\in[a,b]} Q_{w_i}(i) \right) \to R(a,b)$ is a tautology.

 Taking $R=R_{\bot}$ and $[a,b]=[1,n]$ we obtain:\\
 $w\in L(G_{\Phi}) \iff \langle w \rangle \nvDash \Phi$ as claimed.
 
\end{proof}

\begin{lemma}\label{lemma_LinConj_incl}
$(\Sigma^+ \setminus L) \in LinConj \Rightarrow L\in\mathtt{incl}$-$\mathtt{ESO}$-$\mathtt{HORN}$ 
\end{lemma}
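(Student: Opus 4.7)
The plan is to invert the construction in Lemma~\ref{lemma_incl_LinConj}. Given a linear conjunctive grammar $\mathcal{G} = (\Sigma, N, P, S)$ in normal form generating $\Sigma^+ \setminus L$, I will produce an inclusion Horn formula $\Phi_\mathcal{G} = \exists \mathbf{R}\, \forall x\, \forall y\, \psi(x,y)$ with computation predicates $\mathbf{R} = \{R_A \mid A \in N\}$, designating $R_\bot := R_S$. The intuitive semantics I will enforce is: $R_A(a,b)$ holds in the minimal model if and only if $w_a \ldots w_b \in L(A)$.

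The conjunction $\psi$ will contain three kinds of clauses. First, for each terminal rule $A \to s$, the input clause $x = y \land Q_s(x) \to R_A(x,y)$. Second, for each rule $A \to sB_1 \& \ldots \& sB_\ell \& C_1 t \& \ldots \& C_p t$ with $\ell + p \ge 1$, the computation clause
\[
x < y \land Q_s(x) \land Q_t(y) \land \bigwedge_{i=1}^{\ell} R_{B_i}(x+1,y) \land \bigwedge_{j=1}^{p} R_{C_j}(x,y-1) \to R_A(x,y),
\]
dropping $Q_s(x)$ when $\ell = 0$ and $Q_t(y)$ when $p = 0$. Third, the unique contradiction clause $\mathtt{min}(x) \land \mathtt{max}(y) \land R_\bot(x,y) \to \bot$.

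I will then check that each clause fits Definition~\ref{def:incllogic}: $Q_s(x)$ and $Q_t(y)$ are permitted input literals; $x < y$ and $x = y$ are admissible (in)equality hypotheses; and the hypothesis $x < y$ implies both $x+1 \le y$ and $x \le y-1$, supplying the interval-inclusion inequalities required by item~(\ref{interval}) for the sub-atoms $R_{B_i}(x+1,y)$ and $R_{C_j}(x,y-1)$. Correctness then reduces to a straightforward induction on the interval length $k = b - a + 1$, establishing that for all $A \in N$ and $1 \le a \le b \le n$, the atom $R_A(a,b)$ belongs to the minimal model of $\psi$ minus the contradiction clause iff $w_a \ldots w_b \in L(A)$: the base case $k=1$ is handled by the input clauses; the step $k \ge 2$ matches, in both directions, each grammar rule applicable to $w_a \ldots w_b$ with exactly one fireable computation clause, using the induction hypothesis on the strictly shorter intervals $[a+1,b]$ and $[a,b-1]$. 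Specializing to $A = S$ and $[a,b] = [1,n]$ yields $\langle w \rangle \models \Phi_\mathcal{G} \iff R_S(1,n) \text{ is false} \iff w \notin L(\mathcal{G}) \iff w \in L$.

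The main obstacle will be purely combinatorial bookkeeping rather than any deep new idea: handling the degenerate normal-form cases $\ell = 0$ and $p = 0$ cleanly, and verifying that every clause of the translation conforms to the fairly rigid syntactic template of Definition~\ref{def:incllogic}, especially that the interval-inclusion hypotheses of item~(\ref{interval}) are derivable from the single top-level hypothesis $x < y$.
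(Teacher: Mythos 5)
Your proposal is correct and follows essentially the same route as the paper: the same predicate set $\{R_A \mid A\in N\}$ with $R_\bot$ identified with the start symbol, the same clause-for-rule translation (terminal rules to input clauses on the diagonal $x=y$, conjunctive rules to computation clauses with hypotheses $R_{B_i}(x+1,y)$ and $R_{C_j}(x,y-1)$ guarded by $x<y$), the same contradiction clause, and the same induction on interval length for correctness. Your explicit handling of the degenerate cases $\ell=0$ or $p=0$ and your remark that $x<y$ supplies the inclusion inequalities $x+1\le y$ and $x\le y-1$ required by Definition~\ref{def:incllogic} are just slightly more careful spellings of what the paper leaves implicit.
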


\begin{proof}Let $L$ be a language, subset of  $\Sigma^+$, such that $(\Sigma^+\setminus L) \in LinConj$. 
We associate to the linear grammar $\mathcal{G}=(\Sigma,N,P,S)$ defining $(\Sigma^+ \setminus L)$ with $N=\{A_1,\ldots,A_k\}$ in normal form, as presented above, the formula \linebreak
$\Phi_{\mathcal{G}}:=\exists\mathbf{R} \forall x \forall y \psi(x,y)$ with $\mathbf{R}= \{A_1,\ldots,A_k\}$ and $\psi$ the conjunction of the following Horn clauses:\\
the computation clause
$x<y \land Q_s(x) \land \bigwedge_{i=1}^{\ell} B_i(x+1,y) \land \linebreak \bigwedge_{i=1}^p C_i(x,y-1) \land Q_t(y) \to A(x,y)$
for each rule  \linebreak$A \to sB_1\& \ldots \& sB_{\ell}\&C_1t\&\ldots \& C_pt$;\\[2pt]
the clause $x=y\land Q_s(x)\to A(x,y)$ for each rule $A\to s$ ;\\[2pt]
and the contradiction clause $\mathtt{min}(x)\land \mathtt{max}(y)\land S(x,y)\to\bot$.

The proof of the equivalence $w\in L(\mathcal{G}) \iff \langle w \rangle \models \Phi_{\mathcal{G}}$ is the same as for the previous Lemma~\ref{lemma_incl_LinConj}.
\end{proof}

    \subsection*{Logical characterization of $\mathtt{Trellis}$}

Clearly, the final state (result) $q$ of a trellis automaton $\mathcal{A}=(Q,\Sigma,Q_{accept}, \delta)$ acting on a word $w_x \dots w_y$ of length greater than $1$ ($x<y$) is completely determined by the final state $q_l$ of $\mathcal{A}$ acting on the prefix $w_x \dots w_{y-1}$ and the final state $q_r$ of $\mathcal{A}$ acting on the suffix $w_{x+1} \dots w_y$: $q=\delta(q_l,q_r)$. This functional dependence $\left( (x,y-1),(x+1,y) \right) \to (x,y)$ is exactly expressed by the computation clauses of the normalized inclusive logic. This suggests the following equality.
\smallskip
\begin{theorem}\label{theorem_Trellis}
  $\mathtt{incl}$-$\mathtt{ESO}$-$\mathtt{HORN} =\mathtt{Trellis}$
  \end{theorem}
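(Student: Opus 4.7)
The cleanest route is to obtain the equality as a corollary of Theorem~\ref{theorem_LinConj} combined with two classical facts: Okhotin's identity $LinConj = \mathtt{Trellis}$~\cite{Okhotin04rairo}, and the closure of $\mathtt{Trellis}$ under complement (trellis automata are deterministic, so we just complement $Q_{accept}$). Chaining these, for every $L \subseteq \Sigma^+$,
$L \in \mathtt{incl}\text{-}\mathtt{ESO}\text{-}\mathtt{HORN} \iff \Sigma^+\setminus L \in LinConj \iff \Sigma^+\setminus L \in \mathtt{Trellis} \iff L \in \mathtt{Trellis}.$

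However, a direct simulation proof, in the style of Theorems~\ref{theorem_CA} and~\ref{theorem_IA}, is more informative and reflects the dependency observation $((x,y-1),(x+1,y))\to(x,y)$ that opens this subsection. I would proceed through the inclusion scheme $\mathtt{incl}\text{-}\mathtt{ESO}\text{-}\mathtt{HORN} = \mathtt{normal}\text{-}\mathtt{incl}\text{-}\mathtt{ESO}\text{-}\mathtt{HORN} \subseteq \mathtt{Trellis} \subseteq \mathtt{incl}\text{-}\mathtt{ESO}\text{-}\mathtt{HORN}$, the first equality being Lemma~\ref{lemmaincl}. For $\mathtt{Trellis} \subseteq \mathtt{incl}\text{-}\mathtt{ESO}\text{-}\mathtt{HORN}$, given a trellis $\mathcal{A}=(Q,\Sigma,Q_{accept},\delta)$ I would introduce one predicate $R_q$ per state $q$, with intended meaning that $R_q(x,y)$ holds iff $\mathcal{A}$ evaluates $w_x\dots w_y$ to $q$. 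Input clauses $x=y\land Q_s(x)\to R_{\iota(s)}(x,y)$ encode the action on single letters, each transition $\delta(q_l,q_r)=q$ yields a computation clause $x<y\land R_{q_l}(x,y-1)\land R_{q_r}(x+1,y)\to R_q(x,y)$, and acceptance is expressed by one contradiction clause $\mathtt{min}(x)\land \mathtt{max}(y)\land R_q(x,y)\to\bot$ per $q\notin Q_{accept}$.

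For the converse $\mathtt{normal}\text{-}\mathtt{incl}\text{-}\mathtt{ESO}\text{-}\mathtt{HORN}\subseteq \mathtt{Trellis}$, given a normalized formula $\Phi=\exists\mathbf{R}\forall x\forall y\,\psi$ with $\mathbf{R}=(R_1,\dots,R_m)$, $R_1=R_\bot$, I would take $\{\sharp,\lambda\}\cup\{0,1\}^m$ as the set of states and define the trellis so that the $i$-th bit of the state on interval $[x,y]$ records whether $R_i(x,y)$ holds in the minimal model of $\forall x\forall y\,\psi'$ (where $\psi'$ drops the contradiction clauses). The input map sends letter $s$ to the vector $q$ with $(q)_i=1$ iff the clause $x=y\land Q_s(x)\to R_i(x,y)$ occurs in $\psi$, and the transition $\delta(q_l,q_r)$ is read off the normalized computation clauses whose hypotheses are satisfied when $(R_1(x+1,y),\dots,R_m(x+1,y))=q_l$ and $(R_1(x,y-1),\dots,R_m(x,y-1))=q_r$. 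Accepting states are those whose $R_\bot$-bit equals~$0$.

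The main obstacle is justifying, by induction on $y-x$, that the trellis state on $[x,y]$ coincides with the bit-vector $(R_1(x,y),\dots,R_m(x,y))$ of the minimal model; this is exactly where the restrictive normal form of Lemma~\ref{lemmaincl} pays off, since every computation clause depends only on the two subintervals $[x+1,y]$ and $[x,y-1]$, matching the trellis transition rule one-to-one. A minor technical point is consolidating the several contradiction clauses into a single one, which is done verbatim as in step~1 of the normalization of Lemma~\ref{lemmaincl}, by propagating $R_\bot$ rightward along the diagonal.
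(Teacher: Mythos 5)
Your direct simulation is essentially the paper's own proof: its Lemma~\ref{lemme_Trellis_incl} builds exactly the input, computation and contradiction clauses you describe (with $R_s$ for the single-letter case, since $\Sigma\subset Q$), and its Lemma~\ref{lemme_incl_Trellis} uses the same state set $\{\sharp,\lambda\}\cup\{0,1\}^m$ with the transition function read off the normalized computation clauses, correctness resting on the one-to-one match between the dependency $\left((x,y-1),(x+1,y)\right)\to(x,y)$ and the trellis rule. Your alternative one-line derivation from Theorem~\ref{theorem_LinConj} together with Okhotin's identity $LinConj=\mathtt{Trellis}$ and closure of $\mathtt{Trellis}$ under complement is also sound, though the paper does not take that shortcut and proves the theorem by direct simulation.
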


\noindent The theorem is proved in two steps.
\smallskip
 \begin{lemma}\label{lemme_Trellis_incl}
$\mathtt{Trellis} \subseteq \mathtt{incl}$-$\mathtt{ESO}$-$\mathtt{HORN}$
\end{lemma}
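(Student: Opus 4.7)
The plan is to take a trellis automaton $\mathcal{A} = (Q, \Sigma, Q_{accept}, \delta)$ recognizing $L$ and directly simulate its computation by an inclusion Horn formula $\Phi = \exists \mathbf{R}\, \forall x \forall y\, \psi(x,y)$, using one binary computation predicate $R_q$ per state $q \in Q$. The intended meaning is that, in the minimal model expanding $\langle w \rangle$, the atom $R_q(x,y)$ holds exactly when the trellis reaches state $q$ at the site $(x,y)$ representing the subword $w_x \ldots w_y$. This matches precisely the inclusion inductive principle, since the trellis state for the interval $[x,y]$ depends on those for the strictly smaller intervals $[x,y-1]$ and $[x+1,y]$.

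First I would add an input clause $x = y \land Q_s(x) \to R_{q_s}(x,y)$ for every $s \in \Sigma$, where $q_s \in Q$ denotes the state assigned by $\mathcal{A}$ to the single letter $s$. Then, for each triple $(q_l, q_r, q) \in Q^3$ with $\delta(q_l, q_r) = q$, I would add the computation clause
\[
x < y \land R_{q_l}(x, y-1) \land R_{q_r}(x+1, y) \to R_q(x,y).
\]
Each hypothesis $R_{q_l}(x,y-1)$ and $R_{q_r}(x+1,y)$ has the shape required in item~\ref{interval} of Definition~\ref{def:incllogic}, with $x \le y - 1$ (equivalent to the explicit $x < y$) acting as the required inclusion condition. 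Finally, to encode rejection, for each non-accepting state $q \notin Q_{accept}$ I would add the contradiction clause $\mathtt{min}(x) \land \mathtt{max}(y) \land R_q(x,y) \to \bot$.

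For correctness, let $\psi'$ denote the conjunction of the input and computation clauses (without the contradiction clauses). A straightforward induction on $y - x \ge 0$, using determinism of $\delta$, shows that in the minimal model of $\forall x \forall y\, \psi'(x,y)$ expanding $\langle w \rangle$, the atom $R_q(x,y)$ is true (for $1 \le x \le y \le n$) if and only if $q$ is the state of $\mathcal{A}$ on the subword $w_x \ldots w_y$. The base $x = y$ is given by the input clauses; the step is immediate from the computation clauses. In particular, $\bot$ is derived iff $\langle 1, n \rangle \notin Q_{accept}$, so $\langle w \rangle \models \Phi \iff w \in L(\mathcal{A})$. Applying Lemma~\ref{lemmaincl} to $\Phi$ gives membership in $\mathtt{incl}$-$\mathtt{ESO}$-$\mathtt{HORN}$.

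The main obstacle I anticipate is essentially a bookkeeping one: verifying that the translation literally lands in the normalized inclusion logic (the three specific clause shapes of Lemma~\ref{lemmaincl}) rather than only in the broader Definition~\ref{def:incllogic}. The construction above already matches those three shapes, so no extra manipulation is required beyond possibly grouping the two hypothesis atoms into intermediate predicates if the normalized form forbids more than one computation atom per clause. As a convenient shortcut, one may alternatively invoke the closure of $\mathtt{Trellis}$ under complement together with Okhotin's equality $\mathtt{Trellis} = \mathtt{LinConj}$~\cite{Okhotin04rairo} and Lemma~\ref{lemma_LinConj_incl} to conclude without any direct simulation.
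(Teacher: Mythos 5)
Your construction coincides with the paper's own proof: one binary predicate $R_q$ per state, the input clause $x=y\land Q_s(x)\to R_s(x,y)$, the computation clause $x<y\land R_{q_l}(x,y-1)\land R_{q_r}(x+1,y)\to R_q(x,y)$ for $\delta(q_l,q_r)=q$, the contradiction clauses for non-accepting states, and correctness by induction on the interval length. The proposal is correct and essentially identical to the paper's argument (the appeal to the normalization lemma at the end is unnecessary, since the target class here is the unnormalized $\mathtt{incl}$-$\mathtt{ESO}$-$\mathtt{HORN}$).
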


\begin{proof} Let $\mathcal{A}$ be a trellis automaton that accepts a language $L\subseteq \Sigma^+$ and let be a word $w=w_1\dots w_n\in\Sigma^+$. Let us introduce the set of binary predicates $\mathbf{R}=\{R_q \mid q \in Q\}$ with intuitive meaning: $R_q(x,y)$ is true $\iff$ the final state of $\mathcal{A}$ acting on the subword $w_x \dots w_y$ is $q$. The following clauses describe the computation of $\mathcal{A}$.\\
  \smallskip
  \emph{Input clauses:}\\
  \smallskip
$\psi_{input} \coloneqq \bigwedge_{s\in \Sigma} (x=y \land Q_s(x)\to R_s(x,y))$, for $s\in\Sigma$.

  \smallskip
 \noindent\emph{Computation clauses:}\\
  \smallskip
$\psi_{compute} \coloneqq \bigwedge_{(q_l,q_r) \in Q^2} \left( x<y \land R_{q_l}(x,y-1) \land \right. \linebreak \left. R_{q_r}(x+1,y) \to R_q(x,y)\right)$ where $q=\delta(q_l,q_r)$.

\smallskip
\noindent\emph{Contradiction clauses:} \\
\smallskip
$\psi_{output}\coloneqq\bigwedge_{q\in Q\setminus Q_{accept}}(\mathtt{min}(x) \land \mathtt{max}(y) \land R_q(x,y) \to \bot)$\\

\noindent
Let $\psi' \coloneqq \psi_{input} \land \psi_{compute}$ and $\psi \coloneqq \psi' \land \psi_{output}$.

  \fbox{\begin{minipage}{0.9\linewidth}
      For each word $w=w_1\dots w_n \in \Sigma^+$ we have the following equivalences:

      \smallskip
  \begin{center}   
$\mathcal{A}$ accepts $w$ in time $n$\\

$\iff$\\
The conjunction $\bigwedge_{q\in Q\setminus Q_{accept}}R_q(1,n)$ is false 
in the minimal model $(\langle w \rangle, \mathbf{R})$ of $\forall x \forall y \psi'(x,y)$\\
$\iff$\\
$\langle w \rangle$ satisfies the formula  $\exists\mathbf{R}\forall x \forall y \psi(x,y)$.
\end{center}
\end{minipage}
}\\

This proves $L(\mathcal{A}) \in \mathtt{incl}$-$\mathtt{ESO}$-$\mathtt{HORN}$.
\end{proof}

  \begin{lemma}\label{lemme_incl_Trellis}
$\mathtt{incl}$-$\mathtt{ESO}$-$\mathtt{HORN}\subseteq\mathtt{Trellis}$
\end{lemma}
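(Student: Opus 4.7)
The plan is to combine the normalization result (Lemma~\ref{lemmaincl}) with a construction parallel to the one used for $\mathtt{RealTime_{OIA}}$ in Lemma~\ref{lemme_pred_CA}. Given a formula $\Phi \in \mathtt{incl}$-$\mathtt{ESO}$-$\mathtt{HORN}$, first replace it by an equivalent normalized formula $\Phi' = \exists \mathbf{R}\,\forall x\,\forall y\,\psi(x,y)$ with $\mathbf{R}=(R_1,\ldots,R_m)$ and $R_1=R_{\bot}$, whose clauses are only input clauses $x=y\land Q_s(x)\to R_i(x,y)$, the contradiction clause $\mathtt{min}(x)\land\mathtt{max}(y)\land R_{\bot}(x,y)\to\bot$, and computation clauses of the form $x<y\land\bigwedge_j S_j(x+1,y)\land\bigwedge_k T_k(x,y-1)\to R_i(x,y)$. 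This reduction is essential because in the normal form every hypothesis of a computation clause refers to a \emph{strictly shorter} interval, which makes the trellis simulation immediate.

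Next I would build a trellis automaton $\mathcal{A}=(Q,\Sigma,Q_{\mathrm{accept}},\delta)$ whose states $Q=\{0,1\}^m$ are bit vectors, the $i$-th bit intended to record the truth value of $R_i$ on the subword currently under consideration. The input transition sends each letter $s\in\Sigma$ to the vector $q_s$ with $(q_s)_i=1$ iff $\psi$ contains the input clause $x=y\land Q_s(x)\to R_i(x,y)$. The transition function is defined by $(\delta(q_l,q_r))_i = 1$ iff there exists a computation clause $x<y\land \bigwedge_j S_j(x+1,y)\land \bigwedge_k T_k(x,y-1)\to R_i(x,y)$ in $\psi$ with $(q_r)_{\mathrm{idx}(S_j)}=1$ for every $j$ and $(q_l)_{\mathrm{idx}(T_k)}=1$ for every $k$ (taking $(\delta(q_l,q_r))_i=1$ whenever some such clause has empty right/left part). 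Finally set $Q_{\mathrm{accept}} = \{q\in Q : (q)_1 = 0\}$, so that rejection corresponds to deducing $R_{\bot}$ at the top of the trellis, i.e.\ at position $(1,n)$, which is precisely where the contradiction clause applies.

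The correctness proof proceeds by induction on the length $y-x+1$ of the subword $w_x\ldots w_y$: the state produced by $\mathcal{A}$ at site $(x,y)$ of its trellis is exactly the bit vector encoding $\{\,i\in[1,m] : R_i(x,y)\text{ holds in the minimal model of }\forall x\forall y\,\psi'(x,y)\,\}$, where $\psi'$ is $\psi$ minus the contradiction clause. The base case $x=y$ is guaranteed by the input transition, and the inductive step works because the dependencies $(x+1,y)$ and $(x,y-1)$ in a computation clause are exactly the two children of $(x,y)$ in the trellis. Combining this with the interpretation of the accepting set yields $\mathcal{A}$ accepts $w$ iff $R_{\bot}(1,n)$ is false in the minimal model, iff $\langle w\rangle\models\Phi'$, iff $\langle w\rangle\models\Phi$, so $L(\Phi)\in\mathtt{Trellis}$.

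The only delicate point is verifying the induction step, which requires that every hypothesis of a computation clause strictly decreases the interval $[x,y]$; this is guaranteed precisely by step~7 of the normalization (elimination of $R(x,y)$ hypotheses) together with the fact that the normalized computation hypotheses are limited to $S(x+1,y)$ and $T(x,y-1)$. Without the normalization this step would fail, which is why the reduction to $\mathtt{normal}$-$\mathtt{incl}$-$\mathtt{ESO}$-$\mathtt{HORN}$ is the workhorse of the whole argument.
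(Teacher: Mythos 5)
Your proposal is correct and follows essentially the same route as the paper: normalize via Lemma~\ref{lemmaincl}, encode the predicates $R_1,\ldots,R_m$ as bit-vector states with $Q_{accept}$ determined by the $R_{\bot}$ bit, read the trellis transition $\delta(q_l,q_r)$ off the computation clauses (prefix $w_x\ldots w_{y-1}$ giving $T_k(x,y-1)$, suffix $w_{x+1}\ldots w_y$ giving $S_j(x+1,y)$), and prove correctness by induction on the subword length. The paper only sketches this (via the site bijection of Figure~\ref{fig_INC} and a pointer to the constructions of Sections~\ref{subsec:CA} and~\ref{subsec:IA}), so your write-up is simply a more explicit version of the intended argument.
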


\begin{proof}[Proof sketch]
  Let $\Phi$ be a formula of $\mathtt{normal}$-$\mathtt{incl}$-$\mathtt{ESO}$-$\mathtt{HORN}$, we establish a natural bijection between the sites $(x,y)$ of the domain of the formula and the sites $(c,t)$ of the space-time diagram of a OCA running in real-time (see Figure~\ref{fig_INC}). The transition function of this automaton is then deduced from the computation clauses of $\Phi$ as in sections~\ref{subsec:CA} and~\ref{subsec:IA}.

\begin{figure}[h]
\centering
\begin{tikzpicture}[scale=0.4,every node/.style={scale=0.7}]


\draw[dashed,ultra thin,gray](1,1) grid (5,5);
  
\foreach \y in {1,2,...,4}{
  \foreach \x in {\y,...,4}{
    \def\c{\y+1}
    \def\t{\x+1}
  \draw[-to,shorten >=-1pt,thick](\c,\t) to (\c-0.7,\t);
}}


\foreach \y in {1,2,...,4}{
  \foreach \x in {\y,...,4}{
    \def\c{\y}
    \def\t{\x}
  \draw[-to,shorten >=-1pt,thick](\c,\t) to (\c,\t+0.7);
}}

\foreach \y in {1,2,...,5}{
  \foreach \x in {1,...,\y}{
\filldraw[fill=white] (\x,\y) circle (0.3cm) ;
}}

\filldraw[fill=gray] (1,5) circle (0.3cm) ;
\foreach \x in {1,2,...,5}{
  \filldraw[white,fill=white] (\x,\x-1) circle (0.5cm);
  \draw[->](\x,\x-0.7) to (\x,\x-0.3);
\draw(\x,\x-1) node{$w_{\x}$};
}
\draw (3,-1) node{$\mathtt{incl}$-$\mathtt{ESO}$-$\mathtt{HORN}$};

\draw[implies-implies,double equal sign distance] (5.5,3) -- (7,3);
  

\draw[dashed,ultra thin,gray](13,1) grid (17,5);
\foreach \y in {1,2,...,4}{
  \foreach \i in {\y,...,4}{
    \def\x{\i+12}
  \draw[-to,shorten >=-1pt,thick](\x,\y) to (\x+0.75,\y+0.75);
}}


\foreach \y in {1,2,...,4}{
  \foreach \i in {\y,...,4}{
    \def\x{\i+12}
\draw[-to,shorten >=-1pt,thick](\x+1,\y) to (\x+1,\y+0.7);
}}

\foreach \y in {1,2,...,5}{
  \foreach \i in {\y,...,5}{
    \def\x{\i+12}
\filldraw[fill=white] (\x,\y) circle (0.3cm) ;
}}

\filldraw[fill=gray] (5+12,5) circle (0.3cm) ;
\foreach \i in {1,2,...,5}{
  \def\x{\i+12}
\draw[->](\x,0.4) to (\x,0.7);
\draw(\x,0) node{$w_{\i}$};
}
\draw(3+12,-1) node{$\mathtt{RealTime_{OCA}}$};

\draw[implies-implies,double equal sign distance] (11,3) -- (12.5,3);

\foreach \y in {1,2,...,4}{
  \foreach \x in {\y,...,4}{

\draw[-to,shorten >=-1pt,thick](6+0.5-0.5*\y+\x,\y) to (7+\x-0.5*\y-0.2,\y+0.75);
}}

\foreach \y in {1,2,...,4}{
\foreach \x in {\y,...,4}{

\draw[-to,shorten >=-1pt,thick](6+\x+1.5-0.5*\y,\y) to (7-0.5*\y+\x+0.2,\y+0.75);
}}

\foreach \y in {1,2,...,5}{
\foreach \x in {\y,...,5}{
\filldraw[fill=white] (6.5+\x-0.5*\y,\y) circle (0.3cm) ;
}}

\filldraw[fill=gray](6+3,5) circle (0.3cm) ;
\foreach \x in {1,2,...,5}{
\draw[->](6+\x,0.4) to (6+\x,0.7);
\draw(6+\x,0) node{$w_{\x}$};
}
\draw(6+3,-1) node{$\mathtt{Trellis}$};

\end{tikzpicture}
\caption{\label{fig_INC}The bijection between $\mathtt{incl}$-$\mathtt{ESO}$-$\mathtt{HORN}$ and $\mathtt{Trellis}$}
\end{figure}
\end{proof}

 \section{Concluding remarks}\label{sec:conc}

\noindent
It was known that the three complexity classes studied in this paper are the only distinct and natural complexity classes for minimal time, so-called real-time, of one-dimensional cellular automata. In various articles from the 1960s to 2000s, it has been established that each of those classes is robust, in particular with respect to the chosen neighborhood~\cite{Poupet05STACS}, and has several equivalent characterizations in other frameworks: e.g, $\mathtt{Trellis}$ is the class of linear conjunctive languages~\cite{Okhotin04rairo}.

In this paper, we have presented a unified view of these three real-time classes as part of descriptive complexity. We hope to have convinced the reader that the logics we have defined are useful tools to express problems in a natural way and to deduce automatically from any formula in such a logic an equivalent cellular automaton running in real-time. This is exemplified by languages $\mathtt{notBordered}$ and  $\mathtt{Palindrome}$, which can be succinctly defined in $\mathtt{pred}$-$\mathtt{ESO}$-$\mathtt{HORN}$ and $\mathtt{incl}$-$\mathtt{ESO}$-$\mathtt{HORN}$, respectively, and hence, belong to $\mathtt{RealTime}_{\mathtt{CA}}$ and $\mathtt{Trellis}$, respectively. Conversely, the problem $\mathtt{Prime}:=\{w\in \Sigma^* \;\vert\; \vert w \vert$ is a prime integer$\}$ belongs to $\mathtt{RealTime}_{\mathtt{CA}}$ (and to $\mathtt{RealTime}_{\mathtt{IA}})$ by Fischer's algorithm~\cite{Fischer65JACM, Delorme02SCA}; therefore, it belongs to $\mathtt{pred}$-$\mathtt{ESO}$-$\mathtt{HORN}$.

Further, we believe that the normalized versions of our three logics, identified to square grid circuits -- a natural concept -- offer a fresh view of the real-time complexity classes and an additional argument for their robustness.
In addition, we are thinking about broadening our logics by allowing a limited use of negation on computation atoms like in \emph{Stratified Datalog}~\cite{AbiteboulHV95}, for easier programming inside these logics and without changing their real-time complexity.

\medskip

\noindent
{\bf Acknowledgments:} 
This paper would not exist without the inspiration and guidance of Véronique Terrier. Her teaching, her deep insights into cellular automata, the references and advice she generously gave us, as well as her careful reading, were essential in designing and finalizing our concepts and results.
 
\bibliographystyle{IEEEtran}
\bibliography{IEEEabrv,horn_rt}

\begin{thebibliography}{10}
\providecommand{\url}[1]{#1}
\csname url@samestyle\endcsname
\providecommand{\newblock}{\relax}
\providecommand{\bibinfo}[2]{#2}
\providecommand{\BIBentrySTDinterwordspacing}{\spaceskip=0pt\relax}
\providecommand{\BIBentryALTinterwordstretchfactor}{4}
\providecommand{\BIBentryALTinterwordspacing}{\spaceskip=\fontdimen2\font plus
\BIBentryALTinterwordstretchfactor\fontdimen3\font minus
  \fontdimen4\font\relax}
\providecommand{\BIBforeignlanguage}[2]{{%
\expandafter\ifx\csname l@#1\endcsname\relax
\typeout{** WARNING: IEEEtran.bst: No hyphenation pattern has been}%
\typeout{** loaded for the language `#1'. Using the pattern for}%
\typeout{** the default language instead.}%
\else
\language=\csname l@#1\endcsname
\fi
#2}}
\providecommand{\BIBdecl}{\relax}
\BIBdecl

\bibitem{Fagin74SIAM}
R.~Fagin, ``Generalized first-order spectra and polynomial-time recognizable
  sets’,'' in \emph{Complexity of Computation, SIAM-AMS Proceedings}, 1974,
  pp. 43--73.

\bibitem{Libkin04FMT}
L.~Libkin, \emph{Elements of Finite Model Theory}, ser. Texts in Theoretical
  Computer Science. An {EATCS} Series.\hskip 1em plus 0.5em minus 0.4em\relax
  Springer, 2004.

\bibitem{Immerman99DesComp}
N.~Immerman, \emph{Descriptive complexity}.\hskip 1em plus 0.5em minus
  0.4em\relax Springer, 1999.

\bibitem{Gradel92TCS}
E.~Gr{\"{a}}del, ``Capturing complexity classes by fragments of second-order
  logic,'' \emph{Theoretical Computer Science}, vol. 101, no.~1, pp. 35--57,
  1992.

\bibitem{Gradel07EATCS}
E.~Gr{\"{a}}del, P.~G. Kolaitis, L.~Libkin, M.~Marx, J.~Spencer, M.~Y. Vardi,
  Y.~Venema, and S.~Weinstein, \emph{Finite Model Theory and Its
  Applications}.\hskip 1em plus 0.5em minus 0.4em\relax Springer, 2007.

\bibitem{Fischer65JACM}
P.~C. Fischer, ``Generation of primes by one-dimensional real-time iterative
  array,'' \emph{Journal of the ACM}, vol.~12, pp. 388--394, 1965.

\bibitem{Delorme02SCA}
M.~Delorme and J.~Mazoyer, ``Signals on cellular automata,'' in
  \emph{Collision-based Computing}, A.~Adamatzky, Ed.\hskip 1em plus 0.5em
  minus 0.4em\relax Springer, 2002, pp. 231--275.

\bibitem{BacqueyGO17ICALP}
N.~Bacquey, E.~Grandjean, and F.~Olive, ``{Definability by Horn Formulas and
  Linear Time on Cellular Automata},'' in \emph{ICALP 2017}, vol.~80, 2017, pp.
  99:1--99:14.

\bibitem{GrandjeanO16}
E.~Grandjean and F.~Olive, ``A logical approach to locality in pictures
  languages,'' \emph{Journal of Computer and System Science}, vol.~82, no.~6,
  pp. 959--1006, 2016.

\bibitem{AbiteboulHV95}
S.~Abiteboul, R.~Hull, and V.~Vianu, \emph{Foundations of Databases}.\hskip 1em
  plus 0.5em minus 0.4em\relax Addison-Wesley, 1995.

\bibitem{Cole69rtIA}
S.~N. Cole, ``Real-time computation by n-dimensional iterative arrays of
  finite-state machine,'' \emph{IEEE Transactions on Computing}, vol.~18, pp.
  349--365, 1969.

\bibitem{Smith72JCSS}
A.~R. Smith, ``Real-time language recognition by one-dimensional cellular
  automata,'' \emph{Journal of Computer and System Science}, vol.~6, pp.
  233--253, 1972.

\bibitem{Dyer80IC}
C.~R. Dyer, ``One-way bounded cellular automata,'' \emph{Information and
  Control}, vol.~44, no.~3, pp. 261--281, Mar. 1980.

\bibitem{Terrier99vNvsM}
V.~Terrier, ``Two-dimensional cellular automata recognizer,'' \emph{Theoretical
  Computer Science}, vol. 218, no.~2, pp. 325--346, May 1999.

\bibitem{Anael17DLT}
A.~Grandjean, ``Differences between 2d neighborhoods according to real time
  computation,'' in \emph{Developments in Language Theory}, 2017, pp. 198--209.

\bibitem{ChoffCulik84AI}
C.~Choffrut and K.~{\v{C}ul\'{\i}k II}, ``On real-time cellular automata and
  trellis automata,'' \emph{Acta Informatica}, vol.~21, no.~4, pp. 393--407,
  Nov. 1984.

\bibitem{IbarraJiang87oneWayCA}
O.~H. Ibarra and T.~Jiang, ``On one-way cellular arrays,'' \emph{SIAM Journal
  on Computing}, vol.~16, no.~6, pp. 1135--1154, Dec. 1987.

\bibitem{Poupet05STACS}
V.~Poupet, ``Cellular automata: Real-time equivalence between one-dimensional
  neighborhoods,'' in \emph{{STACS} 2005}, 2005, pp. 133--144.

\bibitem{CulikGS84SysTA}
K.~{\v{C}ul\'{\i}k II}, J.~Gruska, and A.~Salomaa, ``Systolic trellis automata.
  {I},'' \emph{International Journal Computer Mathematics}, vol.~15, no. 3-4,
  pp. 195--212, 1984.

\bibitem{Terrier96uvu}
V.~Terrier, ``Language not recognizable in real time by one-way cellular
  automata,'' \emph{Theoretical Computer Science}, vol. 156, no. 1--2, pp.
  281--287, Mar. 1996.

\bibitem{Okhotin04rairo}
A.~Okhotin, ``On the equivalence of linear conjunctive grammars and trellis
  automata,'' \emph{Theoretical Informatics and Applications}, vol.~38, no.~1,
  pp. 69--88, 2004.

\bibitem{Okhotin13CSR}
------, ``Conjunctive and boolean grammars: The true general case of the
  context-free grammars,'' \emph{Computer Science Review}, vol.~9, pp. 27--59,
  2013.

\bibitem{Terrier03b}
V.~Terrier, ``Characterization of real time iterative array by alternating
  device,'' \emph{Theoretical Computer Science}, vol. 290, no.~3, pp.
  2075--2084, 2003.

\end{thebibliography}

\bigskip
\begin{center}
{\bf APPENDICES}
\end{center}

\subsection*{Appendix A: Expressing natural problems in our logics}

\subsubsection*{Expressing problems $\mathtt{Palindrome}$ and $\mathtt{notPalindrome}$ in inclusion logic}

The problem $\mathtt{Palindrome}$ is expressed by the conjunction of the following clauses:

$x<y \land Q_s(x) \land Q_t(y) \to \mathtt{notPal}(x,y)$, for all $s,t\in\Sigma$ with $s\neq t$;

$x<y \land \mathtt{notPal}(x+1,y-1) \to \mathtt{notPal}(x,y)$;

$x\le y \land \mathtt{min}(x)  \land \mathtt{max}(y) \land \mathtt{notPal}(x,y)\to \bot$.

\medskip
\noindent
The problem $\mathtt{notPalindrome}$ is expressed by the conjunction of the following clauses:
$x=y \to \mathtt{Pal}(x,y)$, 

\noindent
$x\le y \land \mathtt{Successor}(x,y) \land Q_s(x) \land Q_s(y) \to \mathtt{Pal}(x,y)$,

\noindent
$x< y \land \mathtt{Pal}(x+1,y-1) \land Q_s(x) \land Q_s(y) \to \mathtt{Pal}(x,y)$, for all $s\in\Sigma$,
and $x\le y \land \mathtt{Pal}(x,y) \to \bot$.

\medskip
\noindent
Here, $\mathtt{Successor}(x,y)$ intuitively means $x+1=y$ and is defined by the following clauses: 
$x=y \to \mathtt{Equal}(x,y)$
and $x<y \land \mathtt{Equal}(x+1,y)\to \mathtt{Successor}(x,y)$.

\bigskip
\subsection*{Appendix B: Complements of the proofs of Lemmas~\ref{lemmapred} and~\ref{lemmaincl}}

\noindent
{\bf Step~10 of Lemma~\ref{lemmapred}: Elimination of atoms $R(x,y)$ as hypotheses:} 
The first idea is to group together in each computation clause the hypothesis atoms of the form $R(x,y)$ and the conclusion of the clause. Accordingly, the formula obtained $\Phi$ can be rewritten in the form\\
$\Phi:= \exists \mathbf{R} \forall x \forall y [ \bigwedge_{i} C_i (x,y) \land \bigwedge_{i\in[1,k]} (\alpha_i(x,y)\to \theta_i(x,y)) ]$,\\
where the $C_i$'s are the input clauses and the contradiction clause and each computation clause is written in the form $\alpha_i(x,y) \to \theta_i(x,y)$ where
$\alpha_i(x,y)$ is a conjunction of formulas of the only forms $R(x-1,y)\land \lnot\mathtt{min}(x)$, $R(x,y-1)\land \lnot\mathtt{min}(y)$, \emph{but not} $R(x,y)$,
and $\theta_i(x,y)$ is a Horn clause \emph{all} the atoms of which are of the form $R(x,y)$.

We number $R_1,\ldots,R_m$ the computation predicates of $\mathbf{R}$.
To each subset $J\subseteq [1,k]$ of the family of implications 
$(\alpha_i(x,y)\to \theta_i(x,y))_{i\in[1,k]}$ let us associate the set
\begin{center}
$K_J := \{h \in [1,m] \; \vert \; \bigwedge_{i\in J} \theta_i (x,y) \to R_h(x,y)$ is a tautology$\}$.
\end{center}

\noindent
Note that the notion of \emph{tautology} used in the definition of $K_J$ is purely ``propositional'' because all the atoms involved are of the form $R_i(x,y)$, i.e., refer to the same pair of variables ~$(x,y)$. Also, note that the function $J\mapsto K_J$ is \emph{monotonous}: for $J'\subseteq J$, we have $K_{J'}\subseteq K_J$ because $\bigwedge_{i\in J'} \theta_i (x,y) \to R_h(x,y)$ implies $\bigwedge_{i\in J} \theta_i (x,y) \to R_h(x,y)$.

\medskip
Clearly, it is enough to prove the following claim:

\begin{claim}\label{claim:horn} 
The formula $\Phi$ is equivalent to the \emph{normalized} formula
\begin{eqnarray*}
\lefteqn{\Phi':=\exists \mathbf{R} \forall x \forall y \; [\;\bigwedge_{i} C_i (x,y)}\\
&&\land  \bigwedge_{J\subseteq [1,k]} \bigwedge_{h\in K_J} 
(\bigwedge_{i\in J} \alpha_i(x,y)\to R_h(x,y)) \;].
\end{eqnarray*}
\end{claim}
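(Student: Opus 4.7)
The plan is to establish $\Phi\equiv\Phi'$ by proving the two directions separately, exploiting the fact that both formulas share the same input clauses $C_i$ and the same contradiction clauses: only the positive (computation) parts differ, and both positive parts are definite Horn theories that admit a unique minimal model computable by bottom-up saturation. My overall strategy is to show that these two minimal models coincide point by point at every $(x,y)$, so the identical contradiction clauses fire in the same instances, giving equivalence.

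The direction $\Phi\Rightarrow\Phi'$ is essentially a one-liner. Take any model of $\Phi$ and fix $(x,y)$, a subset $J\subseteq[1,k]$ and an index $h\in K_J$; if $\bigwedge_{i\in J}\alpha_i(x,y)$ holds, then each $\Phi$-clause $\alpha_i(x,y)\to\theta_i(x,y)$ yields $\theta_i(x,y)$ for every $i\in J$, and the defining tautology $\bigwedge_{i\in J}\theta_i\to R_h$ in the propositional variables $R_1(x,y),\dots,R_m(x,y)$ immediately delivers $R_h(x,y)$, as required by the corresponding $\Phi'$-clause.

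For the reverse direction I would build the minimal expansion $\mathbf{R}^{\star}$ of $\langle w\rangle$ satisfying the non-contradiction part of $\Phi'$, by induction on $x+y$, and verify that it also satisfies every $\Phi$-clause $\alpha_i(x,y)\to\theta_i(x,y)$. Set $I=I(x,y)=\{i\in[1,k]\mid\alpha_i(x,y)\text{ holds in }\mathbf{R}^{\star}\}$; this set depends only on atoms at the predecessors $(x-1,y)$ and $(x,y-1)$ and on position-dependent $\mathtt{min}$-literals, all already fixed by the induction. The $\Phi'$-clauses active at $(x,y)$ force precisely the atoms $\{R_h(x,y)\mid\exists J\subseteq I,\,h\in K_J\}$, which by the given monotonicity $J'\subseteq J\Rightarrow K_{J'}\subseteq K_J$ collapses to $K_I$ (take $J=I$). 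By the classical correspondence for propositional Horn theories, $K_I$ coincides with the set of atoms true in the unique minimal model of the propositional Horn theory $\bigwedge_{i\in I}\theta_i(x,y)$; therefore every $\theta_i(x,y)$ with $i\in I$ is satisfied in $\mathbf{R}^{\star}$, while the clauses $\alpha_i\to\theta_i$ with $i\notin I$ are vacuously satisfied. The same induction, run instead from $\Phi$, yields the same atoms at $(x,y)$, so the two minimal models agree.

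The main obstacle worth spelling out carefully is this inductive identification step. It rests on one classical fact — for a propositional Horn theory $T$ without empty heads, the semantic entailment $T\models R$, the tautology $T\to R$, and membership of $R$ in the unique minimal model of $T$ all coincide — combined with the given monotonicity of $J\mapsto K_J$. Once this is granted, the induction on $x+y$ is mechanical, and the shared input and contradiction clauses handle the boundary cases and the rejection criterion identically in both formulas, which finishes the proof.
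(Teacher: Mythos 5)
Your proof is correct and takes essentially the same route as the paper's: the forward direction via the tautology defining $K_J$, and the converse by induction on $x+y$ over the minimal model, using the monotonicity of $J\mapsto K_J$ to collapse the atoms forced at $(x,y)$ to $K_{I(x,y)}$ and the folklore propositional Horn property (Lemma~\ref{lemma:folklore}) to recover each $\theta_i$. The only cosmetic difference is that you state the conclusion as an identity of the two minimal models, whereas the paper shows that the minimal model of $\varphi'$ satisfies the relativized formulas $\varphi_t$.
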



\medskip
\noindent
 \emph{Proof of the implication} $\Phi \Rightarrow \Phi'$: It is enough to prove the implication  
\[
[\bigwedge_{i\in[1,k]}(\alpha_i(x,y) \to \theta_i(x,y))]\to [\bigwedge_{i\in J} \alpha_i(x,y)\to \bigwedge_{h\in K_J}R_h(x,y)]
\]
for all 
set $J\subseteq [1,k]$. 
The implication to be proved can be equivalently written:
\[
[\bigwedge_{i\in J} \alpha_i(x,y) \land \bigwedge_{i\in[1,k]} (\alpha_i(x,y) \to \theta_i(x,y))]
\to \bigwedge_{h\in K_J} R_h(x,y).
\]
This implication is a straightforward consequence of the two following facts: 
The sub-formula between brackets above implies the conjunction $\bigwedge_{i\in J} \theta_i(x,y)$. 
As the implication $\bigwedge_{i\in J} \theta_i (x,y) \to \bigwedge_{h\in K_J} R_h(x,y)$ 
is a tautology (by definition of $K_J$), the implication to be proved is a tautology too. 

\medskip
The converse implication $\Phi' \Rightarrow \Phi$ is more difficult to prove. It uses a folklore property of propositional Horn formulas easy to be proved:

\begin{lemma}[Horn property: folklore]\label{lemma:folklore}
Let $F$ be a \emph{strict Horn formula} of propositional calculus, that is a conjunction of clauses of the form $p_1\land\ldots\land p_k\to p_0$ where $k\ge 0$ and the $p_i$'s are propositional variables. Let $F'$ be the conjunction of propositional variables $q$ such that the implication $F\to q$ is a tautology. $F$ has the same minimal model\footnote{For example,
for $F:=p_1\land p_3\land (p_1\land p_3\to p_5) \land (p_1\land p_2 \to p_4)$, we have $F':=p_1\land p_3\land p_5$ which has the same minimal model $I$ as $F$; this model is given by $I(p_1)=I(p_3)=I(p_5)=1$ and $I(p_2)=I(p_4)=0$.} as $F'$.
\end{lemma}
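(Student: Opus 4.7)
The plan is to invoke the classical \emph{minimal model theorem} for propositional Horn formulas and to show that the variables conjoined in $F'$ are exactly those true in the minimal model of $F$. I would begin by recalling the fundamental closure property: if $I_1$ and $I_2$ are two models of the strict Horn formula $F$ (viewed as sets of true variables), then $I_1 \cap I_2$ is again a model, since any clause $p_1 \land \ldots \land p_k \to p_0$ satisfied by both $I_j$'s remains satisfied by their intersection --- if all $p_i$ ($i \geq 1$) lie in $I_1 \cap I_2$, they lie in both $I_j$, hence $p_0$ lies in both, hence in the intersection. It is crucial here that the conclusion $p_0$ is a genuine propositional variable, not $\bot$; this is what ``strict'' Horn means and what guarantees the existence of models at all. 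Consequently $F$ admits a unique minimal model $I$, namely the intersection of all its models.

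Next, I would establish the double inclusion characterizing $I$ as exactly $\{q \mid F \to q \text{ is a tautology}\}$. On the one hand, if $q \in I$, then every model $J$ of $F$ satisfies $I \subseteq J$ by minimality of $I$, hence $q \in J$, so $F \to q$ is a tautology and $q$ appears as a conjunct of $F'$. On the other hand, if $F \to q$ is a tautology, then since $I$ is itself a model of $F$ we get $I \models q$, so $q \in I$. Thus the set of variables occurring in $F'$ coincides set-theoretically with $I$.

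Finally, $F'$ being itself a pure conjunction of propositional variables, its unique minimal model is the assignment making true exactly the variables it contains and false all others; by the previous step this minimal model is precisely $I$, the minimal model of $F$. I do not expect a genuine obstacle, the lemma being folklore: the argument is simply a careful tracking of the triple equivalence ``$q$ belongs to every model of $F$'' $\Leftrightarrow$ ``$F \to q$ is a tautology'' $\Leftrightarrow$ ``$q$ belongs to the minimal model of $F$'', whose nontrivial step is precisely the intersection-closure property of strict Horn formulas.
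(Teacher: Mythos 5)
Your proof is correct and complete. The paper itself offers no proof of this lemma --- it is stated explicitly as folklore and left unproved --- so there is no authorial argument to compare against; your argument (closure of the set of models of a strict Horn formula under intersection, hence existence of a least model $I$; then the double inclusion showing $I=\{q \mid F\to q \text{ is a tautology}\}$; then observing that this set is exactly the minimal model of the conjunction $F'$) is precisely the standard proof one would supply, and every step, including the remark that strictness is what guarantees the all-true assignment is a model, is sound.
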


\smallskip
\noindent
\emph{Proof of the implication} $\Phi' \Rightarrow \Phi$: 
Let~$\langle w \rangle$ be a model of $\Phi'$ and let $(\langle w \rangle,\mathbf{R})$ be the minimal model of the Horn formula
\begin{eqnarray*}
\lefteqn{\varphi':=\forall x \forall y \; [\;\bigwedge_{i} C_i (x,y)}\\
&&\land  \bigwedge_{J\subseteq [1,k]} \bigwedge_{h\in K_J} 
(\bigwedge_{i\in J} \alpha_i(x,y)\to R_h(x,y))\;].
\end{eqnarray*}

\noindent
It is enough to show that $(\langle w \rangle,\mathbf{R})$ also satisfies the formula
\begin{eqnarray*}
\lefteqn{\varphi:=\forall x \forall y \; [\;\bigwedge_{i} C_i (x,y)}\\ 
&& \land \bigwedge_{i\in[1,k]} (\alpha_i(x,y)\to \theta_i(x,y))\;].
\end{eqnarray*}

\noindent
As each $\alpha_i$ is a conjunction of formulas of the form\linebreak
$R(x-1,y)\land\lnot \mathtt{min}(x)$, or $R(x,y-1)\land\lnot \mathtt{min}(y)$, 
we make an induction on the domain $\{(a,b)\in[1,n]^2\;\vert\; a+b\le t\}$, for $t\in[1,2n]$. 
More precisely, we are going to prove, by recurrence on the integer $t\in[1,2n]$, that the minimal model $(\langle w \rangle,\mathbf{R})$ of $\varphi'$ satisfies the ``relativized''  formula $\varphi_t$ of the formula $\varphi$ defined by
\begin{eqnarray*}
\lefteqn{\varphi_t:=\forall x \forall y \;
[\;x+y\le t \to }\\ 
&&[\;\bigwedge_{i} C_i (x,y) \land \bigwedge_{i\in[1,k]} (\alpha_i(x,y)\to \theta_i(x,y))\;]
\;]
\end{eqnarray*}
As the hypothesis $x+y\le 2n$ holds for all $x,y$ in the domain $[1,n]$, $\varphi_{2n}$ is equivalent to $\varphi$ on the structure $(\langle w \rangle,\mathbf{R})$.

\medskip
\emph{Basis case:} 
For $t=1$ the set $\{(a,b)\in[1,n]^2\;\vert\; a+b\le t\}$ is empty so that the ``relativized''  formula $\varphi_1$ is trivially true in the minimal model $(\langle w \rangle,\mathbf{R})$ of $\varphi'$.

\medskip
\emph{Recurrence step:} Suppose $(\langle w \rangle,\mathbf{R})\models\varphi_{t-1}$, for an integer $t\in[2, 2n]$.
It is enough to show that, for each couple $(a,b)\in [1,n]^2$ such that $a+b=t$, we have 
$ (\langle w \rangle,\mathbf{R})\models \bigwedge_{i\in[1,k]} (\alpha_i(a,b)\to \theta_i(a,b))$. Let $J_{a,b}$ be the set of indices $i\in [1,k]$ such that the couple $(a,b)$ satisfies~$\alpha_i$:
\[
J_{a,b}:=\{ i\in [1,k] \;\vert\;  (\langle w \rangle,\mathbf{R})\models \alpha_i(a,b)\}. 
\]
Recall that each $\alpha_i(a,b)$ is a (possibly empty) conjunction of atoms $R(a',b')$ with $(a',b')=(a-1,b)$ or $(a',b')=(a,b-1)$, therefore such that $a'+b'=t-1$. 
Let any set $J\subseteq [1,k]$. Let us examine the two possible cases:

\medskip
1) $J\subseteq J_{a,b}$: then the conjunction $\bigwedge_{i\in J}\alpha_i(a,b)$ holds in $(\langle w \rangle,\mathbf{R})$; hence, in $(\langle w \rangle,\mathbf{R})$, the conjunction 
$\bigwedge_{h\in K_J} (\bigwedge_{i\in J} \alpha_i(a,b)\to R_h(a,b))$ is equivalent to
$\bigwedge_{h\in K_J} R_h(a,b)$;

\smallskip
2) $J\setminus J_{a,b}\neq \emptyset$: then the conjunction $\bigwedge_{i\in J}\alpha_i(a,b)$ is false in $(\langle w \rangle,\mathbf{R})$; 
hence, the conjunction $\bigwedge_{h\in K_J} (\bigwedge_{i\in J} \alpha_i(a,b)\to R_h(a,b))$ holds in~$(\langle w \rangle,\mathbf{R})$.

\medskip
\noindent
From (1) and (2), we deduce that in $(\langle w \rangle,\mathbf{R})$ the conjunction
$
\bigwedge_{J\subseteq [1,k]} \bigwedge_{h\in K_J} (\bigwedge_{i\in J} \alpha_i(a,b)\to R_h(a,b))
$
is equivalent to the conjunction 
$\bigwedge_{J\subseteq J_{a,b}} \bigwedge_{h\in K_J}  R_h(a,b)$, which can be simplified as 
$\bigwedge_{h\in K_{J_{a,b}}} R_h(a,b)$
because $J\subseteq J_{a,b}$ implies $K_J\subseteq K_{J_{a,b}}$.
Consequently, for all $h\in[1,m]$, the minimal model $(\langle w \rangle,\mathbf{R})$ of the Horn formula
$\varphi'$ satisfies the atom $R_h(a,b)$ \emph{iff} $h$ belongs to $K_{J_{a,b}}$. 
By definition,
\begin{center}
$K_{J_{a,b}} := \{h \in [1,m] \; \vert \; \bigwedge_{i\in J_{a,b}} \theta_i (x,y) \to R_h(x,y)$ is a tautology$\}$
\end{center}
or, equivalently,
\begin{center}
$K_{J_{a,b}} := \{h \in [1,m] \; \vert \; \bigwedge_{i\in J_{a,b}} \theta_i (a,b) \to R_h(a,b)$ is a tautology$\}$.
\end{center}
As a consequence of Lemma~\ref{lemma:folklore}, the two conjunctions $\bigwedge_{i\in J_{a,b}} \theta_i (a,b)$
and $\bigwedge_{h\in K_{J_{a,b} }}R_h(a,b)$ have the same minimal model, which is also the restriction of the minimal model $(\langle w \rangle,\mathbf{R})$ of $\varphi'$ to the set of atoms $R_h(a,b)$, for $h\in[1,m]$.
Therefore, if $i\in J_{a,b}$, then $(\langle w \rangle,\mathbf{R})\models \theta_i (a,b)$.
If  $i\in [1,k] \setminus J_{a,b}$, then we have
$(\langle w \rangle,\mathbf{R})\models \lnot \alpha_i(a,b)$, by definition of $J_{a,b}$.
Therefore, for all $i\in [1,k]$, we get 
$(\langle w \rangle,\mathbf{R})\models \lnot \alpha_i(a,b)\lor \theta_i (a,b)$.
In other words, for all $(a,b)$ such that $a+b=t$:
\[
(\langle w \rangle,\mathbf{R})\models \bigwedge_{i\in[1,k]}(\alpha_i(a,b) \to \theta_i (a,b))
\]
and then $(\langle w \rangle,\mathbf{R})\models \varphi_t$.

This concludes the inductive proof that $(\langle w \rangle,\mathbf{R})\models \varphi_t$, for all $t\in [1,2n]$, and then $\langle w \rangle \models \Phi$. This proves the converse implication $\Phi' \Rightarrow \Phi$. Claim~\ref{claim:horn} is demonstrated.
$\square$

\medskip
\noindent
{\bf General case of Lemma~\ref{lemmapred}:} Steps~1-6 are easy to adapt in the general case where the initial formula may contain hypotheses of the form $R(y-b,x-a)$. The new points are the following:
Step~3 restricts the computation atoms to four forms:  $R(x,y)$, $R(y,x)$, $R(x-1,y)$ and $R(x,y-1)$; the key point is the adaptation of step~7 (folding the domain) so that it eliminates the atoms of the form $R(y,x)$. 
Without loss of generality, assume that each computation clause is of one of the following forms:

(a) $S(x-1,y)\land  \lnot \mathtt{min}(x) \to R(x,y)$;

(b) $S(x,y-1)\land  \lnot \mathtt{min}(y) \to R(x,y)$;

(c) $S(x,y)\land T(x,y) \to R(x,y)$;
(d) $S(y,x)\to R(x,y)$.\\
Here again, this is obtained by ``decomposing'' each computation clause into an ``equivalent'' conjunction of clauses using new intermediate predicates. For instance, the computation clause
$R_1(x-1,y)\land  \lnot \mathtt{min}(x)\land R_2(x,y-1)\linebreak \land \lnot \mathtt{min}(y)\land R_3(y,x)$
$\to R_4(x,y)$
is ``equivalent'' to the conjunction of the following clauses using the new predicates $R_5,R_6,R_7,R_8$:
$R_1(x-1,y)\land  \lnot \mathtt{min}(x)\to R_5(x,y)$;\\
$R_2(x,y-1)\land  \lnot \mathtt{min}(y)\to R_6(x,y)$;
$R_5(x,y)\land R_6(x,y)\to R_7(x,y)$;
$R_3(y,x)\to R_8(x,y)$;
$R_7(x,y)\land R_8(x,y) \to R_4(x,y)$.

The folding of clauses (a-c) is not modified. 
Let us describe how to fold the (new) clauses (d): $S(y,x)\to R(x,y)$.\\
Obviously, such a clause is equivalent to the conjunction of the two clauses
(i) $x\le y \land S(y,x)\to R(x,y)$ and \linebreak
(ii) $y\le x  \land S(y,x)\to R(x,y)$.
The equivalent ``folded''  form of clause (i) is $x\le y \land S^{\mathtt{inv}}(x,y)\to R(x,y)$.
The clause (ii) is equivalent to the clause
$x\le y \land S(x,y) \to R(y,x)$
the equivalent ``folded''  form of which is
$x\le y \land S(x,y) \to R^{\mathtt{inv}}(x,y)$.
Finally, steps 8-10 are not modified.

\bigskip
\noindent
{\bf Step 6 of Lemma~\ref{lemmaincl}: Elimination of atoms $Q_s(x-a)$, $Q_s(y+b)$, for $a,b>0$:}
This step is quite similar to step 5. For each $R\in\mathbf{R}$, we introduce new predicates:

\smallskip
\noindent
$R_{\hyp s_1,\dots,s_l,t_1, \dots, t_m}^{x-a_1,\dots,x-a_l,y+b_1,\dots,y+b_m}$ with $l,m\ge 0$, the $s_i,t_j\in \Sigma$, 

\smallskip
\noindent
$0\le a_1<a_2\ldots<a_l\le A$ and $0\le b_1<b_2\ldots<b_m\le~B$, 
where $A$ (resp.~$B$) is the maximal $a$ in atoms $Q_s(x-a)$ 
(resp. maximal $b$ in atoms $Q_s(y+b)$).
Their intuitive meaning is as follows:

\smallskip
\noindent
$R_{\hyp s_1,\dots,s_l,t_1, \dots, t_m}^{x-a_1,\dots,x-a_l,y+b_1,\dots,y+b_m}(x,y) \iff$

\smallskip
$[\;[\bigwedge_{i=1,\ldots,l}(Q_{s_i}(x-a_i) \land x>a_i)$

$\land \bigwedge_{j=1,\ldots,m}(Q_{t_j}(y+b_j)\land y\le n-b_j)] \to R(x,y)\;]$.

\smallskip
 \emph{Transforming the initialization clauses:} Each initialization clause 
 $x=y \land Q_s(x-a) \land x>a \to R(x,y)$, $a\ge1$, 
 (resp. $x=y \land Q_s(y+b) \land y \leq n-b \to R(x,y)$, $b\ge1$)
 is transformed into the clause $x=y \land R_{\hyp, s}^{x-a}  \to R(x,y)$
 (resp.  $x=y \land R_{\hyp, s}^{y+b} \to R(x,y)$).
 
 \smallskip
 \emph{Transforming the computation clauses:} To each clause (a) $x<y \land S(x+1,y) \to R(x,y)$ add the following clauses justified by the identity $x+1-a_i=x-(a_i-1)$:

\smallskip 
\noindent
$x<y \land S_{\hyp s_1,\dots,s_l,t_1, \dots, t_m}^{x-a_1,\dots,x-a_l,y+b_1,\dots,y+b_m}(x+1,y)$

\smallskip 
$\to R_{\hyp s_1,\dots,s_l,t_1, \dots, t_m}^{x-(a_1-1),\dots,x-(a_l-1),y+b_1,\dots,y+b_m}(x,y)$.

\smallskip 
\noindent
Similarly, to each clause (b) $x<y \land S(x,y-1) \to R(x,y)$ add the clauses

\noindent
$x<y \land S_{\hyp s_1,\dots,s_l,t_1, \dots, t_m}^{x-a_1,\dots,x-a_l,y+b_1,\dots,y+b_m}(x,y-1)$

\smallskip 
$\to R_{\hyp s_1,\dots,s_l,t_1, \dots, t_m}^{x-a_1,\dots,x-a_l,y+(b_1-1),\dots,y+(b_m-1)}(x,y)$.

\smallskip 
\noindent
Moreover, add for $a_1=0$ and each $s_1\in\Sigma$, the following ``verification'' clauses, which intuitively delete the hypothesis
$Q_{s_1}(x)$ after verifying that it is satisfied because of the equivalence $W_{s_1}^x(x,y)\iff Q_{s_1}(x)$:

\smallskip 
\noindent
$x<y \land S_{\hyp s_1,s_2,\dots,s_l,t_1, \dots, t_m}^{x, x-a_2,\dots,x-a_l,y+b_1,\dots,y+b_m}(x,y) \land W_{s_1}^x(x,y)$

\smallskip 
$\to R_{\hyp s_2,\dots,s_l,t_1, \dots, t_m}^{x-a_2,\dots,x-a_l,y+b_1,\dots,y+b_m}(x,y)$.

\smallskip 
\noindent
Similarly, add for $b_1=0$ and each $t_1\in\Sigma$, the ``verification'' clauses (justified by $W_{t_1}^y(x,y) \iff Q_{t_1}(y)$) :

\smallskip 
\noindent
$x<y \land S_{\hyp s_1,\dots,s_l,t_1, t_2,\dots, t_m}^{x-a_1,\dots,x-a_l,y, y+b_2,\dots,y+b_m}(x,y) \land W_{t_1}^y(x,y)$

\smallskip 
$\to R_{\hyp s_1,\dots,s_l,t_2\dots, t_m}^{x-a_1,\dots,x-a_l,y+b_2,\dots,y+b_m}(x,y)$.

\smallskip 
\noindent
For each clause (c) $x \preceq y \land S(x,y) \land T(x,y) \to R(x,y)$, where~$\preceq\;\in \{<,=\}$, add similar clauses that \emph{cumulate} the hypotheses provided they are \emph{compatible}: for example, the clause 

\smallskip 
\noindent
$x\preceq y  \land S_{s_1,s_2,t_1}^{x-1,x-3,y+2}(x,y) \land T_{s_1,t_1,t_2}^{x-1,y+2,y+4}(x,y)$
 
 \smallskip 
  $\to R_{s_1,s_2,t_1,t_2}^{x-1,x-3,y+2,y+4}(x,y).$

\end{document}